\def\eg{{e.g.}}
\def\ie{{i.e.}}
\newcommand{\edge}[2]{#1 \xrightarrow{} #2}
\newcommand{\ledge}[3]{#1 \xrightarrow{#2} #3}
\newcommand{\redge}[3]{#1 \xleftarrow{#2} #3}
\newcommand{\prov}{\mathsf{Prov}}
\newcommand{\minl}{\mathsf{IF}}
\newcommand{\rpq}{\mathsf{RPQ}}
\newcommand{\pqe}[1]{\textsc{PQE}_{#1}}
\newcommand{\mods}{\mathsf{Mods}}
\newcommand{\homo}{\rightsquigarrow}
\newcommand{\poly}{\mathsf{poly}\xspace}
\newcommand{\NP}{\mathsf{NP}\xspace}
\newcommand{\RP}{\mathsf{RP}\xspace}
\newcommand{\PTIME}{\mathsf{PTIME}\xspace}
\newcommand{\all}{\ensuremath{\mathsf{All}}\xspace}
\newcommand{\dagi}{\ensuremath{\mathsf{DAG}}\xspace}
\newcommand{\Bdagi}{\ensuremath{\bm{\mathsf{DAG}}}\xspace}
\newcommand{\phom}{\ensuremath{\mathsf{PHom}}\xspace}
\newcommand{\phoml}{\ensuremath{\mathsf{PHom_{L}}}\xspace}
\newcommand{\phomul}{\ensuremath{\mathsf{PHom_{\centernot L}}}\xspace}
\newcommand{\provl}{\prov}
\newcommand{\owp}{\ensuremath{\mathsf{1WP}}\xspace}
\newcommand{\Bowp}{\ensuremath{\bm{\mathsf{1WP}}}\xspace}
\newcommand{\twp}{\ensuremath{\mathsf{2WP}}\xspace}
\newcommand{\Btwp}{\ensuremath{\bm{\mathsf{2WP}}}\xspace}
\newcommand{\pt}{\ensuremath{\mathsf{PT}}\xspace}
\newcommand{\Bpt}{\ensuremath{\bm{\mathsf{PT}}}\xspace}
\newcommand{\dwt}{\ensuremath{\mathsf{DWT}}\xspace}
\newcommand{\Bdwt}{\ensuremath{\bm{\mathsf{DWT}}}\xspace}
\newcommand{\Gmc}{\ensuremath{\mathcal{G}}\xspace}
\newcommand{\Hmc}{\ensuremath{\mathcal{H}}\xspace}
\newcommand{\countppdnf}{\ensuremath{\text{\#}\textsc{PP2DNF}}\xspace}
\newcommand{\mc}{\mathsf{MC}}
\newcommand{\wmc}{\mathsf{WMC}}
\newcommand\crule[3][black]{\textcolor{#1}{\rule{#2}{#3}}}
\newcommand{\gsize}[1]{||#1||}
\newcommand{\SPhardness}{\mbox{$\#\mathsf{P}$-hardness}\xspace}
\newcommand{\SPhard}{\mbox{$\#\mathsf{P}$-hard}\xspace}
\newcommand{\SP}{\mbox{$\#\mathsf{P}$}}
\newcommand{\NPhard}{\mbox{$\NP$-hard}\xspace}
\theoremstyle{plain}
\newtheorem{result}[thm]{Result}
\begin{document}

\title[Approximating Queries on Probabilistic Graphs]{Approximating Queries on 
Probabilistic Graphs\rsuper*}
\titlecomment{{\lsuper*}This is an extended version of an ICDT'24 conference
paper~\cite{AvBM24}. It fixes the proof of Proposition~4.1 and other minor
errors, and includes new material 
on RPQs from Gaspard's bachelor thesis~\cite{G24}.}
\author[A.~Amarilli]{Antoine Amarilli\lmcsorcid{0000-0002-7977-4441}}[a]
\author[T.~van Bremen]{Timothy van Bremen\lmcsorcid{0009-0004-0538-3044}}[b]
\author[O.~Gaspard]{Octave Gaspard\lmcsorcid{0009-0000-7375-3538}}[c]
\author[K.~S.~Meel]{Kuldeep S.~Meel\lmcsorcid{0000-0001-9423-5270}}[d]

\address{Univ.\ Lille, Inria Lille, CNRS, Centrale Lille, UMR
9189 CRIStAL, F-59000 Lille, France;\newline LTCI, Télécom Paris, Institut polytechnique de Paris}	%
\email{antoine.a.amarilli@inria.fr}  %

\address{Nanyang Technological University}	%
\email{timothy.vanbremen@ntu.edu.sg}  %

\address{\'{E}cole Polytechnique}	%
\email{octave.gaspard@polytechnique.edu}  %

\address{University of Toronto}	%
\email{meel@cs.toronto.edu}  %

\thanks{This project was supported in part by the National Research Foundation Singapore under its NRF Fellowship programme [NRF-NRFFAI1-2019-0004] and Campus for Research Excellence and Technological Enterprise (CREATE) programme, as well as the Ministry of Education Singapore Tier 1 and 2 grants R-252-000-B59-114 and MOE-T2EP20121-0011.
	Amarilli
	was partially supported by the ANR project EQUUS ANR-19-CE48-0019, by the 
	Deutsche Forschungsgemeinschaft (DFG, German Research Foundation)~– 431183758,
	and by the ANR project ANR-18-CE23-0003-02 (“CQFD”).
	This work was done in part while Amarilli was visiting the Simons Institute
	for the Theory of Computing.}

\begin{abstract}
	\noindent Query evaluation over probabilistic databases is notoriously intractable---not only in combined complexity, but often in data complexity as well. This motivates the study of
        approximation algorithms, and particularly of \textit{combined FPRASes}, with runtime polynomial in both the query and instance size. In this paper, we focus on tuple-independent probabilistic databases over binary signatures, i.e., \textit{probabilistic graphs}, and study when we can
	devise combined FPRASes for probabilistic query evaluation.

        We settle the complexity of this problem for a variety of query and instance classes, by proving both approximability results and (conditional) inapproximability results together with (unconditional) DNNF provenance circuit size lower bounds. This allows us to deduce many corollaries of possible independent interest. For example, we show how the results of Arenas et al.~\cite{ACJR21} on counting fixed-length strings accepted by an NFA imply the existence of an FPRAS for the two-terminal network reliability problem on directed acyclic graphs, a question asked by Zenklusen and Laumanns \cite{ZL11}. We also show that one cannot extend a recent result of van Bremen and Meel~\cite{vBM23} giving a combined FPRAS for self-join-free conjunctive queries of bounded hypertree width on probabilistic databases: neither the bounded-hypertree-width condition nor the self-join-freeness hypothesis can be relaxed. 
        We last show how our methods can give insights on the evaluation and approximability of \emph{regular path queries} (RPQs) on probabilistic graphs in the data complexity perspective, showing in particular that some of them are (conditionally) inapproximable.
\end{abstract}%

\maketitle

\section{Introduction}%
\textit{Tuple-independent probabilistic databases} (TID)
are a simple and principled formalism 
to model uncertainty and noise in relational
data~\cite{DBLP:conf/vldb/DalviS04,DBLP:series/synthesis/2011Suciu,DBLP:journals/jacm/DalviS12}.
In the TID model, each tuple of a relational database is annotated with an
independent probability of existence; all tuples are assumed to be independent.
In the \textit{probabilistic query evaluation} (PQE) problem, given a Boolean query~$Q$ and a TID instance~$I$, 
we must compute the probability that $Q$ holds in a subinstance sampled from~$I$ according to the resulting distribution.
The PQE problem has been studied in database theory both in terms of \emph{combined complexity}, where the query and instance are part of the input, and in \emph{data complexity}, where the query is fixed and only the instance is given as input~\cite{DBLP:conf/stoc/Vardi82}.
Unfortunately, many of the results so far~\cite{DBLP:series/synthesis/2011Suciu} show that the PQE problem is highly
intractable, even in data complexity for many natural queries (\eg, a path query of length three), and hence also in combined complexity.

Faced with this intractability, 
a natural approach is to study \emph{approximate PQE}: we relax the
requirement of computing the exact probability that the query holds, and settle for an approximate answer. This approach has been studied in data complexity~\cite{DBLP:series/synthesis/2011Suciu}: for any fixed union of conjunctive queries (UCQ), we can always tractably approximate the answer to PQE,
additively (simply by Monte Carlo sampling),
or multiplicatively (using the Karp-Luby approximation
algorithm on a disjunctive-normal-form
representation of the query provenance).
However, these approaches are
not tractable in combined complexity, and moreover the latter approach exhibits a ``slicewise polynomial'' runtime of the form $O(|I|^{|Q|})$---rather than, say, $O(2^{|Q|}\poly(|I|))$---which seriously limits its practical utility. Thus, our goal is to obtain
a \textit{combined FPRAS} for PQE:
by this we mean a fully polynomial-time randomized approximation scheme, giving a multiplicative approximation of the probability, whose runtime is polynomial in the query and TID (and in the desired precision).
This approach has been recently proposed by van Bremen and Meel~\cite{vBM23},
who show a combined FPRAS for CQs when assuming that the query is
self-join-free and has bounded hypertree width; their work leaves open the
question of which other cases admit combined FPRASes.

\paragraph*{Main Results.}
In this paper, following the work of Amarilli, Monet and Senellart~\cite{AMS17} for exact PQE, we investigate the combined complexity of \textit{approximate} PQE in the setting of \textit{probabilistic graphs}. In other words, we study \textit{probabilistic graph homomorphism}, which is the equivalent analogue of CQ evaluation: given a (deterministic) query graph $G$, and given a instance graph $H$ with edges annotated with independent probabilities (like a TID), we wish to approximate the probability that a randomly selected subgraph $H' \subseteq H$ admits a homomorphism from~$G$.
This setting is incomparable to that of~\cite{vBM23}, because it allows for self-joins and for queries of unbounded width, but assumes that relations are binary.

Of course, the graph homomorphism problem is intractable in combined
complexity if the input graphs are arbitrary (even without probabilities).
Hence, we study the problem when the query graph and instance graph are required
to fall in restricted graph classes, chosen to ensure 
tractability in the non-probabilistic setting. We use similar classes as those
from~\cite{AMS17}: \emph{path graphs} which may be \emph{one-way} (\owp: all edges are
oriented from left to right) or \emph{two-way} (\twp: edge orientations are
arbitrary); \emph{tree graphs} which may be \emph{downward} (\dwt: all edges
are oriented from the root to the leaves) or \emph{polytrees} (\pt: edge
orientations are arbitrary); and, for the instance graph, \emph{directed acyclic
graphs} (\dagi), or \emph{arbitrary graphs} (\all).

For all combinations of these classes, we show either (i) the
existence of a combined FPRAS, or (ii) the non-existence of such an FPRAS,
subject to standard complexity-theoretic assumptions. We summarize our results
in Table~\ref{tab:tab}, respectively for graphs that are \emph{labelled} (\ie, the signature features several binary relations), or \emph{unlabelled} (\ie, 
only one binary relation).
\begin{result}[Sections~\ref{sec:labelled} and~\ref{sec:unlabelled}]
	The results in Table~\ref{tab:tab}, described in terms of the graph classes outlined above, hold.
\end{result}
In summary, for the classes that we consider, our results mostly show that the
general intractability of combined PQE carries over to the approximate
PQE problem.
The important exception is Proposition~\ref{prop:phoml-owp-dagi}:
the PQE problem for one-way path
queries on \emph{directed acyclic graphs} (DAGs) admits a combined FPRAS.
We discuss more in detail below how this result is proved and some of its
consequences. Another case is left open: in the unlabelled setting, we do not settle the approximability of combined PQE for one-way path queries (or equivalently downward tree queries) on
arbitrary graphs. For all other cases, either exact combined PQE was already
shown to be tractable in the exact setting~\cite{AMS17}, or we strengthen the
\SPhardness of exact PQE from~\cite{AMS17} by showing that combined FPRASes
conditionally do not exist.
We stress that our results always concern \emph{multiplicative approximations}: 
as non-probabilistic graph homomorphism is tractable for
the classes that we consider, we can always obtain additive approximations
for PQE simply by Monte Carlo sampling.
Further note that our intractability results are always
shown in \emph{combined complexity}---in data complexity, for the queries that
we consider, PQE is always multiplicatively approximable
via the Karp-Luby
algorithm~\cite{DBLP:series/synthesis/2011Suciu}.

As an important consequence, our techniques yield connections between
approximate PQE and \textit{intensional} approaches to the PQE problem.
Recall that the intensional approach was introduced by Jha and Suciu~\cite{JS13} in the
setting of exact evaluation, and when measuring data complexity. They show that
many tractable queries for PQE also admit tractable provenance
representations. More precisely, for these queries~$Q$, there is a polynomial-time
algorithm that takes as input any database instance and computes
a representation of the Boolean provenance of~$Q$ in a form which admits tractable model counting (\eg, OBDD, d-DNNF, etc.). This 
intensional approach
contrasts with \textit{extensional} approaches (like~\cite{DBLP:journals/jacm/DalviS12}) which exploit the structure of the query directly: comparing the power of intensional and extensional approaches is still an open problem~\cite{M20}.

In line with this intensional approach, we complement our conditional hardness results on approximate PQE with \emph{unconditional} lower bounds on the \textit{combined} size of tractable representations of query provenance.
Namely, we show a moderately exponential lower bound on DNNF provenance representations
which applies to all our non-approximable query-instance class pairs:
\begin{result}[Section~\ref{sec:dnnf-bounds}, informal]
	Let $\langle \Gmc, \Hmc \rangle$ be a conditionally non-approximable query-instance class pair studied in this paper. For any $\epsilon > 0$, there is an infinite family $G_1, G_2, \ldots$ of $\Gmc$ queries and an infinite family $H_1, H_2, \ldots$ of 
        $\Hmc$ instances such that, for every $i
	> 0$, every DNNF circuit representing the provenance $\prov^{G_i}_{H_i}$ has
        size at least $2^{\Omega \left( \left(\gsize{G_i} + \gsize{H_i} \right)^{1-\epsilon} \right)}$.
\end{result}
The class of DNNF circuits is arguably the most succinct circuit class in knowledge compilation that still has desirable properties~\cite{D01,DM02}.
Such circuits subsume in particular the class of \emph{structured DNNFs}, for which tractable approximation algorithms were recently proposed~\cite{DBLP:conf/stoc/ArenasCJR21a}.
Thus, these bounds help to better understand the limitations of intensional 
approaches.

Moreover, since we also show \textit{strongly} exponential lower bounds for the treewidth-1 query class \owp in particular, our results give an interesting example of a CQ class for which (non-probabilistic) query evaluation is in linear-time combined complexity~\cite{Y81}, but the size of every DNNF representation of query provenance is exponential:

\begin{result}[Proposition~\ref{prop:owp-all-lb}]
  \label{res:provenance}
	There exists an infinite family $G_1, G_2, \ldots$ of treewidth-1 CQs, and an infinite family $H_1, H_2, \ldots$ of instances on a fixed binary signature such that, for every $i > 0$, every DNNF circuit representing the provenance $\prov^{G_i}_{H_i}$ has size at least $2^{\Omega \left(\gsize{G_i} + \gsize{H_i} \right)}$.
\end{result}
Note that this result stands in contrast to the data complexity perspective, where the provenance of \textit{every} fixed CQ---no matter its treewidth---admits a polynomially-sized DNNF circuit representation, more precisely as a provenance formula in disjunctive normal form (DNF).

\paragraph*{Consequences.}
Our results and techniques have several interesting consequences of potential independent interest. First, they imply that we cannot relax the hypotheses of the result of van Bremen and Meel mentioned earlier~\cite{vBM23}. They show the following result on combined FPRASes for PQE in the more general context of probabilistic databases:

\begin{restatable}[Theorem 1 
of~\cite{vBM23}]{thm}{combinedsjfcq}\label{thm:combinedsjfcq}
	Let $Q$ be a self-join-free conjunctive query of bounded hypertree width, 
	and $H$ a tuple-independent database instance. %
        Then there exists a combined FPRAS for computing the probability of $Q$ 
        on $H$, \ie, an FPRAS whose runtime is $\poly(|Q|, \gsize{H}, 
        \epsilon^{-1})$, where $\epsilon$ is the multiplicative error.%
\end{restatable}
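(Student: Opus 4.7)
The plan is to express $\Pr[Q]$ as the probability that a monotone DNF $F = \bigvee_{h} C_h$ evaluates to true, where $h$ ranges over homomorphisms from $Q$ to $H$ and $C_h = \bigwedge_{a \in Q} X_{h(a)}$ is the conjunction of the Boolean variables for the tuples in the image of~$h$. By self-join-freeness, each $C_h$ is a conjunction of $|Q|$ \emph{distinct} literals (one per atom). The strategy is to apply a Karp-Luby-style FPRAS for $\Pr[F]$, whose three elementary primitives will be supplied by the bounded-hypertree-width hypothesis on~$Q$.

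Concretely, fix a width-$k$ hypertree decomposition $T$ of~$Q$. The first step is to implement the following in $\poly(|Q|,\gsize{H})$ time by dynamic programming over~$T$: (i) compute the total clause weight $S \defeq \sum_h \prod_{a \in Q} p_{h(a)}$; (ii) sample a homomorphism $h$ with probability $\Pr[C_h]/S$ via a top-down pass through the DP tables; and (iii) for any subinstance $H' \subseteq H$, compute $N(H')$, the number of homomorphisms from $Q$ into $H'$. Self-join-freeness is used in (i)--(iii) to ensure that the tuples attached to different atoms contribute independently to the DP tables, keeping weighted sums polynomially representable.

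Given these primitives, the plan is to form the standard Karp-Luby estimator: sample $h$ with probability $\Pr[C_h]/S$, build a random subinstance $H'$ containing $h(Q)$ by independently including each remaining tuple with its marginal probability, and output $X \defeq S / N(H')$. A short calculation then gives $\mathbb{E}[X] = \Pr[F]$ and $X \in [S/m, S]$, where $m$ denotes the number of homomorphisms, so $X$ is an unbiased and bounded estimator of the target probability.

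The main obstacle is controlling the sample complexity. A generic Karp-Luby bound yields $O((S/\Pr[F])\,\epsilon^{-2})$ samples, and the ratio $S/\Pr[F]$ can be exponential in $|Q|$ in the worst case (already for a disjoint union of $k$ atoms with small marginal probabilities, this ratio grows like $(\mathrm{e}/(\mathrm{e}-1))^k$). The plan to sidestep this is to decompose the estimator recursively along~$T$: apply Karp-Luby \emph{locally} at each bag of~$T$, where the corresponding local sub-DNF has only $\poly(\gsize{H})$ clauses and variance is manageable, and combine the resulting estimates across bags multiplicatively, appealing to self-join-freeness to argue conditional independence of the sub-estimates given the assignments to frontier variables. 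Boosting each local estimate to a sufficiently amplified precision so that the accumulated relative error over the decomposition tree stays below~$\epsilon$, and using median-of-means for confidence, would then yield the desired combined FPRAS with runtime $\poly(|Q|, \gsize{H}, \epsilon^{-1})$.
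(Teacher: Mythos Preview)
First, note that this theorem is quoted from~\cite{vBM23} and is not proved in the present paper, so there is no in-paper proof to compare against. I will therefore assess your proposal on its own terms and against what the paper reveals about the original argument.

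You correctly set up the DNF encoding, correctly implement primitives (i)--(iii) via DP on the hypertree decomposition, and correctly diagnose the obstacle: the Karp--Luby sample bound $S/\Pr[F]$ can be exponential in~$|Q|$ (your $(\mathrm{e}/(\mathrm{e}-1))^k$ example is right). The gap is in your fix. The quantity $\Pr[Q]$ does \emph{not} factor multiplicatively along the decomposition. What self-join-freeness buys you is that, for a \emph{fixed} assignment~$\sigma$ to a separator, the events in the adjacent subtrees are independent; but $\Pr[Q]$ is a disjunction over all such~$\sigma$, and those disjuncts share tuples and are highly correlated. Your sentence ``combine the resulting estimates across bags multiplicatively'' therefore does not produce~$\Pr[Q]$, and you give no mechanism to aggregate over separator assignments without reintroducing the exponential dependence you set out to avoid. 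A recursive scheme---estimate each child probability $\Pr[Q_i \mid \sigma]$ bottom-up and feed these estimates into a root-level Karp--Luby---is conceivable, but then you must sample clauses proportional to \emph{estimated} weights, control the resulting bias, and bound error accumulation over~$|Q|$ levels; none of this is addressed. As written this is not a proof.

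For context, the argument in~\cite{vBM23} does not go through a localized Karp--Luby. From the way the present paper invokes~\cite{vBM23} alongside~\cite{ACJR21} (e.g., in deriving Theorem~\ref{thm:nobdd-weighted-count}), one sees that the original route builds a polynomial-size structured representation of the provenance and then applies the Arenas et al.\ FPRAS to that representation. This circumvents the sample-complexity issue altogether, since the FPRAS of~\cite{ACJR21} runs in time polynomial in the representation size and is independent of the number of homomorphisms.
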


It was left open in~\cite{vBM23} whether intractability held without these assumptions on the query.
Hardness is immediate if we do not bound the width of queries and allow
arbitrary self-join-free CQs, as combined query evaluation is then \NPhard already in the non-probabilistic setting. However, it is less clear whether the self-join-freeness condition can be lifted. Our results give a negative answer, already in 
a severely restricted setting:

\begin{result}[Corollaries~\ref{cor:tw1tw1} and~\ref{cor:optimality}]
        Assuming $\RP \neq \NP$, neither the bounded hypertree width nor self-join-free condition in Theorem~\ref{thm:combinedsjfcq} can be relaxed:
        even on a fixed signature consisting of a single binary relation, there is no FPRAS to approximate the probability of an input treewidth-1 CQ on an input treewidth-1 TID instance.
\end{result}

A second consequence implied by our techniques concerns the \textit{two-terminal
network reliability problem} (ST-CON) on directed acyclic graphs (DAGs). Roughly
speaking, given a directed graph $G = (V, E)$ with independent edge reliability
probabilities $\pi: E \rightarrow [0, 1]$, and two distinguished vertices $s, t
\in V$, the ST-CON problem asks for the probability that there is a path from $s$ to $t$.
The problem is known to be \SPhard even on DAGs~\cite[Table~2]{PB83}.
The existence of an FPRAS for ST-CON is a long-standing open question~\cite{kannan1994markov}, %
and the case of DAGs was explicitly left open by Zenklusen and Laumanns~\cite{ZL11}. Our results allow us to answer in the affirmative:
\begin{result}[Theorem~\ref{thm:twoterminal-dag-fpras}]
	There exists an FPRAS for the ST-CON problem over DAGs.
\end{result}
This result and our approximability results follow from
the observation that path queries on directed acyclic graphs admit a compact representation of their Boolean provenance as \textit{non-deterministic ordered binary decision diagrams} (nOBDDs). We are then able to use a recent result by Arenas et al.~\cite[Corollary~4.5]{ACJR21} giving an FPRAS for counting the satisfying assignments of an nOBDD, adapted to the weighted setting. 

We last explore a third consequence of our work by studying the PQE problem
for \emph{regular path queries} (RPQs), i.e., queries asking for the 
existence of a walk in the graph (of arbitrary length) that forms a word 
belonging to a 
regular expression. Unlike the other queries studied in this work, RPQs are
generally not expressible as conjunctive queries, and so PQE is not 
necessarily approximable even in the data complexity perspective.
Thus, specifically for RPQs, we study PQE in data complexity rather than
combined complexity. Of course, some RPQs 
are equivalent to UCQs, for instance those with regular expressions 
describing a finite language, and so are approximable.
We show that, for all other RPQs (so-called \textit{unbounded} RPQs), the PQE 
problem is at least as hard as 
ST-CON in data complexity. For some unbounded RPQs, we also show
(conditional) inapproximability. This result does not directly follow from our
combined complexity results, but uses very similar techniques.

\paragraph*{Paper Structure.} 
In Section~\ref{sec:prelims}, we review some of the technical background.
We then present our main results on approximability, divided into the labelled and unlabelled case, in Sections~\ref{sec:labelled} and~\ref{sec:unlabelled} respectively. Next, in Section~\ref{sec:dnnf-bounds}, we show lower bounds on DNNF provenance circuit sizes. In Section~\ref{sec:applications}, we show some consequences for previous work~\cite{vBM23}, as well as for the two-terminal network reliability problem.
We show consequences for the data complexity of PQE for RPQs in
Section~\ref{sec:rpqs}.
We conclude in Section~\ref{sec:conclusions}.%

\section{Preliminaries}\label{sec:prelims}
We provide some technical background below, much of which closely follows that in~\cite{ACMS20} and~\cite{AMS17}.

\paragraph*{Graphs and Graph Homomorphisms.}
Let $\sigma$ be a non-empty finite set of labels called the \emph{signature}. When $|\sigma| = 1$, we say that we are in the \textit{unlabelled setting}; otherwise, we say we are in the \textit{labelled setting}. In this paper, we study only \textit{directed} graphs with edge labels from $\sigma$. A graph $G$ over $\sigma$ is a tuple $(V, E, \lambda)$ with finite non-empty 
vertex set~$V$,
edge set 
$E \subseteq V^2$, 
and $\lambda\colon E \to \sigma$ 
a labelling function mapping each edge to a single label (we may omit $\lambda$
in the unlabelled setting). The \emph{size} $\gsize{G}$ of $G$ is its number of edges.
We write $\ledge{x}{R}{y}$ 
for
an edge $e = (x,y) \in E$
with label~$\lambda(e) = R$,
and $\edge{x}{y}$ for $(x,y) \in E$ (no matter the edge label):
we say that $x$ is the \emph{source} of~$e$ and that $y$ is the \emph{target}
of~$e$. We sometimes use
a simple regular-expression-like syntax (omitting the vertex names where irrelevant) to represent
path graphs: for example, we write $\edge{}{}\edge{}{}$ to represent an unlabelled path of length two, and the notation $\edge{}{}^k$ to denote an unlabelled path of length $k$. All of this syntax extends to labelled graphs in the obvious way.
A graph $H = (V', E', \lambda')$ is a \textit{subgraph} of $G$,
written $H \subseteq G$, if $V = V'$, $E' \subseteq E$, and $\lambda'$ is the
restriction of~$\lambda$ to~$E'$.

A \textit{graph homomorphism} $h$ from a graph $G = (V_G, E_G, \lambda_G)$ to a graph $H = (V_H, E_H, \lambda_H)$ is a function $h: V_G \rightarrow V_H$ such that, for all $(u,v) \in E_G$, we have $(h(u),h(v)) \in E_H$ and $\lambda_H((h(u),h(v))) = \lambda_G((u,v))$. We write $G \homo H$ to say that such a homomorphism exists, and sometimes refer to a homomorphism from $G$ to $H$ as a \textit{match} of $G$ in $H$.

\paragraph*{Probabilistic Graphs and Probabilistic Graph Homomorphisms.}
A \textit{probabilistic graph} is a pair $(H, \pi)$, where $H$ is a graph with edge labels from $\sigma$, and $\pi: E \to [0, 1]$ is a probability labelling on the edges. Note that edges $e$ in $H$ are annotated both by their probability value $\pi(e)$ and their $\sigma$-label $\lambda(e)$. Intuitively, $\pi$ gives us a succinct specification of a probability distribution over the $2^{\gsize{H}}$ possible subgraphs of $H$, by independently including each edge~$e$ with 
probability $\pi(e)$.
Formally, the distribution induced by $\pi$ on the subgraphs $H' \subseteq H$ is defined by $\Pr_\pi(H') = \prod_{e \in E'} \pi(e) \prod_{e \in E \setminus E'} (1 - \pi(e))$.

In this paper, we study the \textit{probabilistic graph homomorphism} problem \phom: given a graph $G$ called the \emph{query graph} and a probabilistic graph $(H, \pi)$ called the \emph{instance graph}, with $G$ and $H$ carrying labels from the same signature $\sigma$, we must compute the probability $\Pr_\pi(G \homo H)$ that a subgraph of $H$, sampled 
according to the distribution induced by $\pi$, admits a homomorphism from $G$. That is, we must compute
$\Pr_\pi(G \homo H) := \sum_{\substack{H' \subseteq H \text{~s.t.~} G \homo H'}} \Pr_\pi(H')$.

We study \phom in \emph{combined complexity}, i.e., when the query graph $G$, instance graph $(H, \pi)$, and signature $\sigma$ are all given as input. Further, we study \phom when we restrict $G$ and $H$ to be taken from specific \emph{graph classes}, i.e., infinite families of (non-probabilistic) graphs, denoted respectively $\Gmc$ and $\Hmc$. (Note that $\Hmc$ does not restrict the probability labelling $\pi$.)
To distinguish the \textit{labelled} and \textit{unlabelled} setting, we denote by $\phoml(\Gmc, \Hmc)$ the problem of computing $\Pr_\pi(G \homo H)$ for $G \in \Gmc$ and $(H, \pi)$ with $H \in \Hmc$ when no restriction is placed on the input signature $\sigma$ for graphs in $\Gmc$ and $\Hmc$. On the other hand, we write $\phomul(\Gmc, \Hmc)$ when $\Gmc$ and $\Hmc$ are restricted to be classes of unlabelled graphs, i.e., $|\sigma| = 1$.
We focus on \emph{approximation algorithms}: fixing classes $\Gmc$ and~$\Hmc$, a \emph{fully polynomial-time randomized approximation scheme} (FPRAS) for $\phoml(\Gmc, \Hmc)$ (in the labelled setting) or $\phomul(\Gmc, \Hmc)$ (in the unlabelled setting) is a randomized algorithm that runs in time $\poly(\gsize{G}, \gsize{H}, \epsilon^{-1})$ on inputs
$G \in \Gmc$, $(H, \pi)$ for $H \in \Hmc$, and $\epsilon > 0$.
The algorithm must return, with probability at least $3/4$, a \emph{multiplicative approximation} of the probability $\Pr_\pi(G \homo H)$, i.e., a value between $(1-\epsilon) \Pr_\pi(G \homo H)$ and $(1+\epsilon) \Pr_\pi(G \homo H)$.

\paragraph*{Graph Classes.}
We study \phom on the following graph classes, which are defined on a graph $G$ with edge labels from some signature $\sigma$, and thus can either be labelled or unlabelled depending on $\sigma$:
\begin{itemize}
	\item $G$ is a \textit{one-way path} (\owp) if it is of the form $\ledge{a_1}{R_1}{} \ledge{\dots}{R_{m-1}}{a_m}$ for some $m$, with all $a_1, \dots, a_m$ being pairwise distinct, and with $R_i \in \sigma$ for $1 \leq i < m$.
	\item $G$ is a \textit{two-way path} (\twp) if it is of the form $a_1\ -\ \dots\ -\ a_m$ for some $m$, with pairwise distinct $a_1, \dots, a_m$, and each $-$ being $\ledge{}{R_i}{}$ or $\redge{}{R_i}{}$ (but not both) for some label $R_i \in \sigma$.
	\item $G$ is a \textit{downward tree} (\dwt) if it is a rooted unranked tree (each node can have an arbitrary number of children), with all edges pointing from parent to child in the tree.
	\item $G$ is a \textit{polytree} (\pt) if its underlying undirected graph is a rooted unranked tree, without restrictions on the edge directions.
	\item $G$ is a \textit{DAG} (\dagi) if it is a
          (directed) acyclic graph.
\end{itemize}
These refine the classes of connected queries considered in~\cite{AMS17}, by
adding the \dagi class. We denote by \all the class of all graphs.
Note that both \twp and \dwt generalize \owp and are
incomparable; \pt generalizes both \twp and \dwt; \dagi generalizes \pt; \all
generalizes \dagi (see Figure~2 of~\cite{AMS17}).

Note that our notion of labelled graphs, and the classes of graphs defined
above, are different from the notion of database instances, where we would
typically allow two vertices to be connected by multiple edges with different 
labels. Formally, for a non-empty finite signature $\sigma$, an
\emph{arity-two database} over~$\sigma$ consists of an \emph{active domain} $V$ and a
set $E$ of labelled edges of the form $(u, a, v)$ with $u, v \in V$ and $a \in
\sigma$. The notion of a probabilistic arity-two database is defined in the
expected way by giving a probability to every edge of~$E$ and assuming
independence across all edges (in particular all edges having the same endpoints are also independent).
A graph over~$\sigma$ can be seen as an arity-two database where for
every pair $(u,v) \in V \times V$ there exists at most one edge in~$E$ of the
form $(u, a, v)$ for some $a \in \sigma$. Note that in the unlabelled setting
with $|\sigma|=1$ there is no difference between both notions. For simplicity,
all our results will be phrased in terms of graphs and not of arity-two
databases; in particular, all our lower bounds will apply in the restricted
setting of graphs that we study.
However, all our upper bound results will in fact also hold when allowing
arity-two databases as input---we mention this explicitly when stating these results.

\paragraph*{Boolean Provenance.}
We use the notion of \textit{Boolean provenance}, or simply
\textit{provenance}~\cite{IL84,ABS15,S19}.
In the context of databases, provenance intuitively represents which subsets of
the instance satisfy the query: it is used in the intensional approach to
probabilistic query
evaluation~\cite{JS13}.
In this paper, we use provenance to show both upper and lower bounds.

Formally,
let $G = (V_G, E_G, \lambda_G)$ and $H = (V_H, E_H, \lambda_H)$ be graphs. 
Seeing $E_H$ as a set of Boolean variables, a \emph{valuation} $\nu$ of~$E_H$ is a
function $\nu\colon E_H \to \{0, 1\}$ that maps each edge of~$H$ to~$0$ or~$1$.
Such a valuation~$\nu$ defines a subgraph $H_\nu$ of~$H$ where we only keep
the edges mapped to~$1$, formally $H_\nu = (V_H, \{e \in E_H\ |\ \nu(e) = 1\},
\lambda_H)$.
The \emph{provenance} of~$G$ on~$H$ is then the Boolean function
$\prov^G_H$ having as variables the edges~$E_H$ of~$H$ and mapping
every valuation $\nu$ of~$E_H$ 
to~$1$ (true) or $0$ (false) depending on whether $G \homo H_\nu$ or not.
Generalizing this definition,
for any integer $n$, for any choice of $a_1, \ldots, a_n \in V_G$ and $b_1, \ldots, b_n \in V_H$, we write
$\prov^G_H[a_1 \colonequals b_1, \dots, a_n \colonequals b_n]$ 
to denote the Boolean function that maps valuations~$\nu$ of~$E_H$ to~$1$
or~$0$ depending on whether or not there is a homomorphism $h\colon G \to H_\nu$ which additionally satisfies $h(a_i) = b_i$ for all $1 \leq i \leq n$.

For our lower bounds, we will often seek to represent Boolean formulas as the
provenance of queries on graphs:

\begin{defi}
  \label{def:provrepresents}
	Given two graphs $G$ and~$H$, and a Boolean formula $\phi$ whose variables $\{e_1, \ldots, e_n\} \subseteq E_H$ are edges of $H$, we say that $\prov_H^G$ represents $\phi$ on $(e_1,...,e_n)$ if for every valuation $\nu: E_H \rightarrow \{0,1\}$ that maps edges not in $\{e_1,...,e_n\}$ to $1$, we have $\nu\models\phi$ if and only if $\prov_H^G(\nu)=1$.
\end{defi}

\paragraph*{Circuits and Knowledge Compilation.}
We consider representations of Boolean functions in terms of \textit{non-deterministic (ordered) binary decision diagrams}, as well as \textit{decomposable circuits}, which we define below.

A \textit{non-deterministic binary decision diagram} (nBDD) on a set of variables $V = \{v_1, \dots, v_n\}$ is a rooted DAG $D$ whose nodes carry a label in $V \sqcup \{0, 1, \lor\}$ (using $\sqcup$ to denote disjoint union), and whose edges can carry an optional label in $\{0, 1\}$ subject to the following requirements:
\begin{enumerate}
  \item there are exactly two leaves (called \textit{sinks}), one labelled by $1$ (the \textit{$1$-sink}), and the other by $0$ (the \emph{$0$-sink});
  \item internal nodes are labelled either by $\lor$ (called an \emph{$\lor$-node}) or by a variable of $V$ (called a \emph{decision node}); and
	\item each decision node has exactly two outgoing edges, labelled $0$ and $1$; the outgoing edges of $\lor$-nodes are unlabelled.
\end{enumerate}
The size $\gsize{D}$ of $D$ is its number of edges. Let $\nu$ be a valuation of $V$, and let $\pi$ be a path in $D$ going from the root to one of the sinks. We say that $\pi$ is \textit{compatible} with $\nu$ if for every decision node $n$ of the path, letting $v \in V$  be the variable labelling~$n$, then $\pi$ passes through the outgoing edge of $n$ labelled with $\nu(v)$. In particular, no constraints are imposed at $\lor$-nodes; thus, we may have that multiple paths are compatible with a single valuation.
The nBDD $D$ \emph{represents} a Boolean function, also written $D$ by abuse of notation, which is defined as follows:
for each valuation~$\nu$ of~$V$, we set $D(\nu) \colonequals 1$ if %
there exists a path $\pi$ from the root to the $1$-sink of $D$ that is compatible with $\nu$, and set $D(\nu) \colonequals 0$ otherwise.
Given an nBDD $D$ over variables $V$, we denote by $\mods(D)$ the set of satisfying valuations $\nu$ of $D$ such that $D(\nu) = 1$, and by $\mc(D)$ the number $|\mods(D)|$ of such valuations. Further, given a rational probability function $w: V \rightarrow [0, 1]$ on the variables of $V$, define $\wmc(D, w)$ to be the probability that a random valuation $\nu$ satisfies $D$, that is, $\wmc(D, w) = \sum_{\nu \in \mods(D)} \prod_{x \in V \text{~s.t.~} \nu(x) = 1} w(x) \prod_{x \in V \text{~s.t.~} \nu(x) = 0} \left( 1 - w(x) \right)$.

In this paper, we primarily focus on a subclass of nBDDs called
\textit{non-deterministic ordered binary decision diagrams} (nOBDDs). An nOBDD $D$ is an nBDD 
for which there exists a strict total order $\prec$ on the variables~$V$ such that, for any two decision nodes $n \neq n'$ such that there is a path from~$n$ to~$n'$, then, letting $v$ and $v'$ be the variables that respectively label~$n$ and~$n'$, we have $v \prec v'$.
This implies that, along any path going from the root to a sink, the sequence of variables will be ordered according to~$V$, with each variable occurring at most once.
We use nOBDDs because they 
admit tractable approximate counting of their satisfying assignments, as we discuss later.

We also show lower bounds on a class of \emph{circuits}, called \emph{decomposable negation normal form} (DNNF) circuits. A \emph{circuit} on a set of variables~$V$ is a directed acyclic graph $C = (G, W)$, where $G$ is a set of \emph{gates}, where~$W \subseteq G \times G$ is a set of edges called \emph{wires}, and where we distinguish an \emph{output gate} $g_0 \in G$. The \emph{inputs} of a gate~$g \in G$ are the gates $g'$ such that there is a wire $(g', g)$ in~$W$. The gates can be labelled with variables of~$V$ (called a \emph{variable gate}), or with the Boolean operators $\lor$, $\land$, and $\neg$. We require that gates labelled with variables have no inputs, and that gates labelled with~$\neg$ have exactly one input. A circuit $C$ defines a Boolean function on~$V$, also written $C$ by abuse of notation. Formally, given a valuation $\nu$ of~$V$, we define inductively the \emph{evaluation} $\nu'$ of the gates of $C$ by setting $\nu'(g) \colonequals \nu(v)$ for a variable-gate $g$ labelled with variable $v$, and setting $\nu'(g)$ for other gates to be the result of applying the Boolean operators of~$g$ to $\nu'(g_1), \ldots, \nu'(g_n)$ for the inputs $g_1, \ldots, g_n$ of~$g$. We then define $C(\nu)$ to be $\nu'(g_0)$ where $g_0$ is the output gate of~$C$.

The circuit is in \emph{negation normal form} if negations are only applied to variables, i.e., for every $\neg$-gate, its input is a variable gate. The circuit is \emph{decomposable} if the $\land$-gates always apply to inputs that depend on disjoint variables: formally, there is no $\land$-gate $g$ with two distinct inputs $g_1$ and $g_2$, such that some variable $v$ labels two variable gates $g_1'$ and $g_2'$ with $g_1'$ having a directed path to~$g_1$ and $g_2'$ having a directed path to~$g_2$. A \emph{DNNF} is a circuit which is both decomposable and in negation normal form.
Note that we can translate nOBDDs in linear time to 
DNNFs, more specifically to \emph{structured DNNFs}~\cite[Proposition~3.8]{ACMS20}.

\paragraph*{Approximate Weighted Counting for nOBDDs.}%
Recently, Arenas et al.~\cite{ACJR21} showed the following result on approximate counting of satisfying assignments of an nOBDD.

\begin{thm}[Corollary~4.5 of \cite{ACJR21}]\label{thm:nobdd-unweighted-count}
	Let $D$ be an nOBDD. Then there exists an FPRAS for computing $\mc(D)$.
\end{thm}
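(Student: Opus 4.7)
The plan is to reduce the problem of counting satisfying valuations of an nOBDD to counting length-$n$ words accepted by an NFA, and then invoke the FPRAS for the latter problem due to Arenas et al.~\cite{ACJR21}. The reduction exploits the global variable order of the nOBDD: since $D$ is ordered by some $v_1 \prec \cdots \prec v_n$, each satisfying valuation corresponds to a unique bit string in $\{0,1\}^n$, so the correspondence between valuations and accepted words is a bijection rather than a many-to-one map that would require multiplicity corrections.

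Concretely, from~$D$ I would construct an NFA $\aut_D$ over the alphabet $\{0,1\}$ whose length-$n$ language is exactly the set of bit strings encoding satisfying valuations of~$D$. The states of~$\aut_D$ are pairs $(g, i)$ where $g$ is a node of~$D$ and $i \in \{1, \ldots, n+1\}$ is the next variable index to read. A decision node labelled by~$v_i$, paired with index~$i$, has two outgoing transitions reading $0$ or~$1$ and leading to the appropriate child paired with index~$i+1$. An $\lor$-node yields $\epsilon$-transitions to each of its children, with the index unchanged. To handle skipped variables, whenever an edge of~$D$ goes from a node at index~$j$ to a successor whose associated index is $k > j+1$ (or reaches a sink before index~$n+1$), I insert a chain of ``free'' states that nondeterministically read either $0$ or~$1$ at each intervening position. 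The start state is the root of~$D$ paired with index~$1$, and the sole accepting state is the $1$-sink paired with index~$n+1$. This construction has size polynomial in $\gsize{D}$ and~$n$.

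By construction, a word $b_1 \cdots b_n$ is accepted by~$\aut_D$ iff the valuation $\nu(v_i) \colonequals b_i$ satisfies~$D$, so the number of distinct length-$n$ words accepted by~$\aut_D$ equals $\mc(D)$. Invoking the FPRAS of~\cite{ACJR21} for counting length-$n$ words accepted by an NFA then yields the claimed FPRAS for~$\mc(D)$. The main obstacle is verifying the correctness of the translation: although each satisfying valuation maps to a unique word by the variable ordering, multiple accepting paths in~$\aut_D$ may witness the same word because of $\lor$-nodes and the free-state gadgets for skipped variables. Crucially, this is not a problem here, because the Arenas et al.\ FPRAS counts \emph{distinct} accepted words rather than accepting paths, so the construction directly transfers counts without any bookkeeping for path multiplicities.
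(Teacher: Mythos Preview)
The paper does not give its own proof of this statement: it is cited directly as Corollary~4.5 of~\cite{ACJR21}, with no argument supplied. Your reduction---turning the nOBDD into an NFA over $\{0,1\}$ whose length-$n$ language encodes exactly the satisfying valuations, and then invoking the \#NFA FPRAS of~\cite{ACJR21}---is correct and is precisely the natural way to derive the nOBDD corollary from the main result of that paper. The global variable order guarantees the bijection between valuations and length-$n$ bit strings; the free-state gadgets for skipped variables cause only an $O(\gsize{D}\cdot n)$ blowup; and your closing observation that the Arenas et al.\ FPRAS counts \emph{distinct words} rather than accepting runs is exactly what absorbs the nondeterminism from $\lor$-nodes and the free-state chains. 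One small point you leave implicit: the $\epsilon$-transitions you introduce at $\lor$-nodes can be removed in polynomial time by the standard $\epsilon$-closure construction, so there is no issue applying the NFA FPRAS as usually stated.
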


For our upper bounds, we need a slight strengthening of this result to apply to \emph{weighted model counting} ($\wmc$) in order to handle probabilities. This can be achieved by translating the approach used in \cite[Section~5.1]{vBM23} to the nOBDD setting. We thus show (see Appendix~\ref{apx:nobdd-weighted-count}):

\begin{restatable}{thm}{nobddweightedcount}\label{thm:nobdd-weighted-count}
  Let $D$ be an nOBDD on variables $V$, and let $w: V \rightarrow [0, 1]$ be a rational probability function defined on $V$. Then there exists an FPRAS for computing $\wmc(D, w)$, running in time polynomial in $\gsize{D}$ and~$w$.
\end{restatable}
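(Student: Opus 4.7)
The plan is to reduce the problem of approximating $\wmc(D, w)$ to the (unweighted) approximate model counting of an auxiliary nOBDD $D'$ of polynomial size, to which I then apply Theorem~\ref{thm:nobdd-unweighted-count}. The construction follows the weight-encoding strategy used in Section~5.1 of~\cite{vBM23}, adapted to the nOBDD setting.

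Write each weight as $w(v) = p_v/q_v$ with $p_v, q_v$ positive integers, and let $m = \max_v \lceil \log_2 q_v \rceil$ (polynomial in the bit-size of $w$). For every original variable $v \in V$, introduce $m$ fresh Boolean variables $v_1, \ldots, v_m$ shared across all decision nodes of $D$ labelled~$v$. Construct $D'$ from $D$ by replacing every decision node~$n$ on~$v$—with $0$- and $1$-children leading to sub-nOBDDs $N_0$ and $N_1$—by a small sub-nOBDD $\Gamma_n$ on $v_1, \ldots, v_m$ that, viewing $(v_1, \ldots, v_m)$ as the bits of an integer $N \in \{0, \ldots, 2^m - 1\}$, redirects to the entry of $N_1$ if $N < p_v$, to the entry of $N_0$ if $p_v \leq N < q_v$, and to the $0$-sink (a ``rejection'' branch) if $N \geq q_v$. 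A standard construction yields such a $\Gamma_n$ as an nOBDD of size $O(m)$ on the fresh variables in the order $v_1 \prec \cdots \prec v_m$.

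Define the variable ordering of $D'$ by substituting, in place of each original variable $v$, the ordered block $v_1 \prec \cdots \prec v_m$, preserving the relative order of the original variables. Since in an nOBDD distinct decision nodes on the same variable~$v$ cannot lie on a common root-to-sink path, at most one gadget $\Gamma_n$ is traversed per such path in $D'$; hence each fresh variable appears at most once along any such path, and $D'$ is a valid nOBDD of total size $O(m \cdot \gsize{D})$.

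To establish $\wmc(D, w) = \mc(D') / \prod_v q_v$, note that for each original assignment $\nu$ of $V$, the number of fresh-variable assignments that (i) avoid the rejection branch for every~$v$ and (ii) realize $\nu(v)$ at each~$v$ equals $\prod_v p_v^{\nu(v)} (q_v - p_v)^{1-\nu(v)}$; summing over $\nu \in \mods(D)$ and dividing by $\prod_v q_v$ yields the identity. Since $\prod_v q_v$ is exactly computable in polynomial time, Theorem~\ref{thm:nobdd-unweighted-count} applied to $D'$ with precision~$\epsilon$ yields a multiplicative $\epsilon$-approximation of $\mc(D')$, hence of $\wmc(D, w)$, in time polynomial in $\gsize{D}$, $\gsize{w}$, and $\epsilon^{-1}$. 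The main obstacle will be to verify cleanly that the gadget-sharing construction preserves the variable-ordering condition of nOBDDs while yielding the exact counting identity claimed; both rest crucially on the observation that two decision nodes on the same variable cannot coexist on a single path in~$D$.
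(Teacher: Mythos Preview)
Your construction and the paper's are essentially the same weight-encoding idea, but your counting identity $\mc(D')=\sum_{\nu\in\mods(D)}\prod_v p_v^{\nu(v)}(q_v-p_v)^{1-\nu(v)}$ is false as stated, because you never ensure that $D$ is \emph{complete}. An nOBDD may have root-to-$1$-sink paths that skip some variable $v$; after your substitution, such a path in $D'$ traverses no gadget on the block $v_1,\ldots,v_m$, so a fresh assignment with $N_v\geq q_v$ (one of your ``rejection'' values) can still be accepted along that path. These are models of $D'$ that your sum does not count. Concretely, let $D$ compute $u\lor v$ via an $\lor$-root with two children that are decision nodes on $u$ and on $v$ respectively (each sending its $1$-edge to the $1$-sink and its $0$-edge to the $0$-sink), and take $w(u)=w(v)=1/3$, so $p=1$, $q=3$, $m=2$. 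Then $\wmc(D,w)=5/9$ and your sum equals $5$, but $D'$ accepts exactly the assignments with $N_u=0$ or $N_v=0$, giving $\mc(D')=7$; the surplus models are $(N_u,N_v)\in\{(0,3),(3,0)\}$, accepted via the branch that never inspects the variable carrying the bad value.

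The paper closes this gap by first transforming $D$ in polynomial time into a complete nOBDD (invoking \cite[Lemma~3.16]{ACMS20}) so that every root-to-sink path tests every original variable; only then does every accepting path of $D'$ traverse a gadget on each block, so the rejection branch genuinely eliminates all assignments with some $N_v\geq q_v$ and your identity becomes correct. Note also the small edge case the paper handles upfront by conditioning away variables of weight $0$ or $1$. With those two preprocessing steps added, your argument goes through and matches the paper's.
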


\begin{table}[t]
        \renewcommand{\tabcolsep}{2pt}
		\newcommand\headercell[1]{\smash[b]{\begin{tabular}[t]{@{}c@{}} #1 \end{tabular}}}
\captionsetup[subtable]{justification=centering}
		\begin{subtable}{0.5\textwidth}
                  \centering
			\begin{tabular}{@{~}c | *{6}{c} @{\hspace{2pt}} }
				\headercell{$\Gmc \downarrow$} & \multicolumn{6}{c@{}}{$\Hmc \rightarrow$ }\\
                                                               & \owp   &  \twp           & \dwt    & \,\,\pt\null     & \dagi   & \,\,\,\all\,   \\ 
				\hline
				\owp  &        &                 &         &  \cellcolor{lightgray} & \cellcolor{lightgray}\textcolor{white}{\ref{prop:phoml-owp-dagi}} &  \cellcolor{darkgray}\textcolor{white}{\ref{prop:phoml-owp-all}} \\
				\twp  &        &     &  \cellcolor{darkgray}\textcolor{white}{\ref{prop:phoml-twp-dwt}} &  \cellcolor{darkgray} & \cellcolor{darkgray} &  \cellcolor{darkgray} \\
				\dwt   &        &     &  \cellcolor{darkgray}\textcolor{white}{\ref{prop:phoml-dwt-dwt}} &  \cellcolor{darkgray} & \cellcolor{darkgray} &  \cellcolor{darkgray} \\
				\pt    &        &     &  \cellcolor{darkgray} &  \cellcolor{darkgray} & \cellcolor{darkgray} &  \cellcolor{darkgray} \\
			\end{tabular}
			\caption{Complexity of $\phoml(\Gmc, \Hmc)$.}\label{tab:labelled}
		\end{subtable}\hfill
		\begin{subtable}{0.5\textwidth}
                  \centering
                  \begin{tabular}{@{~} c | *{6}{c} @{\hspace{2pt}} }
				\headercell{$\Gmc \downarrow$} & \multicolumn{6}{c@{}}{$\Hmc
					\rightarrow$ }\\
                                                               & \owp   &  \twp           & \dwt    & \,\,\pt\null     & \dagi   & \,\,\,\all\,   \\ 
				\hline
				\owp  &        &                 &         &   & \cellcolor{lightgray}\textcolor{white}{\ref{prop:phomul-owp-dagi}} & ? \\
				\twp  &        &     &   & \cellcolor{darkgray}\textcolor{white}{\ref{prop:phomul-twp-pt}}  & \cellcolor{darkgray} &  \cellcolor{darkgray} \\
				\dwt   &        &     &   &   & \cellcolor{lightgray}\textcolor{white}{\ref{prop:phomul-dwt-dagi}} &  ? \\
				\pt    &        &     &   &  \cellcolor{darkgray} & \cellcolor{darkgray} &  \cellcolor{darkgray} \\
			\end{tabular}
			\caption{Complexity of $\phomul(\Gmc, \Hmc)$.}\label{tab:unlabelled}
		\end{subtable}
		\caption{Results on approximation proved in this paper. Key: white (\crule[white]{0.5\baselineskip}{0.5\baselineskip}) means that the problem is in $\PTIME$; light grey (\crule[lightgray]{0.5\baselineskip}{0.5\baselineskip}) means that it is \SPhard but admits an FPRAS; dark grey (\crule[darkgray]{0.5\baselineskip}{0.5\baselineskip}) means \SPhardness and non-existence of an FPRAS, assuming $\RP \neq \NP$. All cells without a reference to a corresponding proposition are either implied by one of the other results in this paper, or pertain to exact complexity and were already settled in~\cite{AMS17}.}\label{tab:tab}
\end{table}

\section{Results in the Labelled Setting}\label{sec:labelled}

We now move on to the presentation of our results. We start with the
\textit{labelled} setting of probabilistic graph
homomorphism, in which no restriction is placed on the signature $\sigma$ of the query and instance graphs.
Our results are summarized in Table~\ref{tab:labelled}. Note that the \textit{exact} tractability and hardness results shown in \cite{AMS17} (and repeated in Table~\ref{tab:labelled}) actually pertain to a slightly more restricted setting in which the signature $\sigma$ is fixed, different to the setting studied here in which $\sigma$ forms part of the problem input. Fortunately, it is easy to check that all results still hold: for the \SPhardness results, this is immediate, and for the $\PTIME$ tractability results, a careful inspection of the relevant claims (\cite[Proposition~4.10]{AMS17} and \cite[Proposition~4.11]{AMS17}) shows that tractability safely carries over to this more general setting. We can therefore focus on proving (in)tractability of approximation results in the remainder of this paper.

\paragraph*{$\Bowp$ on $\Bdagi$.}
We start by showing the tractability of approximation for $\phoml(\owp, \dagi)$, which also implies tractability of approximation for $\phoml(\owp, \pt)$, since $\pt \subseteq \dagi$. 
\begin{prop}\label{prop:phoml-owp-dagi}
	$\phoml(\owp, \dagi)$ is \SPhard already in data complexity, but it
        admits an FPRAS. This holds even if the input instance is a
        probabilistic arity-two
        database.
\end{prop}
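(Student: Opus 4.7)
The plan is to split the proposition into its two parts. For the \SPhardness, I would inherit it from \cite{AMS17}, where exact PQE is already shown to be \SPhard for certain fixed one-way path queries on DAG instances; this hardness carries over unchanged to our labelled setting. The substantive content is the FPRAS.

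For the FPRAS, my strategy is to construct in polynomial time from $G$ and~$H$ an nOBDD $D$ whose weighted model count under $\pi$ coincides with $\Pr_\pi(G \homo H)$, and then invoke Theorem~\ref{thm:nobdd-weighted-count}. Writing $G = \ledge{a_1}{R_1}{} \cdots \ledge{}{R_{m-1}}{a_m}$, I would fix a topological ordering $u_1 \prec \cdots \prec u_n$ of~$V_H$ and order the edge variables so that all edges with source~$u_k$ precede those with source~$u_{k'}$ whenever $k < k'$. Between consecutive edge variables, $D$ maintains a state drawn from a small set: an initial marker $q_0$; a commitment $q_i^v$ meaning ``we guessed $h(a_i) = v$ and still need to encounter, on a later edge variable, a true $R_i$-labelled edge out of~$v$''; and an accepting marker $q_{\mathrm{acc}}$. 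At each edge variable $e = (u_k, R, v)$, $D$ offers non-deterministic $\lor$-branches: stay in the current state (ignoring the value of~$e$); from $q_0$, if $R = R_1$, commit to~$e$ as the first match edge by requiring $\nu(e) = 1$ and moving to $q_2^v$; from $q_i^{u_k}$, if $R = R_i$, commit to~$e$ as the $i$-th match edge by requiring $\nu(e) = 1$ and moving to $q_{i+1}^v$; reaching $q_{\mathrm{acc}}$ whenever $i+1 = m$. A state $q_i^{u_k}$ that has not fired by the end of its layer routes the path to the $0$-sink, killing that non-deterministic branch.

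By construction each edge variable is tested in only one layer, so $D$ is truly ordered. The hard part will be establishing completeness, namely that whenever $G$ admits a homomorphism into~$H_\nu$ then $D$ accepts~$\nu$: here the crucial observation is that since $H$ is a DAG, any homomorphism image visits vertices in strictly ascending topological order, so its committed edges are encountered in precisely the chosen variable order, letting the non-deterministic simulation follow that homomorphism step by step. Soundness is immediate from the construction. For the size, each edge variable is tested by at most one decision node per active state at its layer, and with $O(m \cdot |V_H|)$ states per layer and $O(|V_H|)$ layers, the total size is polynomial in $\gsize{G} + \gsize{H}$. An FPRAS for $\Pr_\pi(G \homo H)$ then follows by applying Theorem~\ref{thm:nobdd-weighted-count} to~$D$ with weights~$\pi$. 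The extension to probabilistic arity-two databases is immediate, since the construction treats each edge as a distinct Boolean variable and never assumes at most one edge per pair of vertices.
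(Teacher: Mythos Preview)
Your FPRAS argument is essentially the paper's approach: both build, via the topological ordering of the DAG, an nOBDD that is a product of the query path with the instance, and then invoke Theorem~\ref{thm:nobdd-weighted-count}. The paper presents it more directly---it defines $\lor$-nodes $n_{u,i}$ (your states $q_i^u$) and decision nodes $d_{e,i}$, builds one nOBDD per starting vertex $v$ representing $\prov^G_H[a_1 \colonequals v]$, and takes the disjunction over all~$v$ (your state~$q_0$). This yields an explicit $O(\gsize{G}\cdot\gsize{H})$ size bound without the layer bookkeeping; your accounting with ``$O(m\cdot|V_H|)$ states per layer and $O(|V_H|)$ layers'' is a bit loose (it does not obviously cover the per-edge nodes inside each layer), though the polynomial conclusion is still correct.

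There is one actual error: you attribute the \SPhardness to~\cite{AMS17}, but that paper does \emph{not} treat \dagi instances, so you cannot simply inherit it. The paper obtains hardness via the unlabelled case (Proposition~\ref{prop:phomul-owp-dagi}), which reduces from \countppdnf using the standard encoding of the CQ $\exists x\,y~R(x),S(x,y),T(y)$; the resulting instance is a \dagi. You should either reproduce that reduction or point to the correct source.
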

For \SPhardness, the result already holds in the unlabelled setting, so it
will be shown in Section~\ref{sec:unlabelled} (see
Proposition~\ref{prop:phomul-owp-dagi}). Hence, we focus on the upper bound.
We rely on the notion of a \emph{topological ordering} of the edges of a
directed acyclic graph $H = (V, E)$: it is simply a strict total order $(E,
\prec)$ with the property that for every consecutive pair of edges $e_1 =
(a_1,a_2)$ and $e_2 = (a_2,a_3)$, we have that $e_1 \prec e_2$. Let us fix such
an ordering.
\begin{proof}[Proof of Proposition~\ref{prop:phoml-owp-dagi}]
  We will show that every \owp query on a \dagi instance (possibly an
  arity-two database, i.e., with multiple edges having the same
  endpoints) admits an nOBDD
	representation of its provenance, which we can compute in combined
        polynomial time. We can then apply
        Theorem~\ref{thm:nobdd-weighted-count}, from which the result follows.
        Let $G = \ledge{a_1}{R_1}{} \ledge{\dots}{R_{m}}{a_{m+1}}$ be the input
        $\owp$ query, and $H = (V, E)$ the instance arity-two database.
	We make the following claim:
	\begin{clm}
          For every $v \in V$, we can compute in time $O(\gsize{G} \times \gsize{H})$ an nOBDD
          representing $\provl^G_H[a_1 \colonequals v]$ which is ordered by the
          topological ordering $\prec$ fixed above.
	\end{clm}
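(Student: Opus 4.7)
The plan is to construct the nOBDD via a product-like construction between the one-way path query $G$ and the instance DAG $H$. Concretely, I first define an auxiliary layered DAG $\mathcal{A}$ whose vertices are pairs $(u, i)$ for $u \in V$ and $i \in \{1, \dots, m+1\}$, and whose edges are, for every $H$-edge $e = (x, y)$ with label $R$ and every $i \in \{1, \dots, m\}$ with $R = R_i$, an edge from $(x, i)$ to $(y, i+1)$ labelled by the variable~$e$. By construction, the homomorphisms from $G$ to $H_\nu$ mapping $a_1$ to~$v$ are in bijection with the $\mathcal{A}$-paths from $(v, 1)$ to some $(u, m+1)$ whose labels (which are $H$-edges) are all set to~$1$ by~$\nu$.

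Next, I translate $\mathcal{A}$ into an nOBDD: for each $\mathcal{A}$-vertex $(u, i)$, create an $\lor$-node $\ell_{u, i}$; for each $\mathcal{A}$-edge from $(x, i)$ to $(y, i+1)$ labelled by an $H$-edge~$e$, create a decision node~$d$ labelled by variable~$e$ with its $1$-edge going to $\ell_{y, i+1}$ and its $0$-edge to the $0$-sink, and add an unlabelled edge from $\ell_{x, i}$ to~$d$; set the root to $\ell_{v, 1}$, add an edge from each $\ell_{u, m+1}$ to the $1$-sink, and give a single edge to the $0$-sink to each $\lor$-node that would otherwise have no successor. A root-to-$1$-sink path compatible with a valuation~$\nu$ then corresponds exactly to an $\mathcal{A}$-path from $(v, 1)$ to some $(u, m+1)$ all of whose $H$-edge labels are set to~$1$ by~$\nu$, establishing correctness of the representation of $\provl^G_H[a_1 \colonequals v]$.

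Finally, I verify that $\prec$ is a valid nOBDD order and bound the size. For the ordering, along any root-to-sink path the decision node labels $e'_1, \dots, e'_k$ form a directed path in~$H$, so applying the topological-order condition transitively to consecutive pairs gives $e'_1 \prec e'_2 \prec \cdots \prec e'_k$. For the size, after restricting to the $\mathcal{A}$-vertices reachable from $(v, 1)$, the total number of nodes and edges is $O(\gsize{H} \cdot \gsize{G})$, and the construction can be carried out in this time by a single pass over~$H$, indexing edges by their labels.

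The only mild subtleties I expect are (i) to verify that the topological condition on~$\prec$, which is stated only for directly consecutive pairs of edges sharing a vertex, extends transitively to arbitrary $H$-paths, which is immediate from transitivity of the strict total order~$\prec$; and (ii) to observe that $H$-edges whose label does not appear in~$G$ simply do not arise in~$\mathcal{A}$, hence do not occur as decision nodes in the nOBDD, so that their valuation does not influence acceptance, as required by the semantics of provenance. The extension to arity-two databases is transparent, as distinct parallel edges with the same endpoints give rise to distinct (independent) variables and to distinct $\mathcal{A}$-edges.
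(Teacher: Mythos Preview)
Your proposal is correct and takes essentially the same approach as the paper: a product construction between the positions of the \owp query and the vertices of~$H$, with $\lor$-nodes for the pairs and decision nodes for the edges, yielding an nOBDD ordered by~$\prec$. The only cosmetic differences are that the paper merges your final layer $\ell_{u,m+1}$ into a direct $1$-edge from the last decision nodes to the $1$-sink, and does not explicitly handle dead-end $\lor$-nodes as you do; your added remarks on transitivity of~$\prec$, on labels not occurring in~$G$, and on arity-two databases are all sound and make the argument slightly more explicit than the paper's.
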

	\begin{proof}
          We build an nBDD $D$ consisting of the two sinks and of the following nodes:
          \begin{itemize}
            \item $|V| \times \gsize{G}$ $\lor$-nodes written $n_{u,i}$ for $u \in V$ and $1 \leq
          i \leq m$; and
        \item  $|E| \times \gsize{G}$ decision nodes written $d_{e,i}$ for
          $e \in E$ and $1 \leq
          i \leq m$ which test the edge~$e$.
          \end{itemize}
          Each $\lor$-node $n_{u,i}$ for $u \in V$ and $1 \leq i \leq m$ has
          outgoing edges to each $d_{e,i}$ for every edge~$e$ emanating from~$u$
          which is labelled $R_i$. For each decision node $d_{e,i}$, letting $w$ be the target
          of edge~$e$, then $d_{e,i}$ has an outgoing $0$-edge to the
          $0$-sink and an outgoing $1$-edge to either $n_{w,i+1}$ if $i<m$ or to
          the $1$-sink if $i=m$. The root of the nBDD is the node $n_{v,1}$.

          This construction clearly respects the time bound.
          To check correctness of the resulting nBDD, it is immediate to observe
          that, for any
          path from the root to a sink, the sequence of decision nodes traversed
          is of the form $d_{e_1, 1}, \ldots, d_{e_k, k}$ where the $e_1,
          \ldots, e_k$ form a path of consecutive edges starting at~$v$ and
          successively labelled $R_1, \ldots, R_k$. This implies that the nBDD is
          in fact an nOBDD ordered by~$\prec$. Further, such a path reaches the
          $1$-sink iff $k=m$ and all decisions are positive, which implies that
          whenever the nOBDD accepts a subgraph $H'$ of~$H$ then indeed
          there is a match of~$G$ in~$H'$
          that maps $a_1$ to~$v$.
          For the converse
          direction, we observe that, for any subgraph $H'$ of~$H$ 
          such that there is a match of~$G$ in~$H'$
          mapping $a_1$ to~$v$, then, letting $e_1, \ldots, e_m$ be
          the images of the successive edges of~$G$ in~$H'$, there is a path
          from the root of~$D$ to the $1$-sink which tests these edges in order.
          This establishes correctness and concludes the proof of the claim.
	\end{proof}
	Now observe that $\provl^G_H = \provl^G_H[a_1 \colonequals v_1]\ \lor\
	\cdots\ \lor\ \provl^G_H[a_1 \colonequals v_n]$, where $v_1, \dots, v_n$ are precisely the vertices of $H$. Thus, it suffices to simply take the disjunction of each nOBDD obtained using the process above across every vertex in $H$, which yields in linear time the desired nOBDD.
	From here we can apply Theorem~\ref{thm:nobdd-weighted-count}, concluding the proof. %
\end{proof}
\paragraph*{$\Bowp$ on Arbitrary Graphs.}
We show, however, that tractability of approximation does \textit{not} continue to hold when relaxing the instance class from $\dagi$ to arbitrary graphs. This also implies that more expressive classes of query graphs---such as \twp, \dwt, and \pt also cannot be tractable to approximate on instances in the class \all.

\begin{prop}\label{prop:phoml-owp-all}%
	$\phoml(\owp, \all)$ does not admit an FPRAS unless $\RP=\NP$.
        This holds even for a fixed signature consisting of two labels.
\end{prop}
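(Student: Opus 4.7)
The plan is to reduce an NP-hard decision problem to approximating $\phoml(\owp, \all)$, so that a hypothetical combined FPRAS would imply $\RP = \NP$. Starting from a 3-CNF formula $\phi$ with $n$ variables $x_1, \ldots, x_n$ and $m$ clauses $C_1, \ldots, C_m$, the plan is to construct in polynomial time a one-way path query $G_\phi$ and a probabilistic labelled graph $(H_\phi, \pi_\phi)$, both over the two-letter signature $\{R, S\}$, with the property that $\Pr_{\pi_\phi}(G_\phi \homo H_\phi) = 0$ whenever $\phi$ is unsatisfiable and $\Pr_{\pi_\phi}(G_\phi \homo H_\phi) \geq 1/\poly(|\phi|)$ whenever $\phi$ is satisfiable. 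Running an assumed FPRAS on this instance with some constant precision (say $\epsilon = 1/2$) would then distinguish the two cases with probability at least $3/4$, yielding a randomized polynomial-time decision procedure for 3-SAT.

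At a high level, $H_\phi$ will be assembled as $m$ clause gadgets chained by deterministic (probability-$1$) separator edges. Each clause gadget for $C_j$ will offer one walk route per literal of $C_j$, and all routes within a single gadget will be designed to carry the same label sequence over $\{R, S\}$, so that the single fixed query $G_\phi$---defined as the concatenation of these per-gadget label sequences together with a marker prefix forcing walks to start at a designated entry vertex---can be matched by a walk precisely when some literal of every clause is ``realised'' in the sampled subgraph. For each variable $x_i$, designated probabilistic literal edges of probability $1/2$ will be attached to $H_\phi$ and shared across the gadgets corresponding to clauses in which $x_i$ occurs, so that their presence in the random subgraph encodes an assignment to~$x_i$.

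The hard part will be that walk-existence is a monotone Boolean function of the edges of~$H$, whereas 3-CNF satisfiability with negative literals is not: in particular, the naive strategy of associating to every variable two independent probabilistic edges (one per polarity) yields a monotone encoding that corresponds only to counting the satisfying assignments of a larger, always-satisfiable monotone formula, and does not preserve unsatisfiability. Overcoming this obstacle will require a careful pairing of literal edges together with the routing within the clause gadgets, exploiting the cycles and multiple parallel labelled edges allowed by \all to enforce the consistency of variable truth values across clauses. The resulting analysis then needs to verify that $\Pr_{\pi_\phi}(G_\phi \homo H_\phi)$ exhibits the required zero-versus-inverse-polynomial gap between unsatisfiable and satisfiable formulas, from which the claimed FPRAS-hardness under $\RP \neq \NP$ follows.
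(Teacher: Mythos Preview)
Your proposal has a fundamental gap that you in fact identify yourself but then dismiss too quickly: monotonicity is fatal to the zero-versus-nonzero strategy. For any query $G$ and probabilistic instance $(H,\pi)$, we have $\Pr_\pi(G\homo H)>0$ if and only if $G\homo H^{+}$, where $H^{+}$ is the (deterministic) subgraph of edges with positive probability. This is because provenance is monotone: if any positive-probability subgraph admits a homomorphism from~$G$, then so does~$H^{+}$; conversely if $G\homo H^{+}$ then that subgraph has positive probability. Now for a \owp query~$G$, deciding $G\homo H^{+}$ is just asking whether $H^{+}$ contains a walk with the prescribed label sequence, which is polynomial-time in $\gsize{G}+\gsize{H}$. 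Hence, if your reduction existed, one could decide 3-SAT in polynomial time by constructing $(G_\phi,H_\phi,\pi_\phi)$ and testing whether $G_\phi\homo H_\phi^{+}$---no FPRAS needed. So the intended reduction cannot be correct unless $\mathsf{P}=\NP$, independently of how clever the ``pairing of literal edges'' and ``routing'' are. No amount of gadgetry inside~$H_\phi$ can make provenance non-monotone.

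The paper avoids this obstacle by reducing not from a decision problem but from a counting problem that is already hard to approximate. Concretely, it uses Sly's result that counting independent sets in graphs of maximum degree~$6$ has no FPRAS unless $\RP=\NP$; equivalently, approximating the number of satisfying assignments of a \emph{monotone} 2-CNF with each variable in at most~$6$ clauses is hard. Because the target formula is monotone, it can be represented exactly as the provenance $\prov^{G_\phi}_{H_\phi}$ of a \owp query on a two-label instance (with the variable edges set to probability~$1/2$ and all other edges to probability~$1$), and an FPRAS for $\phoml(\owp,\all)$ would then yield an FPRAS for the count. The point is that the hardness is in the \emph{approximate counting}, not in a zero/nonzero gap; this is what makes the approach compatible with monotonicity.
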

\begin{proof}
	Our result hinges on the following claim:
	\begin{clm}
	\label{clm:2cnf2prov}
	Let $d > 1$ be a constant and let $\sigma$ be a fixed signature with at least two labels.
	Given a monotone 2-CNF formula $\phi$ on $n$ variables where each variable occurs in at most $d$ clauses, we can build in time $O(|\phi|)$ a \owp $G_\phi$ and graph $H_\phi$ in the class \all and over signature~$\sigma$ containing edges $(e_1, \ldots, e_n)$ such that $\prov^{G_\phi}_{H_\phi}$ represents $\phi$
	on~$(e_1, \ldots, e_n)$.
	\end{clm}
	\begin{proof}
	Let $\phi = \bigwedge_{1 \leq i \leq m} (X_{f_1(i)} \lor X_{f_2(i)})$ be
          the input CNF instance over the variables $\{X_1, \dots, X_n\}$,
          where $m > 0$ is the number of clauses. As we are in the labelled setting, 
          let $U$ and $R$ be two distinct labels from the signature. Define the \owp query graph $G_\phi$ to be $\ledge{}{U}{} \ledge{}{U}{} \left( \ledge{}{R}{}^{d+2} \ledge{}{U}{} \right)^m \ledge{}{U}{}$.	
	The instance graph $H_\phi$ in the class \all is defined in the following way:
	\begin{itemize}
		\item For all $1 \leq i \leq n$, add an edge $\ledge{a_i}{R}{b_i}$.
		\item Add an edge $\ledge{c_0}{U}{d_0}$ and for each clause $1 \leq j \leq m$, an edge $\ledge{c_j}{U}{d_j}$.
                \item Add two edges $\ledge{c_0'}{U}{c_0}$ and
                  $\ledge{d_m}{U}{d_m'}$
		\item For each clause $1 \leq j \leq m$ and variable $X_i$ occurring in that clause, let $p$ be the number of this occurrence of $X_i$ in the formula (\ie, the occurrence of $X_i$ in the $j$-th clause is the
		$p$-th occurrence of $X_i$), with $1 \leq p \leq d$ by assumption on $\phi$.
		Then add a path of length $p$ of $R$-edges from $d_{j-1}$ to $a_i$ and a
		path of length $(d+1)-p$ of $R$-edges from $b_i$ to $c_j$.
	\end{itemize}
          The construction of $G_\phi$ and $H_\phi$ is in $O(|\phi|)$.
          Furthermore, notice the following~($\star$). For any $1 \leq i \leq n$, the
          edge $e = \ledge{a_i}{R}{b_i}$
	has at most $d$ incoming $R$-paths and $d$ outgoing $R$-paths; the
          outgoing paths have pairwise distinct \emph{length} (\ie, the number
          of edges until the next edge is a $U$-edge), and likewise for the incoming
	paths. What is more, each incoming $R$-path of length $p$ corresponds to an
	outgoing path of length $(d+1)-p$ and together they connect some $d_{j-1}$ to
	some $c_j$ via the edge~$e$, where the $j$-th clause contains variable $X_i$.

          Moreover, notice the following~($\star\star$): the only two pairs
          of contiguous $U$-edges are $\ledge{c_0'}{U}{} \ledge{c_0}{U}{d_0}$
          and $\ledge{c_m}{U}{} \ledge{d_m}{U}{d_m'}$.

	Now, define $(e_1, \dots, e_n)$ to be precisely the edges of the form
          $\ledge{a_i}{R}{b_i}$ for every $1 \leq i \leq n$. Intuitively, the presence
          or absence of each of these edges corresponds to the valuation of each variable in $\phi$.
	We claim that $\prov^{G_\phi}_{H_\phi}$ represents $\phi$
	on~$(e_1, \ldots, e_n)$. 
          Call a subgraph of~$H_\phi$ a \emph{possible world} if it contains all the edges not in
          $(e_1, \dots, e_n)$ (as these are fixed to~$1$).
          It will suffice to show that there is a
          bijection between the satisfying valuations of $\phi$ and the possible
          worlds of $H_\phi$ that admit a homomorphism from $G_\phi$.

          Indeed, consider the bijection defined in the obvious way: keep the edge $\ledge{a_i}{R}{b_i}$ iff $X_i$ is assigned to true in the valuation. First suppose that some valuation of $\{X_1, \dots, X_n\}$ satisfies $\phi$. Then,
          for each clause $1 \leq j \leq m$,
          there is a variable in the clause which evaluates to true.
	We build a match of $G_\phi$ on the corresponding possible world of
          $H_\phi$ as follows:
          \begin{itemize}
            \item map the leftmost $U$-edge to $\ledge{c_0'}{U}{c_0}$,
            \item map the rightmost $U$-edge to $\ledge{d_m}{U}{d_m'}$,
            \item for the other $U$-edges, map the
          $j$-th of these $U$-edges to $\ledge{c_j}{U}{d_j}$ for all $0 \leq j \leq m$,
        \item map the $R$-paths for each $1 \leq j \leq m$ by picking a variable $X_i$
	witnessing that the clause is satisfied and going via the path of length
	$1 + \left( p \right) + \left( \left( d+1 \right) - p \right) = d+2$ that uses the edge $\ledge{a_i}{R}{b_i}$, which is present by
	assumption. %
          \end{itemize}
	
	Conversely, assume that we have a match of $G_\phi$ on a possible world of $H_\phi$. We show that the corresponding valuation satisfies $\phi$.
          It is an easy consequence of ($\star\star$) that the first $U$-edge
          must be mapped to $\ledge{c_0'}{U}{c_0}$, so the second $U$-edge must
          be mapped to $\ledge{c_0}{U}{d_0}$. 
          Let us show by finite
          induction on $0 \leq j \leq m$ that the $(j+2)$-th $U$-edge must be
          mapped to $\ledge{c_j}{U}{d_j}$ and that if $j\geq 1$ the $j$-th
          clause is satisfied. The base case of $j=0$ is clear because the second
          $U$-edge is mapped to $\ledge{c_0}{U}{d_0}$.

        Let us take $j\geq 1$ and show the induction step. By induction
        hypothesis,
        the $(j+1)$-th $U$-edge is
        mapped to $\ledge{c_{j-1}}{U}{d_{j-1}}$. Now,
        the $R$-path that
        follows must be mapped to a path from $d_{j-1}$ to some $a_i$, and then take the
	edge $\ledge{a_i}{R}{b_i}$, whose presence witnesses that the corresponding
	variable $X_i$ is true. But importantly, in order for the path to have
	length precisely $d+2$ before reaching another $U$-edge, it must be the case that
	the length of the path before and after the edge $\ledge{a_i}{R}{b_i}$ sums up to
	$d+1$. As a result of ($\star$), this is only possible by taking a path that leads to
	$\ledge{c_{j}}{U}{d_{j}}$.
        Thus we know that variable $X_i$ occurs in the $j$-th
        clause so that this clause is satisfied, and we know that the $(j+2)$-th
        $U$-edge is mapped to $\ledge{c_{j}}{U}{d_{j}}$. This establishes the
        induction step. Thus, the inductive proof establishes that all clauses
        are satisfied. This establishes the converse direction of the
        correctness claim, and concludes the proof.
	\end{proof}
        We can now conclude the proof of Proposition~\ref{prop:phoml-owp-all}.
	By \cite[Theorem~2]{S10}, counting the independent sets of a graph of maximal degree $6$ admits an FPRAS only if $\RP = \NP$. 
	It is not hard to see that this problem is equivalent to counting satisfying assignments of a monotone 2-CNF formula in which a variable can appear in up to $6$ clauses (see, for example, \cite[Proposition~1.1]{LL15}). 
	Thus, given a monotone 2-CNF formula $\phi$ obeying this restriction, we can apply Claim~\ref{clm:2cnf2prov} above for the class of formulas in which $d = 6$ to obtain (deterministic) graphs $G_\phi$ and $H_\phi$ on a fixed signature $\sigma$ with two labels, and then build a probabilistic graph $H_\phi'$ identical to $H_\phi$, in which the edges $(e_1, \dots, e_n)$ are assigned probability $0.5$ and all other edges probability $1$. Now, any FPRAS for the probabilistic graph homomorphism problem on~$G_\phi$ and~$H_\phi'$ would give an approximation of $N/2^n$, where $N$ is the number of satisfying assignments of~$\phi$ and $n$ is the number of variables of~$\phi$. Simply multiplying the result of the FPRAS by $2^n$ would give us an FPRAS to approximate~$N$, and so would imply $\RP = \NP$, concluding the proof.
\end{proof}

\paragraph*{$\Bdwt$ on $\Bdwt$.}
Having classified the cases of one-way path queries (\owp) on all instance
classes considered, we turn to more expressive queries. The next two query classes to consider are two-way path queries ($\twp$) and downward trees
queries ($\dwt$). For these query classes, exact computation on $\twp$ instances is
tractable by~\cite{AMS17}, so the first case to classify is that of \dwt instances. Exact computation is intractable in this case
by~\cite{AMS17}, and we show here that, unfortunately, approximation is
intractable as well, so that the border for exact tractability coincides with
that for approximate tractability.
We first focus on $\dwt$ queries:

\begin{prop}\label{prop:phoml-dwt-dwt}
	$\phoml(\dwt, \dwt)$ does not admit an FPRAS unless $\RP=\NP$.
This holds even for a fixed signature consisting of two labels.
\end{prop}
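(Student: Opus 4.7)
The plan is to reduce from counting satisfying assignments of monotone 2-CNF formulas with bounded variable occurrence, the same starting point as Proposition~\ref{prop:phoml-owp-all}: by \cite[Theorem~2]{S10}, this problem admits no FPRAS unless $\RP = \NP$. The heart of the argument is a claim analogous to Claim~\ref{clm:2cnf2prov}, namely that for every monotone 2-CNF $\phi$ on $n$ variables where each variable occurs in at most $d$ clauses, we can build in polynomial time a \dwt query $G_\phi$ and a \dwt instance $H_\phi$ over a fixed 2-label signature, together with distinguished edges $(e_1, \ldots, e_n)$ of $H_\phi$, such that $\prov^{G_\phi}_{H_\phi}$ represents $\phi$ on $(e_1, \ldots, e_n)$. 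Once this is in place, the reduction closes exactly as in Proposition~\ref{prop:phoml-owp-all}: assigning $\pi(e_i) = 1/2$ to each variable edge and $\pi = 1$ to all other edges yields $\Pr_\pi(G_\phi \homo H_\phi) = N/2^n$, where $N$ is the number of satisfying assignments of $\phi$, so any FPRAS for $\phoml(\dwt, \dwt)$ would give an FPRAS for~$N$ and contradict Sly's theorem.

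\paragraph*{Construction sketch.}
For the construction, I would take $G_\phi$ to be a downward tree whose root $q_0$ branches into $m$ ``clause-checker'' subtrees $Q_1, \ldots, Q_m$, one per clause. Correspondingly, I would build $H_\phi$ with $m$ ``clause subtrees'' $T_1, \ldots, T_m$ hanging off its root, and use the two available labels to encode the clause index $j$ in binary along the paths from the roots of $G_\phi$ and $H_\phi$ to $Q_j$ and $T_j$ respectively, forcing $Q_j$ to map only inside~$T_j$. Inside $T_j$, I would present the two literals of $C_j = X_{f_1(j)} \vee X_{f_2(j)}$ as sibling branches accessible via a common label prefix, so that the single path hanging below the root of $Q_j$ can be routed to either branch by the non-injectivity of tree homomorphisms; this produces the desired \emph{or} semantics per clause. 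The distinguished variable edges $(e_1, \ldots, e_n)$ would sit at the ends of these literal branches, with carefully tuned label paths ensuring that only the two variables actually mentioned in $C_j$ can witness its clause-checker~$Q_j$.

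\paragraph*{Main obstacle.}
The central difficulty---and the key structural departure from Proposition~\ref{prop:phoml-owp-all}---is that a \dwt instance disallows the edge-sharing used there: every edge in a tree has a unique path from the root, so a naive construction would represent a variable $X_i$ occurring in $k_i$ clauses by $k_i$ independent ``literal edges'' (one per occurrence), at which point the induced provenance would no longer be~$\phi$ but a variable-disjoint formula whose satisfaction probability collapses to a polynomial-time computable product. Overcoming this requires placing the $n$ variable edges in a single shared ``variable region'' of $H_\phi$, and routing all clause subtrees through this region via distinctive label paths; the $(d+2)$-length padding trick from Claim~\ref{clm:2cnf2prov} is then used, adapted to the tree setting, to force each $Q_j$ to terminate exactly on one of the two variable edges of its literals (and no other), thereby avoiding spurious homomorphisms. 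Correctness of the resulting representation would then be established, as in Claim~\ref{clm:2cnf2prov}, by a two-sided inductive bijection between the satisfying assignments of $\phi$ and the possible worlds of $H_\phi$ admitting a homomorphism from $G_\phi$, carried out along the clause ordering by tracking the image of each clause-checker in turn.
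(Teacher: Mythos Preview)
Your overall reduction plan is right, and you correctly isolate the obstacle: in a \dwt instance there is no way for distinct clause subtrees to share a variable edge. But your proposed fix does not close the gap. You write that you will keep $m$ ``clause subtrees'' $T_1,\ldots,T_m$ below the root of $H_\phi$ \emph{and} route them through a shared variable region. In a downward tree these two things are incompatible: once $T_1,\ldots,T_m$ are distinct subtrees hanging off the root, they are pairwise disjoint and cannot converge on any common edge. So the construction, as described, cannot simultaneously (i) force $Q_j$ into a clause-specific region $T_j$ via a binary prefix and (ii) have the variable edges shared across clauses. One of the two has to give, and you have not said which, or how.

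The paper resolves this by inverting the roles of clauses and variables in the instance. In $H_\phi$ the root's children are the \emph{variables} $x_1,\ldots,x_n$, and the edges $z'\!\to x_i$ are the distinguished edges $e_i$; below each $x_i$ hang paths of length $L=\lceil\log_2 m\rceil$ whose label sequences encode, in binary, exactly the indices of the clauses containing $X_i$. The query $G_\phi$ stays organised by clauses: the root has children $c_1,\ldots,c_m$, and below $c_j$ sits a single length-$L$ path spelling out~$j$. A homomorphism must send the query root to the instance root (it is the only vertex with outgoing depth $L{+}1$), so each $c_j$ goes to some $x_i$ whose edge $e_i$ is present, and the path below $c_j$ matches iff $X_i$ occurs in clause~$j$. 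This gives exactly the OR-of-two-literals semantics per clause, with each variable edge appearing once.

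Two further points. First, this construction needs no bounded-occurrence hypothesis and no $(d{+}2)$-padding; that trick was specific to the \all-instance setting of Claim~\ref{clm:2cnf2prov}, where one had to disambiguate re-entries into a non-tree graph. Here the tree structure and binary labelling already pin down the mapping. Second, the size is $O(|\phi|\log|\phi|)$ rather than $O(|\phi|)$, which is harmless for the FPRAS reduction but matters later in Section~\ref{sec:dnnf-bounds} where only a moderately exponential DNNF lower bound is obtained for this pair.
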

\begin{proof}
        Our result hinges on the following, whose proof adapts~\cite[Proposition~2.4.3]{M18}:
        \begin{clm}
        	\label{clm:2cnf2prov-dwt}
	Let $\sigma$ be a fixed signature with at least two labels.
        	Given a monotone 2-CNF formula $\phi$ on $n$ variables, we can build in time 
                $O(|\phi| \log |\phi|)$
                \dwt graphs $G_\phi$ and $H_\phi$ over~$\phi$, with the latter containing edges $(e_1, \ldots, e_n)$ such that $\prov^{G_\phi}_{H_\phi}$ represents $\phi$
        	on~$(e_1, \ldots, e_n)$.
        \end{clm}

        \begin{proof}
		Let $\phi = \bigwedge_{1 \leq i \leq m} (X_{f_1(i)} \lor X_{f_2(i)})$
                be the input CNF instance over the variables $\{X_1, \dots,
                X_n\}$. We let $L = \lceil \log_2 m \rceil$ be the number of
                bits needed to write clause numbers in binary.
		As we are in the labelled setting, let $0$ and $1$ be two
                distinct labels from the signature.
		Construct the query graph $G_\phi$ as having a root $z$ and child nodes $c_1, \ldots, c_m$ corresponding to the clauses and having a child path of length~$L$ labelled by the clause number. Formally:
		\begin{itemize}
			\item For all $1 \leq i \leq m$, add an edge $\ledge{z}{0}{c_i}$.
            \item For each $1 \leq i \leq m$, letting $b_1 \cdots b_L$ be the clause number $i$ written in binary, add a path of $L$ edges 
                          $\ledge{c_{i}}{b_1}{d_{i,1}} \ledge{}{b_2}{\ldots}
                          \ledge{}{b_{L-1}}{} \ledge{d_{i,L-1}}{b_L}{d_{i,L}}$.
		\end{itemize}
        Now, construct the $\dwt$ instance $H_\phi$ as having a root $z'$ and child nodes $x_1, \ldots, x_n$ corresponding to the variables, each variable node $x_i$ having child paths of length $L$ labelled by the numbers of the clauses made true when setting $x$ to true. Formally:
	\begin{itemize}
			\item For all $1 \leq i \leq n$, add the edges $\ledge{z'}{0}{x_i}$.%
			\item For all $1 \leq i \leq n$ and $1 \leq j \leq m$ such that $X_i$ occurs in the $j$-th clause of $\phi$ (\ie, $i = f_1(j)$ or $i = f_2(j)$),
                          letting $b_1 \cdots b_L$ be the clause number $j$ written in binary, add a path of $L$ edges
$\ledge{x_i}{b_1}{y_{i,j,1}} \ledge{}{b_2}{\ldots} \ledge{}{b_{L-1}}{} \ledge{y_{i,j,L-1}}{b_L}{y_{i,j,L}}$.
	\end{itemize}
          It is clear that $G_\phi \in \dwt$, $H_\phi \in \dwt$, and that both
          graphs can be built in time $O(|\phi| \log |\phi|)$. Now, define $(e_1, \dots, e_n)$ to be 
          the edges of the form $\ledge{z'}{0}{x_i}$ for every $1 \leq i \leq n$.
	
	We claim that $\prov^{G_\phi}_{H_\phi}$ represents $\phi$
	on~$(e_1, \ldots, e_n)$.
          Calling again a \emph{possible world} a subgraph of~$H_\phi$ that
          contains all the edges not in $(e_1, \dots, e_n)$ (as these are fixed
          to~$1$), 
          it suffices to show that there is a bijection between the satisfying
          valuations $\nu$ of $\phi$ and the possible worlds of $H_\phi$ that admit a homomorphism from $G_\phi$.
          Indeed, consider the bijection defined in the obvious way: keep the edge $\ledge{z'}{T}{x_i}$ iff $X_i$ is assigned to true in the valuation.
    First, if there is a homomorphism from $G_\phi$ to a possible world $H_\phi$, then the root $z$ of the query must be mapped to~$a$ (since this is the only element with outgoing paths of length $L+1$ as prescribed by the query), and
          then it is clear that the image of any such homomorphism must take the form of a \dwt instance that contains, for each clause number $1 \leq i \leq m$, a path of length $L$ representing this clause number. This witnesses that the valuation~$\nu$ makes a variable true which satisfies clause~$i$. Hence, $\nu$ is a satisfying assignment of~$\phi$.
          Conversely, for every satisfying assignment $\nu$, considering the corresponding possible world of~$H_\phi$, we can construct a homomorphism mapping the edges of $G_\phi$ to the edges of $H_\phi$, by mapping the path of every clause to a path connected to a variable that witnesses that this clause is satisfied by~$\nu$.%
	\end{proof}
	The result then follows by an argument analogous to the one in Proposition~\ref{prop:phoml-owp-all}.
\end{proof}

\paragraph*{$\Btwp$ on $\Bdwt$.}
We then move to $\twp$ queries:
\begin{prop}\label{prop:phoml-twp-dwt}
	$\phoml(\twp, \dwt)$ does not admit an FPRAS unless $\RP=\NP$.
        This holds even for a fixed signature consisting of two labels.
\end{prop}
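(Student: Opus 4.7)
My plan is to reduce from the approximate counting of monotone 2-CNF formulas in which each variable appears in at most six clauses (as in Proposition~\ref{prop:phoml-owp-all}), and to reuse the $\dwt$ instance $H_\phi$ of Claim~\ref{clm:2cnf2prov-dwt} verbatim: a root $z'$ with one $0$-labelled edge $e_i := \ledge{z'}{0}{x_i}$ per variable, and below each $x_i$, for every clause $j$ containing $X_i$, a length-$L$ path of bit-labelled edges spelling out $j$ in binary, where $L = \lceil \log_2 m \rceil$. The challenge is to replace the $\dwt$ query of Claim~\ref{clm:2cnf2prov-dwt} by a two-way path. My idea is to ``unfold'' the tree query as a walk that, for each clause $j$, descends the sequence $\ledge{}{0}{}\ledge{}{b_1^j}{}\cdots\ledge{}{b_L^j}{}$ and then re-ascends via $\redge{}{b_L^j}{}\cdots\redge{}{b_1^j}{}\redge{}{0}{}$; concatenating these $m$ ``bounce gadgets'' end-to-end, sharing consecutive endpoints, yields the $\twp$ query $G_\phi$ in time $O(|\phi|\log|\phi|)$.

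The forward direction of the correctness claim will be routine: from a satisfying assignment, each gadget is realized by descending from $z'$ through a witnessing true-variable edge $e_i$, traversing the clause-$j$ path rooted at $x_i$, and returning along the same edges. The main obstacle will be the converse: I must show that every homomorphism from $G_\phi$ to a possible world of $H_\phi$ behaves in this ``tree-respecting'' manner despite $G_\phi$ being a mere path. My plan is a \emph{bouncing} argument that exploits the fact that in the downward tree $H_\phi$, every non-root vertex has exactly one incoming edge while the root $z'$ has none. At the deepest vertex of a gadget, the two incoming query edges (the final down-edge and the first up-edge, both labelled $b_L^j$) are forced to map to the \emph{same} tree edge, pinning together the images of the two flanking query vertices. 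I would then iterate this argument inward-to-outward, noting that each intermediate image needs at least one incoming edge and so cannot be $z'$, and concluding that the two endpoints of each gadget collapse to a common image; since consecutive gadgets share endpoints, all gadget boundaries collapse to a single vertex $w$.

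It then remains to pin down that $w = z'$. For each $j$, the first half of the $j$-th gadget forces $w$ to admit an outgoing directed path of length $L+1$ with labels $0, b_1^j, \ldots, b_L^j$. In $H_\phi$, however, outgoing directed paths have length at most $L$ from any variable node $x_i$ and at most $L-k$ from any internal clause-path vertex $y_{i,j,k}$; only $z'$ supports paths of length $L+1$. Hence $w = z'$, and realizing the gadget for clause $j$ forces some edge $e_i$ to be kept (i.e., $X_i$ true) with $X_i$ appearing in clause $j$, exactly the condition for the induced valuation to satisfy $\phi$. This would show that $\prov_{H_\phi}^{G_\phi}$ represents $\phi$ on $(e_1,\ldots,e_n)$, and the proposition then follows by the same reduction as in Proposition~\ref{prop:phoml-owp-all}.
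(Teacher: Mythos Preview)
Your proposal is correct and follows essentially the same approach as the paper: you unfold the $\dwt$ query of Claim~\ref{clm:2cnf2prov-dwt} into a $\twp$ via an Euler-tour-style traversal, which is precisely what the paper does through its general-purpose Lemma~\ref{lem:dwt2twp} (transforming any $\dwt$ query into an equivalent $\twp$ query on $\dwt$ instances) before invoking Proposition~\ref{prop:phoml-dwt-dwt}. Your bottom-up ``bouncing'' argument for the converse direction is the specialization of the paper's inductive proof of that lemma, both hinging on the unique-parent property of $\dwt$ instances.
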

This result follows from a general reduction technique from $\dwt$ queries on $\dwt$
instances to $\twp$ queries on $\dwt$ instances, which allows us to conclude
using the result already shown on $\dwt$ queries (Proposition~\ref{prop:phoml-dwt-dwt}). We
note that this technique could also have been used to simplify the proofs of
hardness of exact computation in~\cite{AMS17} and~\cite{amarilli2017combined}.
We claim:
\begin{lem}\label{lem:dwt2twp}
  For any $\dwt$ query~$G$, we can compute in time~$O(\gsize{G})$ a $\twp$ query $G'$
	which is equivalent to $G$ on $\dwt$ instances: for any $\dwt$ $H$, there is a
	homomorphism from~$G$ to~$H$ iff there is a homomorphism from~$G'$ to~$H$.
\end{lem}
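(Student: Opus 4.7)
The plan is to take $G'$ to be a ``two-way path encoding'' of the Euler tour of $G$. Concretely, starting from the root of~$G$, I would perform a DFS traversal, and build a $\twp$ on fresh vertices as follows: whenever the DFS descends along a tree edge $\ledge{u}{R}{c}$ of~$G$, append to the path a forward edge $\ledge{p}{R}{q}$ on a fresh vertex $q$; whenever the DFS backtracks along this same tree edge, append to the path a reversed edge $\redge{q'}{R}{p'}$ (\ie{}\ the edge $\ledge{p'}{R}{q'}$ in~$G'$) with a fresh vertex $p'$. Each tree edge of~$G$ is thereby traversed twice, and the resulting $\twp$ has $2\gsize{G}$ edges over $2\gsize{G}+1$ pairwise distinct vertices. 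The construction clearly runs in time $O(\gsize{G})$. I also maintain a map $\pi$ from the vertices of~$G'$ back to the vertices of~$G$ that they represent in the tour.

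For the easy direction, given a homomorphism $h\colon G \to H$, I define $h' \colonequals h \circ \pi$. Every forward edge of~$G'$ of the form $\ledge{p}{R}{q}$ with $\pi(p) = u$, $\pi(q) = c$ maps to $(h(u), h(c)) \in E_H$ with label~$R$ because $\ledge{u}{R}{c}$ is an edge of~$G$; the same argument handles reversed edges. So $h'$ is a homomorphism from~$G'$ to~$H$.

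The harder direction is to recover a homomorphism $h\colon G \to H$ from a homomorphism $h'\colon G' \to H$: the risk is that $h'$ could send two occurrences in~$G'$ of the same node of~$G$ (the entry and the exit of a subtree) to different vertices of~$H$, in which case $h$ would not be well-defined. The key structural observation is that, because $H$ is a $\dwt$, every non-root vertex $v$ of~$H$ has a \emph{unique} incoming edge, whose source is its parent. I will prove by induction on the size of the subtree $T$ of $G$ rooted at a node $u$ that, writing $p_s$ and $p_e$ for the $G'$-vertices marking respectively the entry into~$T$ and the exit from~$T$ in the tour, any homomorphism $h'\colon G' \to H$ satisfies $h'(p_s) = h'(p_e)$. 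The base case (a leaf) is trivial as $p_s = p_e$. For the inductive step, I process the children $c_1, \ldots, c_k$ of~$u$ one by one: by induction the entry and exit vertices of each subtree $T_{c_i}$ have the same image, call it $w_i$; the backward edge $\redge{}{R_i}{}$ closing subtree~$T_{c_i}$ forces $w_i$ to have an incoming $R_i$-edge in~$H$, so $w_i$ is not the root, so its parent in~$H$ is unique; but the forward edge $\ledge{}{R_i}{}$ opening $T_{c_i}$ already identifies this parent as $h'(p_s)$, and the backward edge then identifies it also with the image of the vertex of~$G'$ appearing just after this backward edge. Iterating over the children then forces $h'(p_e) = h'(p_s)$.

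Given this invariant, I define $h(u) \colonequals h'(p)$ for any vertex $p$ of~$G'$ with $\pi(p) = u$; the invariant ensures $h$ is well-defined, and by construction every edge of~$G$ is witnessed by some forward edge of~$G'$ that $h'$ maps correctly, so $h$ is a homomorphism from~$G$ to~$H$. The main obstacle, as sketched above, is the inductive argument that uses the uniqueness of parents in a $\dwt$ instance to ``close the loop'' at each backtracking step; this is exactly the place where the hypothesis that $H \in \dwt$ is used, and the argument would fail if $H$ were allowed to be an arbitrary graph or even a polytree.
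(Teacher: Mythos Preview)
Your proposal is correct and follows essentially the same approach as the paper: both constructions encode $G$ as a $\twp$ via its Euler tour (the paper phrases this as an inductive definition, you as a DFS, but the resulting $G'$ is identical, with $2\gsize{G}$ edges), and both correctness proofs hinge on the same structural induction showing that the entry and exit occurrences of each subtree must map to the same vertex of~$H$, using that every non-root vertex of a $\dwt$ instance has a unique incoming edge. The only cosmetic difference is that the paper packages the easy direction as ``there is a homomorphism $G' \to G$'' rather than via your explicit map~$\pi$, and states the inductive claim so that it simultaneously yields the match of~$G$; your version separates the invariant from the definition of~$h$, which is equally fine (just note that the intermediate occurrences of~$u$ between consecutive children are also shown to coincide with $h'(p_s)$ in your iteration step, so $h$ is indeed well-defined on all of~$\pi^{-1}(u)$).
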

\begin{proof}
	Let $G$ be a $\dwt$ query. We build $G'$ following a tree traversal of~$G$.
	More precisely, we define the translation inductively as follows. If $G$
	is the trivial query with no edges, then we let the translation of~$G$ be
        the trivial query with no edges (seen as a query consisting of a
        single vertex and no edges). Otherwise, let $x$ be the root of $G$, let $\ledge{x}{R_1}{y_1},
	\ldots, \ledge{x}{R_n}{y_n}$ be the successive children, and call $G_1,
	\ldots, G_n$ the \dwt subqueries of $G$ respectively rooted at $y_1, \ldots,
	y_n$.
        Let $G_1', \ldots, G_n'$
	be the respective translations of $G_1$, \ldots, $G_n$ as
        $\twp$ queries obtained inductively. For each $1 \leq i \leq n$, let
        $G''_i$ be $\ledge{}{R_i}{G'_i} \redge{}{R_i}$, i.e., the $\twp$
        obtained by extending the $\twp$ $G_i'$ by adding an edge
        $\ledge{}{R_i}{G'_i}$ to the left (connected to the left endpoint) and
        adding an edge $\redge{}{R_i}$ to the right (connected to the right
        endpoint). In particular, if $G'_i$ is the trivial query with no edges
        then $G''_i$ is $\ledge{}{R_i}{} \redge{}{R_i}{}$. We then define
	the translation of $G$ to be the $\twp$ obtained as the
        concatenation of
        the queries $G_1'', \ldots, G_n''$, merging the
        right endpoint of $G_i''$ and the left endpoint
        of~$G_{i+1}''$ for each $1 \leq i < n$.
        This translation is
	in linear time, and the translated query has twice as many edges as the
	original query. Note that we can also inductively define a homomorphism from
	$G'$ to~$G$ mapping the first and last elements of~$G'$ to the root of~$G$: this is immediate in
	the base case, and in the inductive claim we obtain suitable homomorphisms
	from each $G_i'$ to each $G_i$ by induction and combine them in the expected
	way.
	
	We claim that, on any \dwt instance $H$, for any vertex $v$ of~$H$,
        there is a match of $G$ mapping
	the root of~$G$ to~$v$ iff there is a match of $G'$ mapping both the first variable and last
	variable of the path to~$v$. One
	direction is clear: from the homomorphism presented earlier that maps $G'$
	to~$G$, we know that any match of~$G$ in~$H$ implies that there is a match
	of~$G'$ in~$H$ mapping the first and last elements as prescribed. Let us show the converse, and let us actually show by
	structural induction on~$G$ a stronger claim: for any vertex $v$
        of~$H$, if there is a match of $G'$ mapping the
	first variable of~$G'$ to a vertex~$v$, then the last variable is also mapped
	to~$v$ and there is a match of~$G$ mapping the root variable to~$v$.
	If $G$ is the trivial query,
	then this is immediate: a match of the trivial query~$G'$ mapping the first
	variable to~$v$ must also map the last variable to~$v$ (it is the same
	variable), and we conclude. Otherwise, let us write $x$ the root of~$G$ and $\ledge{x}{R_1}{y_1},
	\ldots, \ledge{x}{R_n}{y_n}$ be the children and $G_1,
	\ldots, G_n$ the subqueries as above. We know that the match of~$G'$ maps the
	first variable to a 
	vertex $v$, and as $H$ is a \dwt instance it maps $x$ to a child $v_1$ of~$v$.
	Considering $G_1$ and its translation $G'_1$, we notice that we have a match
	of~$G_1'$ where the first variable is mapped to~$v_1$. Hence, by induction,
	the last variable is also mapped to~$v_1$, and we have
	a match of~$G_1$ where the root variable is mapped to~$v_1$. Now, 
	as $H$ is a
	\dwt instance, the next edge $\redge{}{R_1}{}$ must be mapped to
	the edge connecting $v$ and $v_1$, so that the next variable in~$G'$ is mapped
	to~$v$. Repeating this argument for the successive child edges and child queries
	in~$G$, we conclude that the last variable of~$G'$ is mapped to~$v$, and we
	obtain matches of~$G_1, \ldots, G_n$ that can be combined to a match of~$G$.
\end{proof}

Lemma~\ref{lem:dwt2twp}, when taken together with Proposition~\ref{prop:phoml-dwt-dwt}, allows us to prove Proposition~\ref{prop:phoml-twp-dwt}: it reduces in linear time
(in combined complexity) the evaluation of a \dwt query on a \dwt probabilistic instance to the evaluation of
an equivalent \twp query on the same instance and with the same signature. This establishes that any approximation
algorithm for \twp queries on \dwt instances would give an approximation for
\dwt queries on \dwt instances, which by Proposition~\ref{prop:phoml-dwt-dwt} is conditionally impossible.

These results complete Table~\ref{tab:tab}, concluding the classification
of the complexity of \phom in the labelled setting: all cases that were
intractable for exact computation are also hard to approximate, with the notable
exception of \owp queries on \dagi instances.

\section{Results in the Unlabelled Setting}\label{sec:unlabelled}

We now turn to the \textit{unlabelled} setting of probabilistic graph
homomorphism, where the signature $\sigma$
is restricted to contain only 
one label ($|\sigma| = 1$). Our results are summarized 
in Table~\ref{tab:unlabelled}: we settle all cases except $\phomul(\owp, \all)$
and $\phomul(\dwt, \all)$, for which we do not give an FPRAS or hardness of
approximation result. Note that both problems are 
\SPhard for exact computation~\cite{AMS17}. Further, they are in fact 
equivalent, because \dwt queries are equivalent to \owp
queries in the unlabelled setting (stated in~\cite{AMS17} and
reproved as Proposition~\ref{prop:phomul-dwt-dagi} below).

\paragraph*{\Bowp on \Bdagi.}
We start with \owp queries, and state the following:
\begin{restatable}{prop}{phomulowpdagi}\label{prop:phomul-owp-dagi}
	$\phomul(\owp, \dagi)$ is \SPhard already in data complexity, but it admits an FPRAS.
\end{restatable}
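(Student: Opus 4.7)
The plan splits into two halves. For the FPRAS upper bound, I would simply observe that the unlabelled setting is a special case of the labelled setting with $|\sigma|=1$, so Proposition~\ref{prop:phoml-owp-dagi} already applies and yields an FPRAS for $\phomul(\owp, \dagi)$ directly. Concretely, the nOBDD provenance construction of Proposition~\ref{prop:phoml-owp-dagi} carries over verbatim: for an input \owp query $\edge{}{}^m$ and \dagi instance $H$ with topological edge order $\prec$, we create $\lor$-nodes $n_{u,i}$ for each vertex $u \in V(H)$ and position $1 \leq i \leq m$, decision nodes $d_{e,i}$ for each edge $e \in E(H)$ and position $i$, and wire them exactly as in the labelled construction but without the label-compatibility check. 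Acyclicity of $H$ guarantees that the resulting diagram is ordered by $\prec$, hence an nOBDD, and Theorem~\ref{thm:nobdd-weighted-count} then gives the FPRAS.

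For the \SPhardness in data complexity, the plan is to exhibit a fixed unlabelled \owp query whose PQE on \dagi instances is \SPhard. I would reduce from a standard \SPhard counting problem, most naturally \countppdnf, the problem of counting the satisfying assignments of a positive partitioned 2-DNF formula $\phi = \bigvee_j (X_{f(j)} \land Y_{g(j)})$. Given such a formula, I would construct a layered DAG encoding its clauses: the $X$-variables and $Y$-variables are represented as probabilistic edges placed in two distinct layers, while each clause $X_i \land Y_j$ is encoded by a deterministic gadget linking the corresponding variable edges through a middle clause node. By setting each variable edge to probability $1/2$ and all other edges to probability~$1$, the probability of a fixed-length \owp query on the DAG equals $N/2^n$, where $N$ is the count of satisfying assignments of~$\phi$ and $n$ its number of variables, reducing the computation of $N$ to exact PQE.

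The main obstacle will be the absence of labels: unlike in the labelled proofs of Propositions~\ref{prop:phoml-owp-all} and~\ref{prop:phoml-dwt-dwt}, I cannot use distinct edge labels to force the query path to traverse variable and clause edges in the intended order. I would instead rely on the layered DAG topology and on path lengths alone, padding the clause gadgets so that the unique way for a length-$k$ path query (for an appropriately chosen fixed $k$) to map into the DAG is by crossing exactly one $X$-variable edge and one $Y$-variable edge through a common encoded clause. Verifying the absence of spurious matches, and in particular ruling out homomorphisms that bypass the variable edges by reusing deterministic portions of the DAG, is the central technical point of the reduction; once established, the counting argument proceeds as in the labelled setting, and any FPRAS for the query would imply an FPRAS for \countppdnf, contradicting the \SPhardness (in fact giving a conditional inapproximability too, which would however require the extra care exerted in Section~\ref{sec:labelled}).
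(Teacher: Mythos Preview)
Your plan is correct and matches the paper's approach on both halves: the FPRAS is indeed inherited verbatim from Proposition~\ref{prop:phoml-owp-dagi}, and the \SPhardness is obtained by reducing from \countppdnf via a layered \dagi instance with variable edges of probability~$1/2$ and clause edges of probability~$1$.

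The one point worth flagging is that the obstacle you anticipate for the hardness reduction does not in fact arise, and the paper's construction is much simpler than the padding-based gadgetry you sketch. The paper fixes the query to be the length-$3$ path $\rightarrow\rightarrow\rightarrow$ and builds a four-layer DAG on vertices $\{s\}\sqcup X\sqcup Y\sqcup\{t\}$: edges $s\to X_i$ and $Y_j\to t$ carry probability~$1/2$, and one probability-$1$ edge $X_i\to Y_j$ is added per clause. Because every maximal directed path in this DAG has length exactly~$3$ and necessarily has the form $s\to X_i\to Y_j\to t$, there are no spurious matches to rule out: any homomorphic image of the length-$3$ query automatically selects one $X$-edge, one clause edge, and one $Y$-edge. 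So the ``central technical point'' you identify dissolves once the right fixed query length is chosen; no padding, and no argument beyond the layer structure, is needed.
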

\noindent The positive result directly follows from the existence of an FPRAS in the
labelled setting, which we have shown in the previous section
(Proposition~\ref{prop:phoml-owp-dagi}). By contrast, the \SPhardness does not
immediately follow from previous work, as \dagi instances were not studied
in~\cite{AMS17}. We can nevertheless obtain it by inspecting the usual 
\SPhardness proof of PQE for the CQ $\exists x \, y ~ R(x), S(x, y),
  T(y)$ on TID
  instances~\cite{DBLP:series/synthesis/2011Suciu}. We give a proof in Appendix~\ref{apx:phomul-owp-dagi}.

\paragraph*{\Bdwt on \Bdagi.}
We can easily generalize the above result from \owp queries to \dwt queries,
given that they are known to be equivalent in the unlabelled setting:
\begin{restatable}[\cite{AMS17}]{propC}{phomuldwtdagi}\label{prop:phomul-dwt-dagi}
	$\phomul(\dwt, \dagi)$ is \SPhard already in data complexity, but admits an FPRAS.
\end{restatable}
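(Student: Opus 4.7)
The plan is to derive both claims directly from Proposition~\ref{prop:phomul-owp-dagi} by leveraging the equivalence between \dwt and \owp queries in the unlabelled setting, already alluded to in the preceding paragraph.

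First I would establish the following homomorphism-level equivalence. Let $G$ be a \dwt query in the unlabelled setting with height $d$ (the maximum length of a root-to-leaf path), and let $P_d$ denote the \owp consisting of $d$ edges. I claim that for every graph $H$, there is a homomorphism $G \homo H$ iff $P_d \homo H$. The forward direction is immediate: any root-to-leaf path of length $d$ in $G$ is (as a subquery) a copy of $P_d$, so any homomorphism $G \homo H$ restricts to one $P_d \homo H$. For the converse, suppose $P_d \homo H$ via a homomorphism whose image is $v_0 \to v_1 \to \cdots \to v_d$ in $H$. Define $h : V_G \to V_H$ by $h(u) = v_{\mathrm{depth}(u)}$. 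Every edge $(u,w)$ of $G$ satisfies $\mathrm{depth}(w) = \mathrm{depth}(u)+1$, so its image $(v_{\mathrm{depth}(u)}, v_{\mathrm{depth}(u)+1})$ lies in $H$, and $h$ is a homomorphism. This is precisely the equivalence between \dwt and \owp in the unlabelled setting stated in~\cite{AMS17}.

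Second, since the equivalence $G \homo H' \Leftrightarrow P_d \homo H'$ holds for every subgraph $H' \subseteq H$, it lifts to equality of probabilities: $\Pr_\pi(G \homo H) = \Pr_\pi(P_d \homo H)$ for every probabilistic graph $(H, \pi)$ with $H \in \dagi$. Hence, given an input $(G, (H, \pi))$ to $\phomul(\dwt, \dagi)$, I would compute the height $d \leq \gsize{G}$ in linear time, build the \owp $P_d$, and invoke Proposition~\ref{prop:phomul-owp-dagi} on $P_d$ and $(H, \pi)$ to obtain an FPRAS for $\Pr_\pi(G \homo H)$ within the required polynomial time bound.

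Third, the \SPhardness in data complexity is immediate from the inclusion $\owp \subseteq \dwt$: any algorithm solving $\phomul(\dwt, \dagi)$ in particular solves $\phomul(\owp, \dagi)$, and the latter is already \SPhard in data complexity by Proposition~\ref{prop:phomul-owp-dagi} (via some fixed \owp query on \dagi instances). There is no real obstacle here, as the equivalence between \dwt and \owp in the unlabelled setting is elementary; the entire proposition is a direct corollary of Proposition~\ref{prop:phomul-owp-dagi}.
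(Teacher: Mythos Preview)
Your proposal is correct and follows essentially the same approach as the paper: both reduce a \dwt query to the \owp of length equal to its height via the depth-based homomorphism, then invoke Proposition~\ref{prop:phomul-owp-dagi} for the FPRAS and use the inclusion $\owp \subseteq \dwt$ for hardness. The only cosmetic difference is that you state the equivalence $G \homo H \Leftrightarrow P_d \homo H$ for arbitrary graphs~$H$, whereas the paper restricts to $H \in \dagi$; your more general formulation is correct and the argument is identical.
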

\noindent This is implicit in 
\cite[Proposition 5.5]{AMS17}: we give a
self-contained proof in Appendix~\ref{apx:phomul-dwt-dagi}.

\paragraph*{\Btwp on \Bpt.}
In contrast to \owp queries, which are exactly tractable on \pt instances and
admit an FPRAS on \dagi instances, \twp queries have no FPRAS already
on \pt instances:
\begin{prop}\label{prop:phomul-twp-pt}
	$\phomul(\twp, \pt)$ does not admit an FPRAS unless $\RP=\NP$.
\end{prop}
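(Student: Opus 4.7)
The plan is to reduce from the labelled result $\phoml(\twp, \dwt)$ in Proposition~\ref{prop:phoml-twp-dwt} by replacing each labelled edge in both query and instance with a fixed-length unlabelled direction-pattern \emph{gadget} that encodes the label. Since Proposition~\ref{prop:phoml-twp-dwt} already yields hardness even for a fixed two-label signature, I only need to translate such instances into the unlabelled \twp/\pt setting.

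First, for the signature $\{0,1\}$, choose two direction-pattern gadgets $\Gamma_0, \Gamma_1 \in \{F,B\}^k$ of the same constant length $k$, viewed as \twps, that are rigid against one another under graph homomorphism: the key property to enforce is that the internal vertices of $\Gamma_0$ and $\Gamma_1$ admit distinct in/out-degree signatures, so that no internal vertex of one gadget can be mapped to an internal vertex of the other in any homomorphism. Both gadgets should also be self-symmetric under the involution that flips $F \leftrightarrow B$ and reverses the sequence, so that a labelled edge encountered by the walker in either path direction is encoded in the same way. Candidate patterns of moderate length, such as $FFFBBB$ for $\Gamma_0$ and $FBFBFB$ for $\Gamma_1$, are natural starting points because the ``in-degree~$2$'' vertex of the former and the out-degree-$2$ vertices of the latter cannot coexist in each other's image.

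Second, given the labelled \twp query $G$ and labelled \dwt instance $H$ obtained from Proposition~\ref{prop:phoml-twp-dwt}, build $G'$ and $H'$ by subdividing every labelled edge using $\Gamma_{\lambda(e)}$, preserving the direction of the original edge. This subdivision leaves $G'$ a \twp (a longer directed path) and $H'$ a \pt (subdividing tree edges preserves the underlying tree structure). Transfer the probability labelling of $H$ onto $H'$ by inheriting each original edge's probability on all of its gadget edges (for instance, by placing the original probability on one distinguished gadget edge and probability~$1$ on the others) so that the distribution on subgraphs of $H'$ is the pushforward of the distribution on subgraphs of~$H$.

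Third, argue a bijection between homomorphisms $G \to H$ (labelled) and homomorphisms $G' \to H'$ (unlabelled). The forward direction extends any labelled homomorphism canonically to gadget intermediate vertices. The reverse direction uses gadget rigidity: original vertices of~$G'$ have distinguishable degree signatures coming from the gadgets incident to them, forcing them to map to original vertices of~$H'$; from there, rigidity pins each query gadget to an instance gadget of the same label, recovering a labelled homomorphism $G \to H$. Conclude $\Pr_{\pi'}(G' \homo H') = \Pr_\pi(G \homo H)$, so an FPRAS for $\phomul(\twp, \pt)$ yields one for $\phoml(\twp, \dwt)$, contradicting $\RP \neq \NP$ via Proposition~\ref{prop:phoml-twp-dwt}.

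The main obstacle is constructing the gadgets so that rigidity is preserved not only between isolated occurrences of $\Gamma_0$ and $\Gamma_1$, but also in the context of the full subdivided tree $H'$, where a query gadget could a~priori be matched across the boundary between two adjacent instance gadgets. Ensuring this will require choosing $k$ large enough (and the patterns asymmetric enough) that any length-$k$ \twp homomorphic image in $H'$ of a query gadget must lie entirely inside a single instance gadget of the matching label.
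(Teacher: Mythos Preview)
Your high-level strategy---encode each label by a fixed unlabelled direction-pattern path and subdivide both query and instance---is exactly the paper's approach (Lemma~\ref{lem:coding}). But your execution has a real gap beyond the boundary-crossing obstacle you flag: \emph{folding}. A \twp admits non-injective homomorphisms, and your candidate $\Gamma_1 = FBFBFB$ collapses onto any single edge of~$H'$ (send even-indexed vertices to one endpoint, odd-indexed to the other), so it carries no label information at all; $\Gamma_0 = FFFBBB$ similarly folds onto any directed path of length~$3$, mapping both its endpoints to the same original vertex. Local in/out-degree signatures of internal gadget vertices cannot rescue this, since degrees at original vertices of~$H'$ also depend on the (unbounded) fan-out of the underlying \dwt. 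The self-symmetry requirement is also a red herring: the encoding should depend on the edge's orientation (source to target), not on the direction in which the \twp query happens to traverse it, so the paper imposes no such constraint.

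The paper's gadget for label $i\in\{1,\dots,k\}$ is $\rightarrow^{k+3}\leftarrow\rightarrow^{i+1}\leftarrow^{k+2}$. The crucial invariant is global rather than local: the leading forward run of length $k+3$ is strictly longer than every other maximal directed path anywhere in~$H'$ (the remaining runs have lengths $1$, $i+1\le k+1$, and $k+2$, and gadget boundaries never concatenate two same-direction runs regardless of how the original edges of~$H$ are oriented). This forces the first $k+3$ edges of any query gadget to map \emph{injectively} onto the prefix of some instance gadget; a short length-chasing argument then pins each subsequent segment in turn and recovers the label. That single length invariant simultaneously rules out folding and boundary crossing, and it is the missing ingredient in your plan.
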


\begin{proof}
	It suffices to prove the claim below, which is the analogue to the
        unlabelled setting of
        Claim~\ref{clm:2cnf2prov-dwt} after having transformed the query to \twp
        via Lemma~\ref{lem:dwt2twp}:
    \begin{clm}
		\label{clm:2cnftwp-ptt}
		Given a monotone 2-CNF formula $\phi$ on $n$ variables, we can build in time $O(|\phi| \log |\phi|)$ an unlabelled \twp graph $G_\phi$ and unlabelled \pt graph $H_\phi$, with the latter containing edges $(e_1, \ldots, e_n)$ such that $\prov^{G_\phi}_{H_\phi}$ represents $\phi$
		on~$(e_1, \ldots, e_n)$.
	\end{clm}
        We show this claim via a general-purpose reduction
from the labelled
        setting to the unlabelled setting, which works in
        fact when queries and instances are in the class \all. This reduction codes labels via
        specific unlabelled paths; a similar but ad-hoc technique was  used to prove~\cite[Proposition~5.6]{AMS17}:

        \begin{lem}
          \label{lem:coding}
          For any constant $k \geq 2$, given a query $G$ in the class \all and instance graph 
          $H$ in the class \all
          on a labelled signature with labels $\{1, \ldots, k\}$, we can
          construct in linear time an unlabelled query $G'$ in the class \all and instance graph
          $H'$ in the class \all such that there is a (labelled) homomorphism from $G$ to $H$ iff
          there is an (unlabelled) homomorphism from $G'$ to~$H'$. Further, if
          $G$ is a \twp then $G'$ is also a \twp, and if $H$ is a \pt then $H'$
          is also a \pt.
        \end{lem}

        \begin{proof}
          We construct $G'$ from $G$ and $H'$ from $H$ by replacing every edge
          by a fixed path that depends on the label of the edge. Specifically,
          we consider every edge $\ledge{x}{i}{y}$ of the query, where $x$ is
          the source, $y$ is the target, and $1 \leq i \leq k$ is the label.
          We code such an edge in~$G'$ by a path defined as follows: $x \rightarrow^{k+3} ~ \leftarrow ~ \rightarrow^{i+1} ~ \leftarrow^{k+2} ~ y$, where exponents denote repeated edges and where intermediate vertices are omitted. We code the instance $H$ to~$H'$ in the same way. This process is clearly linear-time, and it is clear that if $G$ is a \twp then $G'$ is also a \twp, and that if $H$ is a \pt then $H'$ is also a \pt. Further, to establish correctness of the reduction, one direction of the equivalence is trivial: a homomorphism $h$ from~$G$ to~$H$ clearly defines a homomorphism from~$G'$ to~$H'$ by mapping the coding in~$G'$ of every edge $e$ of~$G$ to the coding of the image of~$e$ by~$h$ in~$H'$

          What is interesting is the converse direction of the equivalence. We
          show it via a claim on homomorphic images of the coding of individual edges:
          for any $1 \leq i \leq k$, letting $e'$ be the coding of an edge $e = \ledge{x}{i}{y}$,
          for any homomorphism $h'$ from $e'$ to $H'$, there must exist an edge $f = \ledge{a}{i}{b}$ in~$H$ such that $h'$ maps~$x$ to~$a$ and $y$ to~$b$. This claim implies the converse direction of the equivalence: if there is a homomorphism $h'$ from~$G'$ to~$H'$, then applying the claim to the restrictions of~$h'$ to the coding of each edge of~$G$, we see that $h'$ defines a function~$h$ that maps the vertices of~$G$ to vertices of~$H$, and that $h$ is a homomorphism. Hence, all that remains is to prove the claim, which we do in the rest of the proof.

          Consider an edge $e = \ledge{x}{i}{y}$ as in the claim statement, and let $e'$ be its coding and $h'$ the homomorphism mapping $e'$ to~$H'$. Observe that, in~$H'$, the only directed paths of length $k+3$ are the first $k+3$ edges of the coding of edges of~$H$. (This hinges on the fact that the paths of length $k+3$ defined in the coding of edges of~$H$ are never adjacent in~$H'$ to another edge that goes in the same direction, even across multiple edges, and no matter the directions of edges in~$H$.) 
          This means that, considering the directed path $\rightarrow^{k+3}$ at the beginning of~$e'$, there must be an edge $f = \ledge{a}{j}{b}$ of~$H$, with coding~$f'$ in~$H'$, such that the source~$x$ of~$e$ is mapped to the source~$a$ of~$f$, and the first $k+3$ edges of~$e'$ are mapped to the first $k+3$ edges of~$f'$. What remains to be shown is that $i=j$ and that $y$ is mapped to~$b$.

          To this end, we continue studying what can be the image of~$e'$ into~$f'$. After the directed path $\rightarrow^{k+3}$, the
          next edge $\leftarrow$ of~$e'$ must have been mapped forward to the next edge $\leftarrow$ of~$f'$: indeed, it cannot be mapped backwards on the last edge of the preceding path $\rightarrow^{k+3}$ because $k+3 > 1$ and $i+1 > 1$ so the next edges $\rightarrow^{i+1}$ would then have no image. Then the next directed path $\rightarrow^{i+1}$ of~$e'$ is mapped in~$f'$, necessarily forward because we fail if we map the first edge backwards: this implies that there at least as many edges going in that direction in~$f'$ as there are in~$e'$, i.e., $i \leq j$.
          Now, the last path $\leftarrow^{k+2}$ of~$e'$ cannot be mapped backwards because $k+2 > i+1$, so we must map it forwards in~$f'$: for this to be possible, we must have reached the end of the directed path $\rightarrow^{j+1}$ in~$f'$, so that we have $j = i$. We are now done reading~$e'$ and $f'$, so we have indeed mapped $y$ to~$b$. This, along with $i=j$, establishes that the claim is true, and concludes the proof.
        \end{proof}
        We can thus prove Claim~\ref{clm:2cnftwp-ptt}, starting from
        Claim~\ref{clm:2cnf2prov-dwt} and translating the labelled \dwt query first to a labelled \twp query via
        Lemma~\ref{lem:dwt2twp}, and then the labelled \twp query and \pt instance (with precisely two labels) to an \textit{unlabelled} \twp query and \pt instance via Lemma~\ref{lem:coding}. Using the same argument as in Proposition~\ref{prop:phoml-owp-all}, we conclude the proof of Proposition~\ref{prop:phomul-twp-pt}.
\end{proof}

\section{DNNF Lower Bounds}\label{sec:dnnf-bounds}
In this section, we investigate how to represent the provenance of
the query-instance pairs that we consider. More specifically, we study
whether there exist polynomially-sized representations in tractable circuit
classes of Boolean provenance functions $\prov^G_H$, for $G \in \Gmc$ and $H \in
\Hmc$ in the graph classes studied in this paper. Certainly, for every graph class $\Gmc$ and $\Hmc$, the (conditional) non-existence of an FPRAS for $\phom(\Gmc, \Hmc)$ implies that, conditionally, we cannot compute nOBDD representations of provenance in polynomial time combined complexity---as otherwise we could obtain an FPRAS via Theorem~\ref{thm:nobdd-weighted-count}.
In fact, beyond nOBDDs, it follows from~\cite[Theorem~6.3]{DBLP:conf/stoc/ArenasCJR21a} that, conditionally, we cannot tractably compute provenance representations even in the more general class of \emph{structured DNNFs}.
Indeed, as for nOBDDs, fixed edges in the reductions can be handled by conditioning~\cite[Proposition~4]{PD08}.%

However, even in settings where there is conditionally no combined FPRAS, it could be the case
that there are polynomial-\emph{sized} tractable circuits that are difficult to compute, or that we can tractably compute circuits in a more general formalism such as \emph{unstructured} DNNF circuits.
The goal of this section is to give a negative answer to these two questions, for all of the non-approximable query-instance class pairs studied in Sections~\ref{sec:labelled} and~\ref{sec:unlabelled}.

Specifically, we show lower bounds on the size of DNNF circuits for infinite families of graphs taken from these classes. Remember that 
DNNF is arguably the most 
general knowledge compilation circuit class
that still enjoys some tractable properties~\cite{DM02}. Hence, these lower bounds imply that no tractable provenance representation exists in other tractable subclasses of DNNFs, e.g., structured DNNFs~\cite{PD08}, or Decision-DNNFs~\cite{beame2017exact}.
We also emphasize that, unlike the inapproximability results of
Sections~\ref{sec:labelled} and~\ref{sec:unlabelled} which assumed $\RP \neq \NP$, 
all of the DNNF lower bounds given here are unconditional.

We first show a \textit{strongly} exponential lower bound for labelled $\owp$ 
queries on instances in the class \all:
\begin{prop}\label{prop:owp-all-lb}
  Let $\sigma$ be any fixed signature containing at least two labels.
	There is an infinite family $G_1, G_2, \ldots$ of labelled $\owp$ queries and an
	infinite family $H_1, H_2, \ldots$ of labelled instances in the class
        \all and on signature~$\sigma$ such that, for any $i
	> 0$, any DNNF circuit representing the Boolean function $\prov^{G_i}_{H_i}$ has
        size $2^{\Omega(\gsize{G_i} + \gsize{H_i})}$.
\end{prop}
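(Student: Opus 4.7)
The plan is to combine the reduction of Claim~\ref{clm:2cnf2prov} with a standard DNNF lower bound for a carefully chosen monotone 2-CNF family. Concretely, I would first exhibit an infinite family of monotone 2-CNF formulas $\phi_n$ in which each variable appears in at most a constant number $d$ of clauses, of size $O(n)$, and whose Boolean function has DNNF complexity $2^{\Omega(n)}$. Then, applying Claim~\ref{clm:2cnf2prov} with this fixed $d$, I obtain a labelled $\owp$ query $G_n$ and a labelled \all instance $H_n$, both of size $O(n)$, together with a distinguished tuple of edges $(e_1, \ldots, e_n)$ such that $\prov^{G_n}_{H_n}$ represents $\phi_n$ on $(e_1, \ldots, e_n)$ in the sense of Definition~\ref{def:provrepresents}.

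The final step is routine: given any DNNF $C$ computing $\prov^{G_n}_{H_n}$, conditioning all variables outside $(e_1,\ldots,e_n)$ to~$1$ (which just replaces variable gates by constants and propagates, preserving both decomposability and negation-normal-form) yields, without blow-up in size, a DNNF $C'$ computing $\phi_n$. The hardness of $\phi_n$ then forces $|C| \geq |C'| \geq 2^{\Omega(n)} = 2^{\Omega(\gsize{G_n} + \gsize{H_n})}$, which is exactly the bound claimed. Note that this argument inherently gives a strongly exponential bound because the reduction of Claim~\ref{clm:2cnf2prov} is size-linear in the input formula whenever $d$ is constant.

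The main obstacle is the first step: producing a bounded-occurrence monotone 2-CNF family with provably exponential DNNF size. A natural candidate is the ``vertex-cover'' formula $\phi_{E_n} = \bigwedge_{\{u,v\} \in E(E_n)} (x_u \vee x_v)$ associated with a constant-degree expander graph $E_n$ on $n$ vertices. The standard rectangle-cover lower bound technique for DNNFs says that any DNNF for a function $f$ has size at least the minimum number of combinatorial rectangles needed to cover the satisfying assignments of $f$ under some balanced bipartition of the variables. For a good expander, every balanced bipartition of the vertices cuts $\Omega(n)$ edges, and exploiting this one can show (via a fooling-set-style argument using independent sets across the cut) that every rectangle cover under any balanced partition has size $2^{\Omega(n)}$. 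An alternative route, perhaps cleaner, would be to directly embed into $\phi_n$ a function with known $2^{\Omega(n)}$ nondeterministic best-partition communication complexity, and invoke the DNNF--rectangle-cover connection in a black-box manner; either way the proof ultimately reduces to transferring such a communication-complexity lower bound through the size-linear encoding of Claim~\ref{clm:2cnf2prov}.
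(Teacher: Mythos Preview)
Your proposal is correct and follows essentially the same strategy as the paper: obtain a bounded-degree monotone 2-CNF family with exponential DNNF complexity, push it through Claim~\ref{clm:2cnf2prov}, and then condition the remaining edges to~$1$ to transfer the lower bound. The only difference is in how the hard formula family is obtained: the paper simply cites existing black-box results---namely, the existence of degree-$3$ graphs of linear treewidth from~\cite{GroheM09}, together with the treewidth-based DNNF lower bound of~\cite[Corollary~8.5]{ACMS20}---whereas you sketch a direct expander/rectangle-cover argument. Both routes work (indeed, expanders are precisely the canonical source of linear-treewidth bounded-degree graphs), but the paper's version is shorter since it avoids reproving the communication-complexity step.
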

\begin{proof}
	By \emph{treewidth} of a monotone 2-CNF formula, we mean the treewidth of the
	graph on the variables whose edges correspond to clauses in the expected way;
	and by \emph{degree} we mean the maximal number of clauses in which any
	variable occurs.
	Let us consider an infinite family $\phi_1, \phi_2, \ldots$ of monotone 2-CNF
	formulas of constant degree $d=3$ whose treewidth is linear in their size:
	this exists by~\cite[Proposition~1, Theorem~5]{GroheM09}. 
	We accordingly know by \cite[Corollary~8.5]{ACMS20} that any DNNF computing $\phi_i$ must
	have size $2^{\Omega(|\phi_i|)}$ for all~$i > 1$.
	Using 
	Claim~\ref{clm:2cnf2prov}, we obtain infinite families $G_1, G_2, \ldots$ of
	\owp and $H_1, H_2, \ldots$ of graphs in the class \all and on signature~$\sigma$ such that $\prov^{G_i}_{H_i}$
        represents $\phi_i$ on some choice of edges, and we have $\gsize{G_i} + \gsize{H_i} =
	O(|\phi_i|)$ for all $i>0$ (from the running time bound). Now, any representation of
	$\provl^{G_i}_{H_i}$ as a DNNF can be translated in linear time to a
	representation of $\phi_i$ as a DNNF of the same size, simply by renaming the edges $(e_1,
	\ldots, e_n)$ to the right variables, and replacing all other variables by the
	constant~$1$. This means that the lower bound on the size of DNNFs
	computing $\phi_i$ also applies to DNNFs representing $\prov^{G_i}_{H_i}$,
	\ie, they must have size at least $2^{\Omega(|\phi_i|)}$, hence
        $2^{\Omega(\gsize{G_i} + \gsize{H_i})}$ as we claimed.
\end{proof}
We now present lower bounds for the remaining non-approximable query-instance class
pairs, which are not strongly exponential but rather \textit{moderately} exponential. This is because our
encoding of CNFs into these classes (specifically,
Claim~\ref{clm:2cnf2prov-dwt}, and its images by Lemma~\ref{lem:dwt2twp} and
Lemma~\ref{lem:coding})
give a bound which is not linear but linearithmic (i.e., in $O(|\phi| \log
|\phi|)$).
We leave to future work the question of proving strongly exponential lower bounds for
these classes,
like we did in Proposition~\ref{prop:owp-all-lb}.
\begin{prop}\label{prop:twp-dwt-lb}
  Let $\sigma$ be any fixed signature containing at least two labels.
	For any $\epsilon > 0$, there is an infinite family $G_1, G_2, \ldots$
        of labelled \dwt queries and an
	infinite family $H_1, H_2, \ldots$ of labelled \dwt instances on signature~$\sigma$ such that, for any $i
	> 0$, any DNNF circuit representing the Boolean function $\prov^{G_i}_{H_i}$ has
        size at least $2^{\Omega \left( \left(\gsize{G_i} + \gsize{H_i} \right)^{1-\epsilon} \right)}$.
\end{prop}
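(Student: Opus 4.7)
The plan is to mirror the strategy of Proposition~\ref{prop:owp-all-lb}, but replace Claim~\ref{clm:2cnf2prov} (the \owp-on-\all encoding) with Claim~\ref{clm:2cnf2prov-dwt} (the \dwt-on-\dwt encoding). As a starting point I would take a family $\phi_1, \phi_2, \ldots$ of monotone 2-CNF formulas of constant degree and linear treewidth, which exists by \cite[Proposition~1, Theorem~5]{GroheM09}, and apply \cite[Corollary~8.5]{ACMS20} to conclude that any DNNF computing $\phi_i$ must have size $2^{\Omega(|\phi_i|)}$.

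Next I would feed each $\phi_i$ into Claim~\ref{clm:2cnf2prov-dwt} to produce \dwt graphs $G_i$ (query) and $H_i$ (instance) over the signature $\sigma$, along with a distinguished tuple of edges $(e_1, \ldots, e_{n_i})$ of~$H_i$ on which $\prov^{G_i}_{H_i}$ represents $\phi_i$ in the sense of Definition~\ref{def:provrepresents}. The running time bound in that claim yields $\gsize{G_i} + \gsize{H_i} = O(|\phi_i| \log |\phi_i|)$. The key transfer step is then the observation that any DNNF $D$ representing $\prov^{G_i}_{H_i}$ can be converted, in linear time and without blowup, into a DNNF $D'$ computing $\phi_i$: rename each variable gate labelled~$e_j$ to the corresponding variable of~$\phi_i$, and replace every other variable gate (corresponding to an edge outside $\{e_1, \ldots, e_{n_i}\}$) by the constant~$1$. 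Decomposability is preserved because replacing variables by constants or renaming them cannot introduce variable sharing across inputs of an $\land$-gate. By the defining property of ``represents'' in Definition~\ref{def:provrepresents}, $D'$ computes exactly $\phi_i$.

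Combining these two facts, any DNNF for $\prov^{G_i}_{H_i}$ has size at least $2^{\Omega(|\phi_i|)}$. To finish, I convert this back into a bound in terms of $\gsize{G_i} + \gsize{H_i}$. Writing $N_i \defeq \gsize{G_i} + \gsize{H_i} = O(|\phi_i| \log |\phi_i|)$, inversion gives $|\phi_i| = \Omega(N_i / \log N_i)$, and for any fixed $\epsilon > 0$ we have $N_i / \log N_i \geq N_i^{1-\epsilon}$ for all sufficiently large~$i$, because this is equivalent to $N_i^{\epsilon} \geq \log N_i$. Hence the DNNF lower bound becomes $2^{\Omega(N_i^{1-\epsilon})}$, as required.

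There is no real obstacle here beyond bookkeeping; the only point needing mild care is the conditioning-style argument in the middle paragraph, ensuring that collapsing the non-distinguished edges of~$H_i$ to the constant~$1$ in the DNNF truly computes~$\phi_i$ (which relies precisely on the fact that Definition~\ref{def:provrepresents} fixes those edges to~$1$) and preserves decomposability. The overall loss from strongly to moderately exponential is unavoidable with this route, since the $\log |\phi_i|$ factor in Claim~\ref{clm:2cnf2prov-dwt} comes from encoding clause numbers in binary along the tree branches; strengthening the bound would require a different encoding and is left as future work, as noted in the paragraph before the proposition.
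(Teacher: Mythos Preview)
Your proposal is correct and follows essentially the same approach as the paper: start from the same hard family of monotone 2-CNFs, apply Claim~\ref{clm:2cnf2prov-dwt}, transfer the DNNF lower bound by conditioning the non-distinguished edges to~$1$, and then invert the linearithmic size bound. The only difference is cosmetic: the paper performs the inversion via the Lambert $W$ function (obtaining $|\phi_i| = N_i/W(N_i)$ before passing to $N_i^{1-\epsilon}$), whereas you go directly through $|\phi_i| = \Omega(N_i/\log N_i)$, which is a slightly more elementary way to reach the same conclusion.
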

\begin{proof}
	The proof is identical to that of Proposition~\ref{prop:owp-all-lb},
        except that we apply Claim~\ref{clm:2cnf2prov-dwt}: for all $i > 0$, $\gsize{G_i} + \gsize{H_i} = O(|\phi_i| \log{|\phi_i|})$.
        We perform a change of variables:
        if we write $y = |\phi_i| \log |\phi_i|$, then
        we can show that $|\phi_i| = e^{W(y)}$, where $W$ denotes the Lambert
        $W$ function~\cite{CGHJK96}; equivalently $|\phi_i| = y/W(y)$
        as the $W$ function satisfies $W(z) e^{W(z)} = z$ for all $z > 0$.
        Thus, the lower bound of
        $2^{\Omega(|\phi_i|)}$ on DNNF representations of~$\phi_i$
        implies that any DNNF for $\prov^{G_j}_{H_j}$ has size at least
        $2^{\Omega \left( \frac{\gsize{G_i} + \gsize{H_i}}{W(\gsize{G_i} + \gsize{H_i})} \right) }$.
        In particular, as $W$ grows more slowly than $n^\epsilon$ for any
        $\epsilon>0$, this gives a bound of $2^{\Omega \left( \left(\gsize{G_i} + \gsize{H_i} \right)^{1-\epsilon} \right)}$
        for sufficiently large $\phi_j$.
\end{proof}
The proof for the following two claims are analogous to that of
Proposition~\ref{prop:twp-dwt-lb}, but using Lemma~\ref{lem:dwt2twp} (for the
first result) and Claim~\ref{clm:2cnftwp-ptt} (for the second result):
\begin{prop}
  Let $\sigma$ be any fixed signature containing at least two labels.
	For any $\epsilon > 0$, there is an infinite family $G_1, G_2, \ldots$
        of labelled \twp queries and an
	infinite family $H_1, H_2, \ldots$ of labelled \dwt instances on signature~$\sigma$ such that, for any $i
	> 0$, any DNNF circuit representing the Boolean function $\prov^{G_i}_{H_i}$ has
        size at least $2^{\Omega \left( \left(\gsize{G_i} + \gsize{H_i} \right)^{1-\epsilon} \right)}$.
\end{prop}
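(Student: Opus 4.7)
The plan is to reuse the infinite family of hard instances produced in the proof of Proposition~\ref{prop:twp-dwt-lb}, and then translate the \dwt queries there into equivalent \twp queries via Lemma~\ref{lem:dwt2twp}, all while keeping the \dwt instances unchanged. Concretely, first I take an infinite family $\phi_1, \phi_2, \ldots$ of monotone 2-CNF formulas of constant degree and linear treewidth from~\cite{GroheM09}, so that by~\cite[Corollary~8.5]{ACMS20} any DNNF computing $\phi_i$ has size $2^{\Omega(|\phi_i|)}$. Applying Claim~\ref{clm:2cnf2prov-dwt}, I obtain \dwt queries $G_i^{\dwt}$ and \dwt instances $H_i$ whose provenance $\prov^{G_i^{\dwt}}_{H_i}$ represents $\phi_i$ on a distinguished tuple of edges $(e_1,\ldots,e_n)$, with $\gsize{G_i^{\dwt}} + \gsize{H_i} = O(|\phi_i|\log |\phi_i|)$.

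Next I apply Lemma~\ref{lem:dwt2twp} to each $G_i^{\dwt}$ to obtain a \twp query $G_i$ of size at most $2\gsize{G_i^{\dwt}}$ that is equivalent to $G_i^{\dwt}$ on every \dwt instance. The key observation is that for any valuation $\nu$ fixing all edges outside $(e_1,\ldots,e_n)$ to~$1$, the induced subgraph $H_{i,\nu}$ of~$H_i$ is still a \dwt (it is obtained by dropping some root-to-variable edges $\ledge{z'}{0}{x_i}$, so each remaining connected component rooted at~$z'$ or at an isolated $x_i$ is again a downward tree, and Lemma~\ref{lem:dwt2twp} applies componentwise). Consequently $G_i \homo H_{i,\nu}$ iff $G_i^{\dwt} \homo H_{i,\nu}$, so $\prov^{G_i}_{H_i}$ represents exactly the same Boolean function $\phi_i$ on $(e_1,\ldots,e_n)$ as $\prov^{G_i^{\dwt}}_{H_i}$ does. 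Hence any DNNF for $\prov^{G_i}_{H_i}$ can be turned, in linear time and without blow-up, into a DNNF for $\phi_i$ by renaming $(e_1,\ldots,e_n)$ to the variables of $\phi_i$ and substituting $1$ for every other input variable.

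The lower bound $2^{\Omega(|\phi_i|)}$ for DNNFs computing $\phi_i$ therefore transfers to $\prov^{G_i}_{H_i}$. Combining with $\gsize{G_i} + \gsize{H_i} = O(|\phi_i|\log|\phi_i|)$, I apply the same Lambert-$W$ change of variables as in Proposition~\ref{prop:twp-dwt-lb}: writing $y = |\phi_i|\log|\phi_i|$, we have $|\phi_i| = y/W(y)$, and since $W(y)$ grows more slowly than $y^\epsilon$ for any $\epsilon>0$, the lower bound becomes $2^{\Omega\left((\gsize{G_i}+\gsize{H_i})^{1-\epsilon}\right)}$ for $\phi_i$ sufficiently large, giving the desired infinite family.

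I do not anticipate a serious obstacle: the work is really done by the combination of Claim~\ref{clm:2cnf2prov-dwt} and Lemma~\ref{lem:dwt2twp}, both already available. The only point that deserves a sentence of care is verifying that Lemma~\ref{lem:dwt2twp} can be applied to the subgraphs $H_{i,\nu}$ arising in the representation (they are forests of downward trees, not single \dwts), which is immediate because the lemma's inductive argument only uses the local property that each vertex's outgoing edges go to its children in a downward tree.
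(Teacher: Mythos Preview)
Your proposal is correct and follows essentially the same route as the paper, which simply says the proof is ``analogous to that of Proposition~\ref{prop:twp-dwt-lb}, but using Lemma~\ref{lem:dwt2twp}.'' You have spelled out the details faithfully, and in fact you are more careful than the paper about the one genuine subtlety: that the subgraphs $H_{i,\nu}$ are forests of downward trees rather than single \dwt instances, so Lemma~\ref{lem:dwt2twp} must be applied componentwise (this same point is already implicit in the paper's use of the lemma in Proposition~\ref{prop:phoml-twp-dwt}).
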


\begin{prop}
  Let $\sigma$ be any fixed signature containing at least two labels.
	For any $\epsilon > 0$, there is an infinite family $G_1, G_2, \ldots$ of unlabelled \twp queries and an
	infinite family $H_1, H_2, \ldots$ of unlabelled \pt instances on signature~$\sigma$ such that,
        for any $i > 0$,
        any DNNF circuit representing the Boolean function $\prov^{G_i}_{H_i}$
        has
        size at least $2^{\Omega \left( \left(\gsize{G_i} + \gsize{H_i} \right)^{1-\epsilon} \right)}$.
\end{prop}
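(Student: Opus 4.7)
The plan is to mimic exactly the strategy of Proposition~\ref{prop:twp-dwt-lb}, only substituting Claim~\ref{clm:2cnftwp-ptt} for Claim~\ref{clm:2cnf2prov-dwt}, since Claim~\ref{clm:2cnftwp-ptt} is precisely the reduction that produces unlabelled \twp queries and unlabelled \pt instances. Concretely, I would start from the infinite family $\phi_1, \phi_2, \ldots$ of monotone 2-CNF formulas of bounded degree and treewidth linear in their size provided by~\cite[Proposition~1, Theorem~5]{GroheM09}, for which~\cite[Corollary~8.5]{ACMS20} gives the strongly exponential DNNF lower bound $2^{\Omega(|\phi_i|)}$.

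Applying Claim~\ref{clm:2cnftwp-ptt} to each $\phi_i$ yields an unlabelled \twp query $G_i$ and unlabelled \pt instance $H_i$, together with distinguished edges $(e_1, \ldots, e_n)$ of~$H_i$, such that $\prov^{G_i}_{H_i}$ represents~$\phi_i$ on these edges; moreover the construction runs in time $O(|\phi_i| \log |\phi_i|)$, so $\gsize{G_i} + \gsize{H_i} = O(|\phi_i| \log |\phi_i|)$. Any DNNF~$D$ representing $\prov^{G_i}_{H_i}$ can be converted in linear time into a DNNF of the same size computing $\phi_i$: rename the variables corresponding to $(e_1, \ldots, e_n)$ to the matching variables of $\phi_i$, and replace every other edge-variable by the constant~$1$ (which preserves decomposability and negation normal form). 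Thus any DNNF for $\prov^{G_i}_{H_i}$ has size at least $2^{\Omega(|\phi_i|)}$.

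It remains to re-express this bound in terms of $N_i \defeq \gsize{G_i} + \gsize{H_i}$. Since $N_i = O(|\phi_i| \log |\phi_i|)$, inverting using the Lambert $W$ function exactly as in the proof of Proposition~\ref{prop:twp-dwt-lb} yields $|\phi_i| = \Omega(N_i / W(N_i))$, and since $W(n)$ grows more slowly than $n^\epsilon$ for any $\epsilon > 0$, the lower bound becomes
\[
2^{\Omega(|\phi_i|)} \;=\; 2^{\Omega\left( N_i / W(N_i) \right)} \;=\; 2^{\Omega\left( N_i^{1-\epsilon} \right)}
\]
for all sufficiently large~$i$, which is the bound claimed.

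There is no genuinely new obstacle here: all of the ingredients are already in place (the CNF hard instances, the DNNF lower bound of~\cite{ACMS20}, the unlabelled encoding via Claim~\ref{clm:2cnftwp-ptt}, and the Lambert-$W$ trick), and the main thing to verify carefully is that restricting the fixed edges to~$1$ in the DNNF yields again a DNNF of no larger size, which is immediate because constant-substitution preserves both decomposability and the NNF form. The loss from strongly exponential to moderately exponential is intrinsic to the linearithmic size blowup of the reduction of Claim~\ref{clm:2cnftwp-ptt}, and improving it to a strongly exponential lower bound would require a genuinely different encoding, as noted in the paper.
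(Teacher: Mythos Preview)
Your proposal is correct and matches the paper's own proof essentially verbatim: the paper explicitly states that the argument is analogous to that of Proposition~\ref{prop:twp-dwt-lb} but using Claim~\ref{clm:2cnftwp-ptt}, which is exactly what you do, including the Lambert-$W$ change of variables and the conversion of a DNNF for $\prov^{G_i}_{H_i}$ into one for~$\phi_i$ by renaming the distinguished edges and fixing the rest to~$1$.
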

We finish by remarking that all of the lower bounds above apply to acyclic query classes (\ie, queries of treewidth 1), for which non-probabilistic query evaluation is well-known to be linear in combined complexity~\cite{Y81}. Thus, these results give an interesting example of query classes for which query evaluation is in linear-time combined complexity, but
computing even a DNNF representation of query provenance is exponential (as we
presented in Result~\ref{res:provenance}).

\section{Consequences}\label{sec:applications}
In this section, we consider some corollaries and extensions to the results above.%

\paragraph*{Optimality of a Previous Result.}
Recall from the introduction that, as was shown in~\cite{vBM23}, 
PQE for self-join-free conjunctive queries of bounded hypertree width
admits a combined FPRAS (in the general setting of probabilistic databases,
rather than probabilistic graphs):
\combinedsjfcq*
\noindent Can a stronger result be achieved? Our
Proposition~\ref{prop:phomul-twp-pt} immediately implies the following:
\begin{cor}\label{cor:tw1tw1}
	Assuming $\RP \neq \NP$,
	even on a fixed signature consisting of a single binary relation
	there is no FPRAS to approximate the probability of an input treewidth-1 CQ on an input treewidth-1 TID instance.
\end{cor}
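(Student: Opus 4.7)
The plan is to observe that Corollary~\ref{cor:tw1tw1} is essentially a reinterpretation of Proposition~\ref{prop:phomul-twp-pt} under the standard dictionary between graph homomorphism and CQ evaluation. First I would recall that a \twp query graph, viewed as a Boolean CQ, has one existential variable per vertex and one atomic conjunct per directed edge; its Gaifman graph is the underlying undirected path, and hence the CQ has treewidth $1$. Similarly, a \pt instance is a polytree whose underlying undirected graph is a tree, so assigning independent probabilities to its edges yields a tuple-independent database instance of treewidth $1$. The unlabelled restriction of $\phomul$ forces a single binary relation in the signature, so we stay within the precise setting claimed by the corollary.

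Given this correspondence, any hypothetical FPRAS for PQE on input treewidth-$1$ CQs and treewidth-$1$ TID instances over a single binary relation, running in time $\poly(|Q|, \gsize{H}, \epsilon^{-1})$, would in particular yield an FPRAS for $\phomul(\twp, \pt)$: given an input pair $(G, H) \in \twp \times \pt$, we feed the associated CQ and TID to that FPRAS and read off a multiplicative approximation of $\Pr_\pi(G \homo H)$. By Proposition~\ref{prop:phomul-twp-pt}, the existence of such an FPRAS forces $\RP = \NP$, which is the desired contradiction. There is no technical obstacle to overcome: the argument is a one-paragraph appeal to the prior proposition plus the two elementary structural observations about treewidth of paths and trees.
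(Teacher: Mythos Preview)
Your proposal is correct and matches the paper's own argument: the paper simply states that Corollary~\ref{cor:tw1tw1} is an immediate consequence of Proposition~\ref{prop:phomul-twp-pt}, and you have spelled out exactly the dictionary (a \twp query is a treewidth-1 CQ, a \pt instance is a treewidth-1 TID, unlabelled means a single binary relation) that makes this implication transparent.
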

\noindent Hence, tractability no longer holds 
with self-joins. So, as unbounded
hypertree width queries are intractable in combined complexity even for 
\textit{deterministic} query evaluation, 
we have:
\begin{cor}\label{cor:optimality}
	Assuming $\RP \neq \NP$, the result in Theorem~\ref{thm:combinedsjfcq} is 
	optimal in the following sense: relaxing \textit{either} the self-join-free 
	or bounded-hypertree-width condition on the query implies the non-existence 
	of a combined FPRAS.
\end{cor}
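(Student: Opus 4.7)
The plan is to prove the two parts of the optimality claim separately, each by exhibiting a hardness result for the relaxed setting.

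The case of dropping the self-join-free hypothesis while keeping bounded hypertree width is immediate from Corollary~\ref{cor:tw1tw1}. Treewidth-$1$ CQs have hypertree width~$1$, so they fall within the bounded-hypertree-width regime; since that corollary rules out an FPRAS (conditionally) already on a fixed signature consisting of a single binary relation, no FPRAS can exist if only the self-join-free hypothesis is dropped from Theorem~\ref{thm:combinedsjfcq}.

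For the complementary case of dropping the bounded-hypertree-width hypothesis while retaining self-join-freeness, the plan is a standard reduction from exact deterministic CQ evaluation. Combined-complexity Boolean CQ evaluation is $\NP$-hard for self-join-free CQs of unbounded hypertree width, via classical reductions from problems such as graph $3$-colorability. Given such a CQ $Q$ and an instance $I$, the plan is to construct the TID instance $I'$ whose tuples are exactly those of~$I$, each assigned probability~$1$. Then $\Pr(Q \text{ holds on } I')$ equals $1$ if there is a homomorphism from $Q$ to~$I$, and equals $0$ otherwise. Supposing a combined FPRAS existed, running it on the input $(Q, I', \epsilon)$ with fixed precision $\epsilon = 1/2$ would return, with probability at least $3/4$, the value~$0$ in the ``no'' case (since the permitted interval $[0,0]$ is a singleton) and a value in $[1/2, 3/2]$ in the ``yes'' case. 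Testing whether the returned value is nonzero then gives a randomized polynomial-time algorithm deciding an $\NP$-hard problem with two-sided error, placing $\NP$ in $\mathsf{BPP}$ and hence $\RP = \NP$ by Ko's theorem.

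The main subtlety lies in the second direction: one must verify that the $\NP$-hardness of combined CQ evaluation under unbounded hypertree width can be witnessed already by self-join-free queries (otherwise the hardness could be attributed to the wrong hypothesis being relaxed), and that the FPRAS definition really does force the output to lie in the singleton $\{0\}$ whenever the true probability is zero. Both points are standard, so once they are noted explicitly the reduction goes through routinely and establishes the claimed optimality.
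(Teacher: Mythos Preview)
Your proposal is correct and follows essentially the same approach as the paper. The paper argues the two relaxations in the same way: the self-join-free relaxation is handled via Corollary~\ref{cor:tw1tw1}, and for the bounded-hypertree-width relaxation the paper simply remarks (both in the introduction and just before the corollary) that combined evaluation of self-join-free CQs is already \NPhard in the non-probabilistic setting---your probability-$1$ TID construction and the observation that an FPRAS must return~$0$ on a true value of~$0$ is precisely the standard unpacking of that remark, and your care in noting that the \NP-hardness must be witnessed by \emph{self-join-free} queries is exactly the point that makes the argument go through.
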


\paragraph*{Network Reliability.}
The \emph{two-terminal network reliability problem}, dubbed \emph{ST-CON} for
brevity,
intuitively asks 
the following: given a directed graph with probabilistic edges and 
with
source and target vertices $s$ and $t$, compute the probability that $s$ and $t$ remain connected,
assuming independence across edges.
Formally, working in the unlabelled setting of a signature $\sigma$ with
$|\sigma| = 1$, we are given a probabilistic graph $(H, \pi)$ on
signature~$\sigma$ together with two vertices $s$ and $t$.
We must compute
the probability $\sum_{H' \subseteq H\ \text{s.t.\ there is an } st-\text{path
in } H'} \Pr_\pi(H')$,
where by \emph{$st$-path} we mean a directed path from~$s$ to~$t$.
Valiant showed that ST-CON is $\SP$-complete~\cite[Theorem~1]{V79},
and
Provan and Ball showed that 
this holds already 
on directed acyclic graphs~\cite[Table~2]{PB83}. 
Hardness also holds for the related problem of \emph{all-terminal reliability}~\cite[Table~1]{PB83},
which asks 
for the probability that the probabilistic graph remains connected as a whole.
Given the inherent \SPhardness of these problems, 
subsequent research has focused on developing tractable approximations.

Although significant progress has been made 
on FPRASes for the related problems of \emph{all-terminal
(un)reliability}~\cite{GJ19,K01}, 
designing an FPRAS for ST-CON has remained open. This question was even open for the restricted case of directed acyclic graphs;
indeed, it was explicitly posed as an open problem by Zenklusen and Laumanns~\cite{ZL11}.
We now point out that the nOBDD construction of
Proposition~\ref{prop:phoml-owp-dagi} implies an FPRAS for ST-CON on DAGs, again by leveraging the approximate counting result of Arenas et al.~\cite{ACJR21}:

\begin{thm}\label{thm:twoterminal-dag-fpras}
	There exists an FPRAS for the ST-CON problem over directed acyclic graphs.
\end{thm}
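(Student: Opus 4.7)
My plan is to construct, in polynomial time, a polynomial-size nOBDD $D$ whose weighted model count equals the $s$-$t$ reliability probability of the input probabilistic DAG, and then to invoke Theorem~\ref{thm:nobdd-weighted-count} to obtain the FPRAS. The construction is a direct generalization of the one used in Proposition~\ref{prop:phoml-owp-dagi}, except that instead of indexing decision nodes by a fixed query length I exploit acyclicity to unfold reachability in a single diagram; an alternative presentation would be to take the union of the length-$k$ path-query nOBDDs from Proposition~\ref{prop:phoml-owp-dagi} (with the last-layer decision nodes modified so that only edges targeting $t$ point to the $1$-sink) for $1 \leq k \leq |V|-1$ and merge them under a common $\lor$-root.

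Concretely, given the DAG $H = (V,E)$ with source $s$ and target~$t$, I would fix a topological edge ordering~$\prec$ as defined before Proposition~\ref{prop:phoml-owp-dagi}, and build $D$ with one $\lor$-node $n_v$ per vertex~$v$ and one decision node $d_e$ per edge~$e$. From each $n_v$ we have an unlabelled edge into $d_e$ for every outgoing edge $e$ of~$v$; each $d_{(v,w)}$ has its $1$-edge to $n_w$ and its $0$-edge to the $0$-sink; and the node $n_t$ has an additional edge directly to the $1$-sink. The root is $n_s$. A root-to-$1$-sink path compatible with a valuation~$\nu$ then corresponds to picking, at each visited $\lor$-node, some outgoing edge of the current vertex whose variable $\nu$ sets to~$1$, until $t$ is reached; this is exactly an $s$-$t$ path in the subgraph $(V, \{e : \nu(e) = 1\})$, so correctness follows by an easy bijection argument.

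Verifying that $D$ is a valid nOBDD with variable order~$\prec$ reduces to noting that any directed $D$-path between decision nodes $d_{(v,w)}$ and $d_{(v',w')}$ must traverse $n_w$, forcing $v' = w$; by the defining property of topological edge orderings, $(v,w) \prec (w,w')$, so along any root-to-sink path variables are tested in $\prec$-order and each at most once. Since $\gsize{D} = O(|V| + |E|)$ and $\wmc(D, \pi)$ equals the ST-CON probability, Theorem~\ref{thm:nobdd-weighted-count} delivers an FPRAS running in time polynomial in $\gsize{H}$ and $\epsilon^{-1}$. The main technical points to get right are the formal ordering verification above, and the trivial edge case of vertices without outgoing edges, which I would handle by adding a dummy edge to the $0$-sink from any $n_v$ (with $v \neq t$) that would otherwise be a dead end; I do not foresee any substantive obstacle beyond that.
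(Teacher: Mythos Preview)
Your construction is correct and in fact cleaner than what the paper does. The paper reduces to Proposition~\ref{prop:phoml-owp-dagi} by relabelling the input DAG so that edges out of~$s$ carry a dedicated label~$R_s$, edges into~$t$ carry~$R_t$, the edge $(s,t)$ (if any) carries~$R'$, and all other edges carry~$R$; it then builds, for each path length $1\le k\le |E|$, the nOBDD for the labelled \owp query $R_s R^{k-2} R_t$ (plus $R'$) using the black box of Proposition~\ref{prop:phoml-owp-dagi}, and takes the disjunction under a common $\lor$-root. This is exactly the ``alternative presentation'' you sketch, except that the paper uses labels rather than tweaking the last layer to enforce start and end at~$s$ and~$t$.

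Your primary construction bypasses the length-by-length decomposition entirely: a single layer of $\lor$- and decision nodes, one per vertex and edge, already captures $s$-$t$ reachability once you send $n_t$ to the $1$-sink. The gain is an nOBDD of size $O(|V|+|E|)$ rather than the cubic-size disjunction the paper obtains. One small wording point: your ordering check as phrased (``forcing $v'=w$'') is literally true only for \emph{consecutive} decision nodes along a $D$-path; for arbitrary pairs you need transitivity of~$\prec$ along the corresponding $H$-walk, which is of course immediate but worth saying explicitly.
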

\begin{proof}
  Given as input an unlabelled probabilistic \dagi instance $H = (V, E)$ and two distinguished source and target vertices $s$ and $t \in V$, construct as follows the labelled \dagi instance $H' = (V, E, \lambda)$, whose set of labels is $\{R, R_s, R_t, R'\}$.
  All vertices and edges are identical to that of $H$, but every edge of the form $(s, x)$ emanating from $s$ is assigned label $\lambda((s,x)) = R_s$, every edge $(x, t)$ directed towards $t$ is assigned label $\lambda((x, t)) = R_t$, and every other edge $(x, y)$ is assigned the label $\lambda((x,y)) = R$. In the case that $(s,t) \in E$, then assign $\lambda((s,t)) = R'$.
	
	Now, by the result in Proposition~\ref{prop:phoml-owp-dagi}, we can construct an nOBDD for each of the following $|E|$ different labelled \owp queries: $\ledge{}{R'}{}$, $\ledge{}{R_s}{} \ledge{}{R_t}{}$, $\ledge{}{R_s}{} \ledge{}{R}{} \ledge{}{R_t}{}$, \dots, $\ledge{}{R_s}{} \left( \ledge{}{R}{} \right)^{|E|-2} \ledge{}{R_t}{}$. All of the nOBDDs have the same ordering (given by some topological ordering of the edges of $H'$), so we may take their disjunction to obtain a (complete) nOBDD $D$ in linear time, whose accepting paths are in bijection with the $(s,t)$-connected valuations of the edges in $H$. From here 
        we conclude by applying Theorem~\ref{thm:nobdd-weighted-count}.
\end{proof}
\noindent We remark that an improved running time bound for 
Theorem~\ref{thm:twoterminal-dag-fpras} was 
obtained independently by Feng and Guo~\cite{FG24}.
It may also be possible to improve the running time bounds obtained for our approach by
leveraging recent faster algorithms for nOBDD approximate
counting such as~\cite{meel2024faster}, which give improved bounds for Theorem~\ref{thm:nobdd-unweighted-count}.

\section{Regular Path Queries}\label{sec:rpqs}
In this section, we investigate the approximability of \textit{regular path 
queries} (RPQs) on probabilistic graphs, measured in \textit{data complexity}. 
Note that this differs from the rest of the paper in two respects. First, 
RPQs are distinct in expressiveness as a query class from 
conjunctive queries (CQs) studied earlier: they are generally not even expressible
as \textit{unions} of conjunctive queries (UCQs). Second, we focus on data 
complexity (\ie, 
treating the query as fixed, and the probabilistic graph as the sole input) in 
this section. This is in contrast to the combined complexity setting considered 
elsewhere in the paper; data complexity is relevant for RPQs because computing their
probability on a probabilistic graph is not necessarily approximable,
unlike CQs and UCQs which are always approximable in data complexity. It 
turns out that many of the techniques and results 
from earlier in the paper have direct applications for approximability in 
data complexity of RPQs, motivating the 
shift of focus here.

\paragraph*{Preliminaries.}
We briefly review some background on languages, automata, and RPQs relevant 
to this section. An 
\textit{alphabet} $\Sigma$ is some finite set of symbols, which we typically 
denote with lowercase letters, \ie, $a, b, c, \dots$ A \textit{word} 
over $\Sigma$ is a finite sequence of symbols from $\Sigma$. We denote by 
$\Sigma^*$ the (infinite) set of all possible words over $\Sigma$, and call a 
subset $L \subseteq \Sigma^*$ of words a \textit{language} over $\Sigma$.

We define 
a
\emph{deterministic finite automaton} (DFA) over an alphabet $\Sigma$ as a 
tuple 
$A
= (Q, \Sigma, q_0, F, \delta)$  where $Q$ is a finite set of \emph{states},
$\Sigma$ is the alphabet, $q_0 \in Q$
is the \emph{initial state}, $F \subseteq Q$ are the \emph{final states}, and
$\delta\colon Q \times \Sigma \to Q$ is the transition function. Let $w = 
a_1a_2\dots a_n$ be a word over the alphabet $\Sigma$. We say that $A$ 
\textit{accepts} $w$ if there exists some sequence of states $r_0, r_1, 
\dots, r_n$ in $Q$ such that: (1.) $r_0 = q_0$, (2.) $r_{i+1} = \delta(r_i, 
a_{i+1})$ for all $0 \leq i \leq n -1$, and (3.) $r_n \in F$. The set of 
strings accepted by $A$ forms a language over $\Sigma$, which we call the 
language \textit{recognized} by $A$. We call a language \textit{regular} if 
it is accepted by some DFA. We often also use a regular expression syntax 
as shorthand for specifying regular languages: for instance, $ab^* c^+$ 
denotes the language of the words that start with $a$, continue with zero or
more $b$'s, and then finish by one or more $c$'s.

We define a \textit{regular path query} (RPQ) $Q$ on a regular language $L$ on
alphabet~$\Sigma$, 
denoted $Q = \rpq(L)$, as a Boolean query whose semantics are given as 
follows: a labelled graph $H$ with signature $\sigma := \Sigma$
\emph{satisfies} $\rpq(L)$ iff there exists some 
\owp query $G = {\ledge{}{a_1}{} \cdots \ledge{}{a_n}{}}$ such 
that $a_1 \dots a_n \in L$ and $G \homo H$.

\begin{exa}
  For singleton languages $L_w = \{w\}$ containing precisely one word~$w$, then
  $\rpq(L_w)$ is equivalent to the \owp query formed from~$w$. As another
  example, when we take $L_0 = a b^* c$, then $\rpq(L_0)$ holds on the labelled 
  graphs
  such that there is a walk (i.e., a path which is not necessarily
  simple) which consists of $b$-edges and goes from an $a$-edge to a $c$-edge.
\end{exa}

Clearly, the representation of 
$Q$ is determined by the representation of the underlying regular language 
$L$, and so for concreteness we may assume throughout this section that $L$ 
is specified as a DFA. However, as we use the data complexity perspective, it does not matter much: all results continue to hold if, for example, 
$L$ is specified as a regular expression.

We study the \emph{probabilistic query evaluation} problem for RPQs in
\emph{data complexity}, i.e., for each
RPQ $Q = \rpq(L)$ on alphabet~$\Sigma$, we study a problem $\pqe{Q}$ defined
as follows. The input to $\pqe{Q}$ is a probabilistic graph $(H, \pi)$ on the 
set of labels $\sigma := \Sigma$, and $\pqe{Q}$ asks for the total probability
of the subgraphs $H' \subseteq H$ of~$(H, \pi)$ that satisfy $Q$ according to
the definition above.
Note that fixing the query~$Q$ fixes in particular the signature~$\sigma$ over which input instances are expressed.
In the statement of our upper bounds, we will also
consider the variant of $\pqe{Q}$ where the input is a probabilistic arity-two database
instead of a probabilistic graph, and mention it explicitly when stating the
upper bounds.

Some of our results on RPQ will use the notion of \emph{Boolean provenance} for
RPQs, which is defined in the same way as for the queries that we studied 
earlier in the paper. Recalling the notion of a \emph{valuation} of a 
graph~$H$, fixing an RPQ $Q$,
the \emph{provenance} of~$Q$ on~$H$ is the Boolean function
$\prov^Q_H$ having as variables the edges~$E$ of~$H$ and mapping
every valuation $\nu$ of~$E$ 
to~$1$ (true) or $0$ (false) depending on whether $H_\nu$ satisfies~$Q$ or not.
Like Definition~\ref{def:provrepresents}, given a 
Boolean formula $\phi$ whose variables $\{e_1, \ldots, e_n\} \subseteq E$ are 
edges of $H$, we say that $\prov_H^Q$ \emph{represents} $\phi$ on $(e_1,...,e_n)$ if 
for every valuation $\nu: E \rightarrow \{0,1\}$ that maps edges not in 
$\{e_1,...,e_n\}$ to $1$, we have $\nu\models\phi$ if and only if 
$\prov_H^Q(\nu)=1$.

\paragraph*{Infix-Free Languages.}
When evaluating RPQs on graphs, we have no fixed ``endpoints'' constraining 
where a query match must begin and end. Accordingly, 
distinct languages may be equivalent, in the sense that they give rise to the 
same RPQ. 
For example, we have that $\rpq(a) = \rpq(aa^*)$, since any labelled graph that 
contains a match of $aa^*$ also contains a match of $a$, and vice versa. To 
formalize this idea and relate these languages, we assume that every language 
in this section is \textit{infix-free}, in a sense that we define below.

	Let $L$ be a regular language over some alphabet $\Sigma$.
        Define a partial order 
	$\preccurlyeq_L$ over words of $L$ as follows: for words $u, v \in L$, we 
	have $u \preccurlyeq_L v$ iff there exists words $s, t \in \Sigma^*$ such 
	that $v 
	= 
	sut$. We call $u$ an \textit{infix} of $v$, and a \textit{strict infix} 
	of $v$ if additionally $u \neq v$. The 
	\textit{infix-free sublanguage} of $L$, denoted $\minl(L)$, is the 
	(possibly 
	infinite) set of minimal elements of $\preccurlyeq_L$. We further say 
	that a language $L$ is \textit{infix-free} if $L = \minl(L)$.
Note that the infix-free sublanguage of an infinite language may be finite; 
recalling the example above, $\minl(aa^*) = a$. The following proposition 
establishes that the infix-free sublanguage of any regular language is itself 
regular: it immediately follows from the observation that $\minl(L) = L 
\setminus ((\Sigma^+ L \Sigma^*) \cup (\Sigma^* L \Sigma^+)) = L \cap 
((\Sigma^+ L \Sigma^*) \cup (\Sigma^* L \Sigma^+))^C$, given that regular
languages are closed under union, intersection, concatenation, and 
complementation.
\begin{propC}[{\cite[Proposition~6]{PR10}}]
	Let $L$ be a regular language. Then $\minl(L)$ is regular.
\end{propC}
Moreover, a regular language gives rise to the same RPQ as that of  
its infix-free sublanguage.
\begin{prop}
	Let $L$ be a regular language. Then $\rpq(L) = \rpq(\minl(L))$.
\end{prop}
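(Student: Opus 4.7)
The inclusion $\rpq(\minl(L)) \Rightarrow \rpq(L)$ is immediate since $\minl(L) \subseteq L$ by definition, so any witness for $\rpq(\minl(L))$ on a graph $H$ is also a witness for $\rpq(L)$.

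For the converse, suppose a labelled graph $H$ satisfies $\rpq(L)$. Then there exists a \owp query $G = \ledge{}{a_1}{} \cdots \ledge{}{a_n}{}$ with $w := a_1 \cdots a_n \in L$ and a homomorphism $h \colon G \to H$. The plan is to pass from $w$ to a suitable infix $w'$ which belongs to $\minl(L)$, and observe that the restriction of $h$ witnesses a match of $w'$ in $H$.

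Concretely, consider the set $S := \{u \in L : u \preccurlyeq_L w\}$ of all words of $L$ that occur as infixes of~$w$. This set is non-empty (it contains $w$) and finite, since $w$ has only finitely many infixes. Pick a minimal element $w'$ of $S$ under $\preccurlyeq_L$. I claim $w' \in \minl(L)$: if some $u \in L$ satisfied $u \preccurlyeq_L w'$ strictly, then by transitivity of the infix relation we would have $u \preccurlyeq_L w$, so $u \in S$, contradicting the minimality of~$w'$ in~$S$. Write $w' = a_i \cdots a_j$ for some $1 \leq i \leq j+1 \leq n+1$ (allowing $w' = \epsilon$ when $j = i-1$), and let $G'$ be the corresponding sub-\owp of~$G$, obtained by keeping only the vertices of~$G$ indexed from~$i$ to~$j+1$ and the edges between them.

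The restriction of $h$ to the vertex set of~$G'$ is then a homomorphism $G' \to H$: each edge of~$G'$ is an edge of~$G$, so the homomorphism condition is inherited directly from~$h$ (in the degenerate case $w' = \epsilon$, the graph~$G'$ consists of a single vertex with no edges, and any vertex of~$H$ serves as an image, using that vertex sets are non-empty by our definition of graphs). Since $w' \in \minl(L)$, this shows that $H$ satisfies $\rpq(\minl(L))$, completing the proof. The only non-routine point is the verification that a word minimal among the infixes of~$w$ that lie in~$L$ is in fact globally minimal in $(L, \preccurlyeq_L)$, which uses transitivity of the infix relation as above.
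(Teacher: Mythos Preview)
Your proof is correct and follows essentially the same approach as the paper: use $\minl(L) \subseteq L$ for one direction, and for the other pass from a witness word $w \in L$ to an infix in $\minl(L)$ and restrict the homomorphism accordingly. If anything, you are more careful than the paper in justifying that some infix of~$w$ actually lies in $\minl(L)$, by explicitly appealing to finiteness of the infix set and transitivity of~$\preccurlyeq_L$, whereas the paper simply asserts this ``by construction.''
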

\begin{proof}
	Consider an input graph $G$ with signature $\Sigma$. We show that 
	$\rpq(L)$ evaluates to true iff 
	$\rpq(\minl(L))$ evaluates to true. We first show the forward direction; 
	assume that $\rpq(L)$ evaluates to true on $G$, and let $w \in L$ be a 
	word labelling some corresponding walk $\pi$ in $G$. By the construction of 
	$\minl(L)$, there exists $v \in \minl(L)$ such that there exist 
	(possibly empty) words $s, t \in \Sigma^*$ so that $w = svt$. Thus, we 
	must also 
	have a match from the word $v \in \minl(L)$ on $G$, and so 
	$\rpq(\minl(L))$ evaluates to true: just consider the 
	appropriate subgraph of $\pi$ witnessing a match of $w$. For the 
	converse, suppose that $\rpq(\minl(L))$ evaluates to true on $G$, and let 
	$w \in \minl(L)$ be a word labelling some corresponding
        walk in $G$. Then 
	since $\minl(L) \subseteq L$, we have that $w \in L$ and so $\rpq(L)$ 
	evaluates to true as well.
\end{proof}

Let $L$ be an infix-free regular language. We call $\rpq(L)$ \textit{bounded} if 
$L$ is finite, and \textit{unbounded} otherwise.

\paragraph*{Bounded RPQs.}
We observe that PQE can be tractably approximated for \textit{every} bounded
RPQ, because bounded RPQs are just a restricted kind of union of conjunctive queries 
(UCQs) on
binary signatures, for which PQE is known to admit an 
FPRAS~\cite{DBLP:series/synthesis/2011Suciu}. This uses
the Karp-Luby FPRAS (essentially a special case of the result in 
Theorem~\ref{thm:nobdd-weighted-count}):

\begin{prop}
  \label{prp:finiterpqfpras}
	Let $Q$ be a bounded RPQ. Then $\pqe{Q}$ admits an FPRAS.
        This holds even if the input instance is a probabilistic arity-two database.
\end{prop}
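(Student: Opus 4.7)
The plan is to reduce bounded RPQs to a standard approximation result for UCQs on probabilistic databases via an explicit polynomial-sized monotone DNF representation of the provenance. Since $Q = \rpq(L)$ is bounded, the minified language $L$ is finite: write $L = \{w_1, \dots, w_k\}$ and let $\ell = \max_i |w_i|$. Both $k$ and $\ell$ are constants in data complexity. By the definition of $\rpq$, we have the semantic equivalence $\rpq(L) \equiv \bigvee_{i=1}^k \rpq(\{w_i\})$, and each $\rpq(\{w_i\})$ is exactly the \owp query whose edge labels spell out $w_i$. Thus $Q$ is effectively a UCQ whose disjuncts are constant-size path queries.

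Next, I would construct in polynomial data-complexity time a monotone DNF representation $\phi$ of the Boolean provenance of $Q$ on the input probabilistic graph (or arity-two database) $(H,\pi)$. For each $w_i = a_1 \cdots a_{n_i}$ and for each sequence of edges $(f_1, \dots, f_{n_i})$ in $H$ such that $f_j$ has label $a_j$ and the target of $f_j$ equals the source of $f_{j+1}$, add a clause consisting of the positive literals for $f_1, \dots, f_{n_i}$ (collapsing duplicates if the walk revisits an edge). The number of such walks is bounded by $|V(H)| \cdot |E(H)|^{\ell - 1}$, which is polynomial in $\gsize{H}$ since $\ell$ is fixed; summing over the $k$ choices of $w_i$ still yields polynomial size. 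By construction, a valuation $\nu$ satisfies $\phi$ iff $H_\nu$ contains a walk whose label sequence lies in $L$, i.e., iff $H_\nu \models Q$; so $\phi$ represents $\prov^Q_H$.

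Finally, to approximate $\Pr_\pi(Q) = \wmc(\phi, \pi)$, I would invoke the Karp-Luby FPRAS for weighted model counting of monotone DNF formulas, which is the classical route used for UCQs on TID instances~\cite{DBLP:series/synthesis/2011Suciu}. Alternatively, since any monotone DNF can be viewed trivially as an nOBDD (one path per clause, with a fixed variable ordering), one can apply Theorem~\ref{thm:nobdd-weighted-count} directly to $\phi$. Either way, this produces an $(\epsilon,3/4)$-multiplicative approximation in time polynomial in $\gsize{\phi}$ and $\epsilon^{-1}$, hence polynomial in $\gsize{H}$ and $\epsilon^{-1}$. The argument is insensitive to whether $H$ is a graph or an arity-two database, since the walk enumeration treats $E(H)$ as an arbitrary labelled edge set.

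There is no serious obstacle: the only point to verify is that $\phi$ has polynomial size in data complexity, which is immediate from the fact that $|L|$ and $\max_{w \in L} |w|$ are constants once $Q$ is fixed. The bounded-length assumption is essential here—allowing unbounded walks would blow the DNF up, motivating the separate, harder analysis needed for unbounded RPQs.
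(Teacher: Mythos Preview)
Your proposal is correct and follows essentially the same approach as the paper: observe that a bounded RPQ is equivalent to a fixed UCQ (a finite disjunction of \owp queries), compute its provenance as a polynomial-size DNF in data complexity, and apply the Karp-Luby FPRAS. The paper's proof is terser, citing~\cite{BF019} for the UCQ equivalence rather than spelling out the walk enumeration, but the substance is the same.
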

\begin{proof}
	By \cite[Proposition~5, Theorem~11]{BF019}, $Q$ is equivalent to a UCQ
        which can be computed
	in time independent of the instance graph size, i.e., in constant time
        in data complexity.
        We can then compute the provenance of the fixed query $Q$ on the input
        graph, and obtain it as a disjunctive normal form (DNF) formula, in polynomial-time data
        complexity.
        We may then apply the Karp-Luby FPRAS 
        to compute the probability of this 
	UCQ~\cite[Section~5.3.2]{DBLP:series/synthesis/2011Suciu}.
\end{proof}

We further note that, for some bounded RPQs, we can even solve $\pqe{Q}$ 
exactly
in $\PTIME$: for instance for the language $ab$. By contrast, there are other
bounded RPQs, for example the language $abc$, for which the problem is 
\SPhard.
In fact, there is a dichotomy for
$\pqe{Q}$ over the 
bounded 
RPQs $Q$ between $\PTIME$ and $\SP$: this follows from the more general
Dalvi-Suciu dichotomy on PQE for UCQs~\cite{DBLP:journals/jacm/DalviS12}.

\paragraph*{Unbounded RPQs.}
For unbounded RPQs, we know that exact PQE is always \SPhard, by the
results of~\cite{amarilli2022dichotomy}. Indeed, recall that a query is 
said to be \textit{closed under homomorphisms} if the following holds: if a 
graph $H$ satisfies the query and $H \homo H_0$, then $H_0$ also 
satisfies the query. The following is easy to observe:

\begin{prop}
  Every RPQ is closed under homomorphisms.
\end{prop}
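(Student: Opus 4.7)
The plan is to use a straightforward composition-of-homomorphisms argument, exploiting the fact that satisfaction of an RPQ is defined via existence of a homomorphism from a \owp query.

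Suppose that $H$ satisfies $\rpq(L)$ and that $h\colon H \homo H_0$ is a homomorphism. By definition of RPQ satisfaction, there exists a word $w = a_1 \cdots a_n \in L$ and a \owp query $G = \ledge{}{a_1}{} \cdots \ledge{}{a_n}{}$ together with a homomorphism $g\colon G \homo H$. The key step is to observe that the composition $h \circ g\colon G \to H_0$ is again a homomorphism: for every edge $\ledge{x}{a_i}{y}$ of~$G$, the edge $(g(x), g(y))$ is an edge of~$H$ with label~$a_i$, and since $h$ is a homomorphism, $(h(g(x)), h(g(y)))$ is an edge of~$H_0$ with label~$a_i$. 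Hence $G \homo H_0$ via~$h \circ g$, and since the word $a_1 \cdots a_n$ still belongs to~$L$, we conclude that $H_0$ also satisfies $\rpq(L)$.

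There is essentially no obstacle here: the result holds because RPQ satisfaction is defined existentially via homomorphisms from a family (indexed by words of~$L$) of \owp query graphs, and this class of definability is closed under homomorphisms by compositionality. Note that it is not a concern that the vertices of~$G$ are required to be pairwise distinct in the definition of a \owp, since this is a property of the fixed query~$G$ and is unaffected by the target of the homomorphism.
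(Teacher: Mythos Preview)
Your proof is correct and follows essentially the same approach as the paper: both argue that if $G \homo H$ witnesses satisfaction of the RPQ and $H \homo H_0$, then the composition $G \homo H_0$ witnesses satisfaction of the RPQ on~$H_0$. Your version is slightly more explicit about verifying that the composition preserves edge labels, but the core argument is identical.
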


\begin{proof}
  Let $Q = \rpq(L)$ be an RPQ. Suppose $H$ satisfies $Q$. Then by definition, 
  there exists some \owp instance $G = {\ledge{}{a_1}{} \cdots \ledge{}{a_n}{}}$ 
  such that $a_1 \dots a_n \in L$ and $G \homo H$. Now suppose $H \homo H_0$. 
  Since the composition of two homomorphisms is itself a homomorphism, we 
  have that $G \homo H_0$, and so $H_0$ satisfies $Q$ as required.
\end{proof}

Hence, PQE for unbounded RPQs is \SPhard, because PQE is \SPhard in 
data complexity for every unbounded homomorphism-closed query on an arity-two
signature~\cite[Theorem~3.3]{amarilli2022dichotomy}.%

In terms of approximation, we now show that the PQE problem for unbounded
RPQs is always at 
least as hard as the two-terminal network reliability problem (ST-CON) in 
directed graphs:

\begin{prop}
  \label{prp:stcon2rpq}
	Let $Q$ be any fixed unbounded RPQ on alphabet~$\Sigma$. There is a 
	polynomial-time reduction from
    ST-CON to PQE
    for~$Q$. Formally, given an unlabelled graph $(H, \pi)$
    with probabilistic edges and given vertices $s$ and~$t$ of~$H$,
    we can build in polynomial time a labelled graph $(H', \pi')$ with
    probabilistic edges on signature $\Sigma$ and vertices $s'$ and $t'$
    of~$H'$ such that the answer to ST-CON on $(H, \pi)$ is the same as 
    the answer to $\pqe{Q}$ on~$(H', \pi')$.
\end{prop}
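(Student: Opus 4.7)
My plan is to encode ST-CON by an edge-replacement gadget construction driven by the pumping structure of $L$. Since $L$ is minified and infinite, the first step is to apply the pumping lemma to a DFA for $L$ to obtain a decomposition $uvw \in L$ with $v$ non-empty and $uv^nw \in L$ for every $n \geq 0$. A short argument using $L = \minl(L)$ then forces $u$ and $w$ also to be non-empty: if for instance $u = \epsilon$, then $w = v^0 w \in L$ would be a proper infix of $vw \in L$, contradicting minification; the case $w = \epsilon$ is symmetric.

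I would then build $(H', \pi')$ in polynomial time by keeping the vertex set of $H$, attaching a deterministic $u$-labeled directed path from a fresh source $s'$ to $s$, replacing each probabilistic edge $e = (x,y)$ of $H$ by a fresh $v$-labeled directed path of length $|v|$ from $x$ to $y$ in which one designated edge inherits the probability $\pi(e)$ and the other $|v|-1$ edges are deterministic, and attaching a deterministic $w$-labeled directed path from $t$ to a fresh sink $t'$. The probabilistic edges of $H'$ are then in probability-preserving bijection with those of $H$, so $H$-valuations $\nu$ and $H'$-valuations $\nu'$ carry equal measure. For the forward direction, any $s$-$t$ path of length $k$ in $H_\nu$ lifts to a walk from $s'$ to $t'$ in $H'_{\nu'}$ labeled $uv^k w \in L$, satisfying $Q$.

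The main obstacle is the backward direction. The prefix $P$, the suffix $S$, and the internal vertices of each $v$-gadget all have out-degree one, so any walk $\rho$ in $H'$ proceeds monotonically and decomposes cleanly into forward pieces: an optional tail of $P$ ending at $s$, a middle segment in the junction-and-gadget region $M$, and an optional head of $S$ starting at $t$. A case analysis shows that the label of $\rho$ takes the form $u_s \cdot v^k \cdot w_p$ when $\rho$ crosses all three regions, or a degenerate variant such as $v'' v^k v'$ when $\rho$ is confined to $M$, where $u_s$, $w_p$, $v''$, $v'$ are respectively a suffix of $u$, a prefix of $w$, a suffix of $v$, and a prefix of $v$. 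The key computation is that every such label embeds as an infix of a suitable $uv^{k'}w \in L$: for the confined-to-$M$ case this uses the explicit factorization $uv^{k+2}w = (uv_1)(v'' v^k v')(v_2 w)$, writing $v = v_1 v'' = v' v_2$, and the infix is \emph{strict} because $|uv_1| \geq |u| \geq 1$. Minification of $L$ therefore rules out any such label lying in $L$, with the sole exception being the boundary case $u_s = u$ and $w_p = w$ forced by the length comparison $|u| - |u_s| + |w| - |w_p| = 0$, which exactly says that $\rho$ starts at $s'$, traverses $k$ complete gadgets through $M$, and ends at $t'$, thereby witnessing an $s$-$t$ path of length $k$ in $H_\nu$.
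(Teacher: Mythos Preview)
Your proposal is correct and follows essentially the same approach as the paper's proof: pump a minified infinite $L$ to get $uvw$ with $uv^nw\in L$ for all $n$, argue from minification that $u,w\neq\epsilon$, then build $H'$ by attaching a deterministic $u$-path before $s$, a deterministic $w$-path after $t$, and replacing each $H$-edge by a $v$-path carrying the original probability on one edge. Your backward direction spells out explicitly the case analysis (showing every walk label in $H'$ is an infix of some $uv^{k'}w$, and a \emph{strict} infix unless the walk runs from $s'$ to $t'$) that the paper simply asserts with ``It is easy to see that the label for any path in $H'$ forms an infix of $xy^nz$''; the added detail, especially the factorization $uv^{k+2}w=(uv_1)(v''v^kv')(v_2w)$ and the strictness via $|u|\geq 1$, is a welcome clarification of that step.
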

\begin{proof}
	Let $Q = \rpq(L)$ be the unbounded RPQ. We may assume without loss of 
	generality that $L$ is infix-free (for if not, we can simply make it
        infix-free,
        and the answer to $\pqe{Q}$ is unchanged). We 
	build our reduction from ST-CON on the unlabelled probabilistic graph 
	$(H, \pi)$, with distinguished vertices $s$ and $t$.
	
	By the pumping lemma, there is a word $xyz \in L$ 
	comprising the concatenation of three subwords $x$, $y$, and $z$ such 
	that $xy^nz \in L$ for all $n \geq 0$, and $y \neq \epsilon$. We now show 
	that $x \neq \epsilon$ and $z \neq \epsilon$. Indeed, suppose 
	for a contradiction that $x = \epsilon$: then we have $x y^0 z = z \in L$ and 
	so $yz \not\in L$ (since $L$ is infix-free), a contradiction. Similarly, if 
	$z = \epsilon$, we have $x \in L$ and so $xy \not\in L$, again a 
	contradiction.
	
	Now, we construct a labelled probabilistic graph $(H', \pi')$ from the 
	unlabelled ST-CON instance graph $(H, \pi)$ as follows. $H'$ is identical 
	in structure to $H$, except for the addition of a new vertex $x_i$, 
	connected to~$s$ via an $x$-labelled path whose edges all have probability 1,
        and similarly we add a new path connecting~$t$ to
        a new vertex $x_e$ via a $z$-labelled path whose edges all have
        probability 1.
        Furthermore, the original unlabelled 
	edges of $H'$ are each replaced by an $y$-labelled path, with the first edge 
	of the path carrying the same probability as the original corresponding 
        edge in $H$, and each remaining edge of the path (if any) carrying probability 1.
	
	It remains to show correctness, i.e., that the answer to ST-CON on $(H, 
	\pi)$ is the same as the answer to $\pqe{Q}$ on~$(H', \pi')$. Indeed, 
	consider the probability-preserving bijection between subgraphs of $(H, 
	\pi)$ and $(H', \pi')$ defined in the natural way: keep an edge in $H$ 
	iff its 
	corresponding edge in $H'$ is kept (preserving all the additional edges 
	of probability 1 that arose from the construction of $H'$). It is clear 
	that for every deterministic subgraph $H_d \subseteq H$ such that $t$ is 
	reachable from 
	$s$, its counterpart $H'_d \subseteq H'$ contains a path labelled $xy^nz$ 
	for some $n \geq 0$, and thus $H'_d$ satisfies $Q$ as desired. We now 
	show that for any $H_d \subseteq H$ such that $t$ is \textit{not} 
	reachable from $s$, its counterpart $H'_d \subseteq H'$ fails to satisfy 
	$Q$. It is easy to see that the label for any path in $H'$ forms an infix 
	of $xy^nz$ for some $n \geq 0$. Thus, the label for any path in $H'_d$ 
	is a strict infix of $xy^nz$, since if the path was labelled $xy^nz$, 
	there would be a path from $x_i$ to $x_e$ in $H'_d$, and thus from $s$ to 
	$t$ in $H_d$. Finally, as $L$ is infix-free, and $xy^nz \in L$, no strict 
	infix of $xy^nz$ can belong to $L$. This implies that no path in $H'_d$ 
	is labelled with a word of $L$, and so $H'_d$ does not satisfy $Q$.
\end{proof}

This result implies that finding an FPRAS for $\pqe{Q}$ for \emph{even one} 
unbounded RPQ would imply the existence of an FPRAS for ST-CON, solving a 
long-standing open problem. It is also easy to see that, for some specific 
unbounded RPQs $Q$, the problem $\pqe{Q}$ is in fact polynomial-time 
equivalent to ST-CON.

\begin{exa}
  \label{exa:stconrpq}
  Consider the RPQ $Q$ for the (infix-free) language $a b^* c$. Then $Q$ is
  unbounded, so by Proposition~\ref{prp:stcon2rpq} we know that ST-CON 
  reduces in
  polynomial-time to $\pqe{Q}$. However, the converse is also true: given a
  probabilistic graph $(H, \pi)$ on $\Sigma = \{a, b, c\}$, we can build an
  unlabelled probabilistic graph $(H', \pi')$ in the following way.
  Initialize $H'$ as the graph of the $b$-edges of~$H$. Now, add to~$H'$ a 
  source $s$ and target
  $t$. Then connect $s$ with an edge in~$H'$ to the target of each $a$-edge $e$ in~$H$
  with probability $\pi(e)$, and likewise connect the source of each $c$-edge
  $e$ in~$H$ with an edge to~$t$ in~$H'$ with probability $\pi(e)$. It is now
  easy to see that the answer to ST-CON on~$(H', \pi')$ is the same as that
  of~$\pqe{Q}$ on~$(H, \pi)$.
\end{exa}

Analogously to this example, we can show that $\pqe{Q}$ is equivalent to ST-CON
for any unbounded RPQ $Q$ whose infix-free language is a so-called \emph{local
language}\footnote{This connection was realized in an independent collaboration with Gatterbauer, Makhija,
and Monet~\cite{amarilli2025resilience}, in which similar techniques are used to solve a different 
problem.}. The class of \emph{local languages}, which includes for instance
$ab^*c$ from the previous example, is a class of regular languages admitting
several equivalent characterizations. In particular, a language is local iff it
is recognized by a so-called \emph{local automaton}. Recall the definition of 
a deterministic finite automaton (DFA) over an alphabet $\Sigma$ as a tuple $A
= (Q, \Sigma, q_0, F, \delta)$. 
A DFA is called \emph{local} if all transitions labelled with the same letter 
lead to
the same state; formally for each letter $a \in \Sigma$, there exists a 
unique state $q_a \in Q$ such that $\delta(q, a) = q_a$ for each $q \in Q$.

We claim:

\begin{prop}
  \label{prp:rpq2stcon}
	Let $Q$ be any fixed unbounded RPQ on alphabet~$\Sigma$ whose infix-free
        language $L$ is local. Then there is a polynomial-time reduction from
        $\pqe{Q}$ to ST-CON.
        This holds even if the input instance is a probabilistic arity-two database.
\end{prop}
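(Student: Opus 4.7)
The plan is to exploit the local DFA of $L$ to build a product-style graph in which $st$-reachability captures exactly the existence of a walk whose label lies in~$L$. Since $Q$ is unbounded and $L$ is minified, $L$ is infinite and in particular $\epsilon \notin L$, so $q_0 \notin F$. Fix any local DFA $A = (Q_A, \Sigma, q_0, F, \delta)$ recognizing~$L$, and let $a \mapsto q_a$ denote the function such that for every $a \in \Sigma$ and every $q \in Q_A$, either $\delta(q, a)$ is undefined or $\delta(q, a) = q_a$. Since $Q$ is fixed, so is $A$, and the values $q_a$ can be precomputed.

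Given an input probabilistic arity-two database $(H, \pi)$ over $\Sigma$, I would construct the unlabelled probabilistic graph $(H', \pi')$ as follows. The vertices of $H'$ are a fresh source $s$, a fresh target $t$, a \emph{tracking} vertex $(v, q)$ for every $v \in V_H$ and $q \in Q_A$, and an \emph{intermediate} vertex $m_e$ for every edge $e$ of~$H$. The edges of~$H'$ are: (i) a deterministic (probability~$1$) edge $s \to (v, q_0)$ for every $v \in V_H$; (ii) a deterministic edge $(v, f) \to t$ for every $v \in V_H$ and every $f \in F$; (iii) for every edge $e = (v, a, u)$ of~$H$ and every $q \in Q_A$ with $\delta(q, a)$ defined, a deterministic edge $(v, q) \to m_e$; and (iv) for every edge $e = (v, a, u)$ of~$H$, a single probabilistic edge $m_e \to (u, q_a)$ of probability $\pi(e)$. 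This construction is polynomial-time, and $H'$ together with $s, t$ forms a valid input to ST-CON.

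The crucial design choice is the intermediate vertex $m_e$, and I expect this to be the main subtlety of the proof. The same $H$-edge $e = (v,a,u)$ may be ``firable'' from several source tracking vertices $(v, q)$ (one per state $q$ with $\delta(q, a)$ defined), and if each such copy were an independent probabilistic edge of probability $\pi(e)$, the resulting distribution in~$H'$ would not match the one in~$H$. Routing every such copy through the unique probabilistic edge $m_e \to (u, q_a)$ places the probabilistic edges of $H'$ in bijection with the edges of~$H$, preserving both the per-edge probabilities and their mutual independence. Locality is exactly what makes this trick work, since all copies share the common target $(u, q_a)$.

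For correctness, I would argue via the natural probability-preserving bijection between subgraphs $H_d \subseteq H$ and subgraphs $H'_d \subseteq H'$ that keep all deterministic edges and include the probabilistic edge $m_e \to (u, q_a)$ exactly when $e \in H_d$. In one direction, any walk $v_0 \to_{a_1} v_1 \to_{a_2} \cdots \to_{a_n} v_n$ in $H_d$ whose label $a_1 \cdots a_n$ lies in $L$ gives by locality an accepting run $q_0, q_{a_1}, \ldots, q_{a_n}$ of $A$ with $q_{a_n} \in F$, and lifts to the $st$-path $s \to (v_0, q_0) \to m_{e_1} \to (v_1, q_{a_1}) \to m_{e_2} \to \cdots \to m_{e_n} \to (v_n, q_{a_n}) \to t$ in $H'_d$. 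Conversely, any $st$-path in $H'_d$ must start with $s \to (v_0, q_0)$, alternate between tracking and intermediate vertices, and terminate at some $(v_n, f) \to t$ with $f \in F$; reading off the letters of the traversed $H$-edges yields an accepting run of $A$ on a walk in~$H_d$, so the corresponding label lies in~$L$. Hence the probability of $st$-connectivity in $(H', \pi')$ equals the probability that $Q$ holds on $(H, \pi)$, which completes the reduction.
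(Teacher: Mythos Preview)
Your proof is correct and follows essentially the same product-construction approach as the paper: take the product of the instance with the local DFA, arrange that each $H$-edge corresponds to exactly one probabilistic edge of~$H'$ so that subgraphs are in probability-preserving bijection, and argue that $st$-paths correspond to accepting runs on walks of~$H$. Your intermediate-vertex device $m_e$ is a clean variant of the paper's construction (which uses an extra $\Sigma$-component in the product vertices for the same purpose), and your explicit restriction of the edges $(v,q)\to m_e$ to those $q$ with $\delta(q,a)$ defined is precisely what makes the backward direction go through.
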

\begin{proof}
  Let $A$ be a local DFA for~$L$.

  We build an unlabelled probabilistic graph $(H', \pi')$ from the input
  probabilistic arity-two database $(H, \pi)$ with alphabet $\Sigma$, and
  show that the answer to $\pqe{Q}$ on~$(H, \pi)$ is the same as the answer of
  ST-CON on~$(H', \pi')$.
  Let $V$ be 
  the active domain of~$H$; the active domain of $H'$ will be a subset of 
  $(V \times \Sigma) \sqcup (V \times Q)$.
  For each letter $a \in \Sigma$ and state $r$ of~$A$ such that $r$
  has an outgoing $a$-transition, for each $u \in V$, we create an edge of
  probability~$1$ in~$H'$ from $(u,r)$ to~$(u,a)$. Further,
  for each edge $e$ of~$H$ labelled~$a$, writing $e = (u, v)$, we create
  the edge $e' = ((u,a),(v,r'))$ in~$H'$ with probability $\pi(e') \colonequals
  \pi(e)$, for $r'$ the unique target state of all $a$-transitions in the local
  DFA~$A$.
  Last, add a source $s$ in~$H'$ with edges of probability~$1$ to~$(u,r_0)$ for each $u
  \in V$ with $r_0$ being the initial state of~$A$, and a sink $t$ in~$H'$ with edges of probability~$1$ from~$(u,r)$ for each $u
  \in V$ and each final state $r$ of~$A$.

  The construction is in polynomial time, and the result $(H', \pi')$ is indeed a
  probabilistic graph: note that we do not create the same edge twice. Further,
  there is a clear probability-preserving bijection between the subgraphs
  of~$H$ and the subgraphs of~$H'$ that keep the additional edges of
  probability~$1$, which we call the \emph{possible worlds} of~$(H', \pi')$. We
  claim that $Q$ is satisfied in a possible world of
  $(H, \pi)$ iff there is an $st$-path in the corresponding possible world
  of~$(H', \pi')$.

  In the forward direction: if $Q$ is true in a possible world of~$(H,\pi)$, then this is witnessed by a sequence of edges
  $e_1, \ldots, e_n$ forming a word 
  $a_1 \cdots a_n$ that has an accepting run $r_0, \ldots, r_n$ in~$A$ in the
  DFA~$A$.
  The corresponding unlabelled edges in the
  corresponding possible world of~$H'$ must be present, which
  together with the edges of probability~1 gives us an $st$-path. More
  precisely, writing $e_i = (u_i,v_i)$ for each $1 \leq i \leq n$ with $v_i =
  u_{i+1}$ for each $1 \leq i < n$, the path
  is $s, (u_1, r_0), (u_1, a_1), \ldots, (u_n, r_{n-1}), (u_n, a_n), (v_n, r_n),
  t$. This uses the fact that $r_0$ is the initial state of~$A$, that each $r_{i-1}$
  has an outgoing $a_i$-transition for $1 \leq i \leq n$, that each $r_i$ is the
  target state of $a_i$-transitions for $1 \leq i \leq n$, and that $r_n$ is a
  final state of~$A$.

  In the backward direction: if there is an $st$-path in a possible world
  of~$H'$, then by construction of~$H'$ the path must be of the form $s, (u_1,
  r_0), (u_1, a_1), \ldots, (u_n, r_{n-1}), \allowbreak (u_n, a_n),
  \allowbreak (v_n, r_n), t$ for some
  choices of $u_1, \ldots, u_n, v_n \in V$ and $a_1, \ldots, a_n \in \Sigma$ and
  $r_0, \ldots, r_n$ states of~$A$, with $r_0, \ldots, r_n$ an accepting
  run of~$A$ on the word $a_1 \ldots a_n$. Further, we must have kept in the
  corresponding possible world of~$(H,\pi)$ the edges $e_1, \ldots, e_n$ with $e_i
  \coloneq (u_i,v_i)$ for each $1 \leq i \leq n$ and $v_i \coloneq u_{i+1}$ for
  each $1\leq i < n$, each $e_i$ being labeled $a_i$ for $1 \leq i \leq n$.
  Thus, this path witnesses that $Q$ is satisfied in that
  possible world of~$(H,\pi)$.
\end{proof}

So, all unbounded RPQs are at least as hard as ST-CON, and there are some
(namely, those whose infix-free language is local) that are polynomially
equivalent to ST-CON. We do not know if the class of local languages is 
maximal with respect to this property. However, we can construct examples of 
unbounded RPQs $Q$ whose language is not local, but for which PQE admits an 
FPRAS iff ST-CON admits an FPRAS. For instance:

\begin{prop}
  \label{prp:local}
  Let $L$ be an infix-free local language and let $L'$ be an infix-free finite
  language such that $L$ and $L'$ are on disjoint alphabets. Let $Q$ be the
  query corresponding to the disjunction of~$L$ and~$L'$, i.e., $\rpq(L 
  \sqcup L')$. Then $\pqe{Q}$ admits a FPRAS iff ST-CON admits an FPRAS.
\end{prop}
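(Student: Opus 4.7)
The plan is to establish two polynomial-time reductions that preserve the existence of an FPRAS: one from ST-CON to $\pqe{Q}$ (giving the $\Rightarrow$ direction) and one from $\pqe{Q}$ to ST-CON (giving the $\Leftarrow$ direction). Note that for the statement to be nontrivial we need $L$ to be infinite, since if both $L$ and $L'$ are finite then $Q$ is bounded and $\pqe{Q}$ has an FPRAS unconditionally by Proposition~\ref{prp:finiterpqfpras}. So I will assume $L$ is an infinite (hence unbounded) minified local language, and let $\Sigma$ and $\Sigma'$ denote the disjoint alphabets of $L$ and $L'$ respectively.

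For the $\Rightarrow$ direction, I would apply Proposition~\ref{prp:stcon2rpq} to the unbounded RPQ $\rpq(L)$: from any ST-CON input $(H, \pi, s, t)$ it produces a probabilistic graph $(H', \pi')$ over the alphabet $\Sigma$ whose $\pqe{\rpq(L)}$-probability equals the ST-CON answer. The key observation is that because $(H', \pi')$ uses only letters of $\Sigma$, and the alphabets of $L$ and $L'$ are disjoint, no walk of~$H'$ can ever spell a word of $L'$. Therefore $\Pr_{\pi'}[Q]$ on $(H', \pi')$ coincides with $\Pr_{\pi'}[\rpq(L)]$ on $(H', \pi')$, which equals the ST-CON answer. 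Hence any FPRAS for $\pqe{Q}$ immediately yields one for ST-CON.

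For the $\Leftarrow$ direction, given an input $(H, \pi)$ to $\pqe{Q}$, I would partition the edges of~$H$ according to whether their label lies in $\Sigma$ or in $\Sigma'$, obtaining two independently-sampled subgraphs $H_L$ and $H_{L'}$. Because the alphabets are disjoint, any walk whose label is a word of $L \sqcup L'$ uses edges from exactly one of the two sets, so the events ``a match of $\rpq(L)$ exists in $H_L$'' and ``a match of $\rpq(L')$ exists in $H_{L'}$'' are independent. Writing their probabilities as $p_L$ and $p_{L'}$, we have
\[
\Pr_\pi[Q] \;=\; 1 - (1 - p_L)(1 - p_{L'}).
\]
Since $L$ is local, Proposition~\ref{prp:rpq2stcon} together with the hypothesized FPRAS for ST-CON gives an FPRAS for $p_L$; since $L'$ is finite, Proposition~\ref{prp:finiterpqfpras} gives an FPRAS for $p_{L'}$. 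Plugging $(1\pm\epsilon/2)$-approximations into the above identity and returning the result then yields the desired FPRAS for~$\pqe{Q}$.

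The main technical point to verify is that combining the two multiplicative approximations through the independence identity preserves a multiplicative-approximation guarantee. Writing $\hat{p}_L = p_L(1+\delta_L)$ and $\hat{p}_{L'} = p_{L'}(1+\delta_{L'})$, a direct expansion of $1-(1-\hat{p}_L)(1-\hat{p}_{L'})$ shows that its absolute error is at most $\max(|\delta_L|,|\delta_{L'}|)\cdot \Pr_\pi[Q] + |\delta_L\delta_{L'}|\cdot p_L p_{L'}$, and since $p_L p_{L'} \leq \Pr_\pi[Q]$ this is bounded by $\epsilon\cdot \Pr_\pi[Q]$ whenever $\delta_L,\delta_{L'}\leq\epsilon/2$ and $\epsilon$ is small. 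This bookkeeping, together with boosting the success probabilities of the two sub-FPRASes by the standard median trick, completes the argument. Apart from this computation the proof is mainly a careful application of the two previous propositions once the alphabet-partition observation has been made.
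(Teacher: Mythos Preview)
Your proof is correct and follows essentially the same approach as the paper: both directions rely on the disjoint-alphabet observation to make $\rpq(L)$-matches and $\rpq(L')$-matches independent, invoking Proposition~\ref{prp:stcon2rpq} for the forward reduction and combining the FPRASes from Propositions~\ref{prp:rpq2stcon} and~\ref{prp:finiterpqfpras} via inclusion--exclusion (your $1-(1-p_L)(1-p_{L'})$) for the backward direction. Your forward direction---applying Proposition~\ref{prp:stcon2rpq} to $\rpq(L)$ and noting that the resulting $\Sigma$-graph admits no $L'$-match---is in fact stated a bit more cleanly than the paper's.
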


\begin{proof}
  For the forward direction, by Proposition~\ref{prp:stcon2rpq}, there is a
  PTIME reduction from ST-CON to $\pqe{Q}$: we can ensure that a match of $L'$
  is present by adding  for instance a disjoint path of edges with
  probability~$1$ forming a word of~$L'$, and this does not affect the 
  presence of a match of~$L$ because $L'$ and $L$ are on disjoint alphabets.

  For the backward direction, we know that $\pqe{\rpq(Q')}$ admits an FPRAS by
  Proposition~\ref{prp:finiterpqfpras}, and we know that $\pqe{\rpq(Q)}$
  reduces
  to ST-CON by Proposition~\ref{prp:rpq2stcon} which admits an FPRAS by 
  hypothesis. Notice that the probability of $L$ inducing a match on some 
  deterministic
  subgraph $H_d$ of the input probabilistic graph $H$ is independent of 
  the probability of $L'$ 
  inducing a match, since the alphabets of $L$ and $L'$ are disjoint. Let 
  $E_L$ denote the former event, and $E_{L'}$ the latter. Now, we wish to 
  compute an $(\epsilon,\delta)$-approximation of $\Pr(E_L \cup E_{L'})$ 
  (this is precisely the answer to $\pqe{\rpq(L \sqcup L')}$ on $H$). By the 
  inclusion-exclusion rule and independence of events, we have $\Pr(E_L 
  \cup E_{L'}) = \Pr(E_L) + \Pr(E_{L'}) - \Pr(E_L)\Pr(E_{L'})$. We may 
  approximately 
  evaluate this expression by calling the two FPRASes for $\Pr(E_L)$ and 
  $\Pr(E_{L'})$ respectively, both with error parameter $\tau = 
  \frac{\epsilon}{4}$ and confidence $\eta = \frac{\delta}{2}$. It is routine 
  to verify that this gives an $(\epsilon, \delta)$-approximation of $\Pr(E_L 
  \cup E_{L'})$ as desired.
\end{proof}

What about more expressive unbounded RPQs? Using techniques similar to
those in Claim~\ref{clm:2cnf2prov} earlier in the paper, we can show that some
unbounded RPQs \emph{do not} admit a FPRAS, conditionally to $\RP\neq\NP$.

\begin{prop}
  \label{prp:rpqhard}
  Let $Q$ be the unbounded RPQ defined from the infix-free language $a a (b^8 a)^* a$.
  Then $\pqe{Q}$
  does not admit an FPRAS in data complexity unless $\RP=\NP$.
\end{prop}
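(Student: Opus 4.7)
The plan is to adapt the combined-complexity reduction of Claim~\ref{clm:2cnf2prov} to this fixed RPQ $Q = \rpq(aa(b^8a)^*a)$, now in data complexity. Given a monotone 2-CNF formula $\phi$ on $n$ variables and $m$ clauses in which each variable occurs in at most $d = 6$ clauses, we will construct a probabilistic instance $H_\phi$ by taking the graph built in Claim~\ref{clm:2cnf2prov} with parameter $d = 6$, relabelling the label~$U$ to~$a$ and the label~$R$ to~$b$, giving each edge of the form $\ledge{a_i}{b}{b_i}$ probability~$1/2$, and assigning probability~$1$ to all other edges. This yields a probabilistic graph on the alphabet $\{a,b\}$ of~$Q$, constructed in polynomial time.

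The heart of the proof will be showing that $\prov^Q_{H_\phi}$ represents~$\phi$ on the designated edges $(e_1,\dots,e_n) = (\ledge{a_i}{b}{b_i})_{1 \leq i \leq n}$. The key point is that every word of $L = aa(b^8a)^*a$ both begins and ends with the factor~$aa$ (or equals $aaa$ when the star parameter is zero), so by observation~$(\star\star)$ of Claim~\ref{clm:2cnf2prov} any walk in a possible world of~$H_\phi$ whose label lies in~$L$ must begin at~$c_0'$ and end at~$d_m'$: starting at~$c_m$ is blocked because $d_m'$ is a sink, and the three-$a$ case is ruled out because no intermediate~$d_j$ has an outgoing $a$-edge. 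The walk is then forced to traverse $c_0' \to c_0 \to d_0 \to \dots \to c_m \to d_m \to d_m'$ with exactly $m$ iterations of a $b^8a$-segment (since the only outgoing $a$-edge from any~$d_j$ is $d_m \to d_m'$ at $j=m$), and the constraint that each $b^8a$-segment has $b$-length precisely~$8$ forces it to traverse some $\ledge{a_i}{b}{b_i}$ edge, exactly as in Claim~\ref{clm:2cnf2prov}. This yields a probability-preserving bijection between satisfying assignments of~$\phi$ and possible worlds of~$H_\phi$ satisfying~$Q$.

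Once this provenance claim is in hand, the hardness conclusion follows by the same argument as at the end of the proof of Proposition~\ref{prop:phoml-owp-all}: an FPRAS for $\pqe{Q}$ would, after multiplying its output by~$2^n$, give an FPRAS for counting satisfying assignments of monotone 2-CNF formulas of variable degree at most~$6$, equivalently for counting independent sets of graphs of maximum degree~$6$; by~\cite[Theorem~2]{S10} combined with~\cite[Proposition~1.1]{LL15}, this would imply $\RP = \NP$. The main obstacle is the walk-by-walk analysis of the second paragraph: because $L$ is infinite and a~priori any number of $b^8a$ iterations is allowed, we must carefully rule out unintended matches using a wrong number of iterations or a wrong starting position. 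All such cases are eliminated by combining the directedness of $H_\phi$, observation~$(\star\star)$, the absence of $a$-edges out of intermediate $d_j$, and the precise $b$-length constraint of~$8$ between consecutive $a$-edges, so the reduction goes through without further obstacles.
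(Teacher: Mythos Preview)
Your proposal is correct and follows essentially the same approach as the paper: you reuse the instance $H_\phi$ from Claim~\ref{clm:2cnf2prov} with $d=6$ (relabelling $U\mapsto a$, $R\mapsto b$), argue via properties~$(\star)$ and~$(\star\star)$ that any walk in a possible world whose label lies in $aa(b^8a)^*a$ is forced to coincide with a match of the specific \owp query $G_\phi$ (hence with exactly $m$ iterations), and then conclude hardness via \cite[Theorem~2]{S10}. The paper packages the central step as a separate Claim~\ref{clm:rpqprov} stating that $H_\phi'$ satisfies the RPQ iff it satisfies $G_\phi$, but the underlying walk analysis is the same as yours.
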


The proof of this result hinges on the following, which is an immediate
variant of Claim~\ref{clm:2cnf2prov}:

\begin{clm}
  \label{clm:rpqprov}
  Let $\Sigma$ be the alphabet $\{a, b\}$.
  Let $d > 1$ be a constant.
  Let $Q_d$ be the unbounded RPQ defined by the infix-free regular language $a a (b^{d+2} a)^* a$.
  Given a monotone 2-CNF formula $\phi$ on $n$ variables where each variable
  occurs in at most $d$ clauses, we can build in time $O(|\phi|)$ a graph
  $H_\phi$ in the class \all containing edges $(e_1, \ldots, e_n)$ such that
  $\prov^{Q_d}_{H_\phi}$ represents $\phi$
  on~$(e_1, \ldots, e_n)$.
\end{clm}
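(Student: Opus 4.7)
The plan is to mirror the construction and analysis of Claim~\ref{clm:2cnf2prov} in the RPQ setting, with the fixed regular language $aa(b^{d+2}a)^*a$ playing the role previously played by the query graph $G_\phi$. Concretely, I would define $H_\phi$ as the instance built in Claim~\ref{clm:2cnf2prov}, simply renaming the label $U$ to $a$ and the label $R$ to $b$; the distinguished edges $(e_1, \ldots, e_n)$ remain the $n$ edges $\ledge{a_i}{b}{b_i}$ associated with the variables of~$\phi$. The construction runs in time $O(|\phi|)$, and $H_\phi$ is clearly a DAG, so every walk in $H_\phi$ is in fact a simple path. Observations $(\star)$ and $(\star\star)$ from the proof of Claim~\ref{clm:2cnf2prov} transfer unchanged to $H_\phi$.

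It then remains to argue that $\prov^{Q_d}_{H_\phi}$ represents $\phi$ on $(e_1, \ldots, e_n)$, via the natural bijection that keeps $\ledge{a_i}{b}{b_i}$ iff $X_i$ is set to true. The forward direction is inherited almost verbatim from the original proof: given a satisfying assignment, I would assemble a walk labelled $aa(b^{d+2}a)^m a$ from $c_0'$ to $d_m'$ by choosing, for each clause $j$, a witness variable $X_i$ and routing the corresponding $b^{d+2}a$ block through the edge $\ledge{a_i}{b}{b_i}$; the length balance $p + 1 + ((d+1) - p) = d+2$ ensures each block has exactly the required length.

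The main obstacle is the converse direction, since the RPQ now permits walks with any number $k \geq 0$ of iterations rather than the fixed number~$m$ in the original proof. To handle this, I would invoke observation $(\star\star)$: the only two pairs of contiguous $a$-edges in $H_\phi$ are $c_0' \to c_0 \to d_0$ and $c_m \to d_m \to d_m'$. Since $aa(b^{d+2}a)^*a$ begins with $aa$ and, for $k \geq 1$, ends with two contiguous $a$s (the closing $a$ of the last block together with the trailing $a$), any matching walk must begin at $c_0'$ and end at $d_m'$; the degenerate case $k=0$ would require three consecutive $a$-edges and is also ruled out by $(\star\star)$. Because $H_\phi$ is a DAG, the walk is a simple path, so it traverses each $a$-edge $\ledge{c_j}{a}{d_j}$ exactly once, forcing $k=m$. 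From here, $(\star)$ pins down each block to pass through some edge $\ledge{a_i}{b}{b_i}$ for a variable $X_i$ occurring in clause~$j$; the presence of that edge witnesses that $X_i$ is true and satisfies the $j$-th clause. Assembling these observations recovers the inductive argument of Claim~\ref{clm:2cnf2prov} and yields the desired bijection.
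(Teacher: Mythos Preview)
Your overall plan matches the paper's: reuse the instance $H_\phi$ from Claim~\ref{clm:2cnf2prov} with $U\mapsto a$ and $R\mapsto b$, and argue that on any possible world the RPQ $Q_d$ holds iff the \owp $G_\phi$ corresponding to the specific word $aa(b^{d+2}a)^m a$ holds. The forward direction and the use of $(\star\star)$ to anchor the initial $aa$ at $c_0'\to c_0\to d_0$ and the final $aa$ at $c_m\to d_m\to d_m'$ are also exactly what the paper does.

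There is, however, a genuine gap: $H_\phi$ is \emph{not} a DAG in general. If a variable $X_i$ occurs in two consecutive clauses $j$ and $j{+}1$, the construction creates a $b$-path $d_j\to\cdots\to a_i$ (for the occurrence in clause $j{+}1$), the edge $a_i\to b_i$, a $b$-path $b_i\to\cdots\to c_j$ (for the occurrence in clause $j$), and the $a$-edge $c_j\to d_j$, closing a directed cycle through $d_j$. (This is precisely why Proposition~\ref{prop:phoml-owp-all} places $H_\phi$ in \all rather than \dagi.) Hence a match of $Q_d$ need not be a simple path, and your step ``simple path, so each $\ledge{c_j}{a}{d_j}$ is traversed exactly once, forcing $k=m$'' does not go through; even granting simplicity, a path could in principle skip some $c_j\to d_j$ by routing from $b_i$ to a later $c_{j'}$. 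The paper derives $k=m$ from $(\star)$ itself rather than from any acyclicity: once the second $a$ of the prefix is anchored at $c_0\to d_0$, the length balance in $(\star)$ forces every $b^{d+2}$ segment that starts at some $d_{j-1}$ to end precisely at $c_j$ (an incoming $b$-path to $a_i$ of length $p$ pairs only with the outgoing $b$-path from $b_i$ of length $(d{+}1)-p$, which leads to the $c$-node of the \emph{same} clause occurrence). Thus the clause index advances by exactly one per block, and together with the suffix anchored at $c_m\to d_m$ this yields $k=m$; after that, your final paragraph applies unchanged.
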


\begin{proof}
  We adapt the proof of Claim~\ref{clm:2cnf2prov}.
  Fix the constant $d > 1$.
  Define the RPQ $Q_d$ as above: notice the similarity with the
  \owp query of the proof of Claim~\ref{clm:2cnf2prov}, except that we replace
  the exponent $m$ with a Kleene star, and except that the labels $U$ and $R$
  from the signature in Claim~\ref{clm:2cnf2prov} respectively correspond to the
  letters $a$ and $b$ in the alphabet.

  Given the monotone 2-CNF formula $\phi$, we define the instance 
	graph $H_\phi$ in the class \all exactly like in the proof of Claim~\ref{clm:2cnf2prov}.

  We now observe the
  following, which allows us to conclude from the proof of
  Claim~\ref{clm:2cnf2prov}: for any subgraph $H_\phi'$
  of~$H_\phi$, we have that $H_\phi'$ satisfies the RPQ $Q_d$
  iff it satisfies the \owp query $G_\phi$ defined in the proof of
  Claim~\ref{clm:2cnf2prov}. One direction is immediate: if $H_\phi'$ satisfies
  the \owp query $G_\phi$, then it satisfies $Q_d$, because $G_\phi$ corresponds
  to the word $aa (b^{d+2} a)^m a$ of~$\Sigma^*$ which is a word of~$L$, so the
  image of a homomorphism of $G_\phi$ into $H_\phi'$ witnesses that $Q_d$ is
  satisfied.

  For the converse direction, assume that $H_\phi'$ satisfies the RPQ $Q_d$, and
  consider a \owp query $G_{\phi}'$ defining a word of $L$ such that
  $G_{\phi}'$ has a homomorphism into~$H_{\phi'}$. We claim that this word must
  be $a a (b^{d+2} a)^m a$, namely, the word that corresponds to~$G_\phi$, which
  suffices to conclude. Indeed, let us consider a match of such an \owp
  $G_{\phi}'$
  in~$H_\phi'$.
  The property ($\star\star$) of the proof of
  Claim~\ref{clm:2cnf2prov} implies that the $aa$ prefix of the match must be mapped to
  $\ledge{c_0'}{a}{} \ledge{c_0}{a}{d_0}$ and the $aa$ suffix of the
  match must be mapped 
          to $\ledge{c_m}{a}{} \ledge{d_m}{a}{d_m'}$. Now, it follows from
          the property ($\star$) of the proof of
  Claim~\ref{clm:2cnf2prov} that, in each factor of the form $a b^{d+2} a$, 
  the initial $a$ must be matched to an edge $\ledge{c_i}{a}{d_i}$ and the final
  $a$ must be matched to an edge $\ledge{c_j}{a}{d_j}$ such that $j = i+1$. It
  immediately follows that $G_{\phi}'$ must contain precisely $m+3$ edges
  labelled~$a$, which concludes the proof.
\end{proof}

Thanks to \cite[Theorem~2]{S10}, we can now show
Proposition~\ref{prp:rpqhard} from Claim~\ref{clm:rpqprov} using $d=6$,
in exactly the same way that 
Proposition~\ref{prop:phoml-owp-all} follows from Claim~\ref{clm:2cnf2prov}.

It is also easy to show an unconditional strongly exponential DNNF provenance 
lower bound on the same query as in Proposition~\ref{prp:rpqhard}, by 
adapting Proposition~\ref{prop:owp-all-lb}:

\begin{prop}\label{prop:rpq-lb}
        Let $Q$ be the RPQ from Proposition~\ref{prp:rpqhard}.
	There is an 
	infinite family $H_1, H_2, \ldots$ of labelled instances in the class
        \all such that, for any $i
	> 0$, any DNNF circuit representing the Boolean function
        $\prov^{Q}_{H_i}$ has
        size $2^{\Omega(\gsize{H_i})}$.
\end{prop}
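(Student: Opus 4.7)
The plan is to mimic the proof of Proposition~\ref{prop:owp-all-lb} almost verbatim, but instead of varying the \owp query alongside the instance, here the RPQ $Q$ is fixed (namely the query with minified language $aa(b^8 a)^* a$, so $d+2 = 8$ in the encoding of Claim~\ref{clm:rpqprov}, \ie $d = 6$). The only piece that varies with the input 2-CNF is the instance graph $H_\phi$.

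Concretely, I would first fix an infinite family $\phi_1, \phi_2, \ldots$ of monotone 2-CNF formulas of constant degree~$3$ whose treewidth is linear in their size, which exist by~\cite[Proposition~1, Theorem~5]{GroheM09}. Since $3 \leq 6$, each $\phi_i$ satisfies the degree hypothesis of Claim~\ref{clm:rpqprov} for $d = 6$. By \cite[Corollary~8.5]{ACMS20}, any DNNF representing $\phi_i$ has size $2^{\Omega(|\phi_i|)}$.

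I then apply Claim~\ref{clm:rpqprov} with $d=6$ to each $\phi_i$, which in time $O(|\phi_i|)$ produces an instance graph $H_i$ in the class \all containing distinguished edges $(e_1,\ldots,e_n)$ such that $\prov^{Q}_{H_i}$ represents $\phi_i$ on $(e_1,\ldots,e_n)$, and in particular $\gsize{H_i} = O(|\phi_i|)$. Now, given any DNNF $D$ computing $\prov^{Q}_{H_i}$, I can build a DNNF of the same size computing $\phi_i$ by the standard variable-substitution argument used in Proposition~\ref{prop:owp-all-lb}: rename the variables corresponding to the edges $(e_1,\ldots,e_n)$ to the variables of $\phi_i$, and set all remaining edge variables to the constant~$1$. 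This substitution preserves the DNNF property and at most linearly affects size. Combined with the $2^{\Omega(|\phi_i|)}$ lower bound on DNNFs for $\phi_i$ and the bound $\gsize{H_i} = O(|\phi_i|)$, we obtain that any DNNF for $\prov^{Q}_{H_i}$ has size $2^{\Omega(\gsize{H_i})}$, as required.

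There is essentially no obstacle beyond verifying that the construction of Claim~\ref{clm:rpqprov} is truly linear in $|\phi_i|$ (which is explicitly asserted there) and that degree-$3$ formulas qualify as degree-at-most-$6$ inputs (which is immediate). Since the query $Q$ is fixed, we obtain a \emph{strongly} exponential lower bound in $\gsize{H_i}$ alone, analogous to Proposition~\ref{prop:owp-all-lb} rather than to the linearithmic-encoding bounds of Proposition~\ref{prop:twp-dwt-lb}.
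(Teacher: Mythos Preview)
Your proposal is correct and matches the paper's own proof essentially verbatim: both take the infinite family of degree-$3$ monotone 2-CNFs of linear treewidth from~\cite{GroheM09}, invoke the DNNF lower bound from~\cite[Corollary~8.5]{ACMS20}, apply Claim~\ref{clm:rpqprov} (with $d=6$, since $Q$ has language $aa(b^8a)^*a$) to obtain the instances $H_i$ of linear size, and transfer the lower bound by substituting constants for the non-distinguished edge variables. Your explicit observation that degree-$3$ formulas qualify for the $d=6$ hypothesis is a detail the paper leaves implicit, but otherwise the arguments are identical.
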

\begin{proof}
        Like in the proof of Proposition~\ref{prop:owp-all-lb}, 
        we consider an infinite family $\phi_1, \phi_2, \ldots$ of monotone 2-CNF
	formulas where each variable occurs in at most $3$ clauses, such that
        any DNNF computing $\phi_i$ must have size $2^{\Omega(|\phi_i|)}$ for all~$i > 1$.
        Claim~\ref{clm:rpqprov} then gives us an infinite family 
        $H_1, H_2, \ldots$ of graphs in the class \all such that $\prov^{Q}_{H_i}$
        represents $\phi_i$ on some choice of edges, and we have $\gsize{H_i} =
	O(|\phi_i|)$ for all $i>0$ (from the running time bound). Now, 
        again, any representation of
	$\provl^Q_{H_i}$ as a DNNF can be translated in linear time to a
	representation of $\phi_i$ as a DNNF of the same size, simply by
        renaming variables and replacing variables by constants. This means that the lower bound on the size of DNNFs
	computing $\phi_i$ implies our desired lower bound.
\end{proof}

The results of
Proposition~\ref{prp:rpqhard} and
Proposition~\ref{prop:rpq-lb} are shown for a specific choice of RPQ, but
they do not apply to arbitrary RPQs. Indeed, by Proposition~\ref{prp:local},
there are some RPQs for which the existence of a FPRAS is equivalent to the
existence of a FPRAS for ST-CON, which is open.
Hence, a natural question left open by the present work is to characterize
the RPQs for which we can show
similar inapproximability results or DNNF provenance lower bound results. In particular, 
establishing DNNF provenance lower bounds for ST-CON is another interesting 
question left open here.

\section{Conclusions and Future Work}\label{sec:conclusions}
We studied the existence and non-existence of \textit{combined approximation 
algorithms} for the PQE problem, as well as the existence of 
polynomially-sized tractable circuit representations of provenance, under the 
lens of combined complexity. We additionally considered the approximability 
of regular path queries in data complexity.

We see several potential directions for future work. First, it would be interesting to see if the results in Proposition~\ref{prop:phoml-owp-dagi} and Theorem~\ref{thm:twoterminal-dag-fpras} can be extended beyond \dagi instances: graph classes of bounded \textit{DAG-width}~\cite{BDHKO12} could be a possible candidate here. We also leave open the problem of filling in the two remaining gaps in Table~\ref{tab:tab}. Namely, we would like to obtain either an FPRAS or hardness of approximation result for the equivalent problems $\phomul(\owp,\all)$ and $\phomul(\dwt,\all)$.
It is also natural to ask whether our results can be lifted from graph signatures to arbitrary relational signatures, or whether they apply in the \emph{unweighted} setting where all edges are required to have the same probability~\cite{amarilli2022uniform,amarilli2023uniform,kenig2021dichotomy}. Another question is whether we can classify the combined complexity of approximate PQE for \emph{disconnected} queries, as was done in~\cite{AMS17} in the case of exact computation, for queries that feature disjunction such as UCQs (already in the exact case~\cite{AMS17}), or for more general query classes, \eg, with recursion~\cite{amarilli2022dichotomy}. 

Lastly, for the data complexity of PQE for regular path queries, it remains 
open whether
our results on specific query classes can be generalized to a full dichotomy 
characterizing 
which
RPQs are approximable and which are (conditionally) not approximable; or which
RPQs admit tractable provenance representations as DNNFs (or subclasses
thereof) and which do not. However, 
such a result would require in particular to know whether ST-CON is approximable, 
and whether it admits tractable DNNF provenance representations.

\bibliographystyle{alphaurl}
\bibliography{article}

\appendix

\section{Proof of Theorem~\ref{thm:nobdd-weighted-count}}
\label{apx:nobdd-weighted-count}
\nobddweightedcount*
\begin{proof}
	We may assume without loss of generality that $D$ contains no variable $v$ such that $w(v) = 0$ or $ w(v) = 1$, since any such variable can be dealt with in constant time by conditioning $D$ accordingly.
By~\cite[Lemma~3.16]{ACMS20}, we may assume that $D$ is \textit{complete}, that is, every path from the root to a sink of $D$ tests every variable of $V$. 
        We will use the fact that for any positive integer $n$ and set of variables $S = \{x_1, \dots, x_k\}$ such that $k \geq \lceil \log n \rceil$, we can construct in time $O(k)$ a complete OBDD $C_{n}(x_1, \dots, x_k)$, implementing a ``comparator'' on the variables of $S$, that tests if the integer represented by the binary string $x_1 \dots x_{k}$ is strictly less than $n$ (hence, $C_{n}(x_1, \dots, x_k)$ has precisely $n$ satisfying assignments, for every sufficiently large value of $k$). In particular, when $k = 0$ then $C_1$ is the trivial OBDD comprising only a 1-sink.
	
        As $D$ is complete, there is a natural bijection between the models of the Boolean function captured by $D$, and the paths from the root to the 1-sink of $D$. Now, perform the following procedure for every variable label $v_i$ with weight $w(v_i) = p_i/q_i$ appearing in $D$. Set $k = \left \lceil{\log \left( d+1 \right)}\right \rceil$, where $d = \max\{p_i, q_i - p_i\}$. Send the $1$-edge emanating from every node $r \in D$ labelled with $v_i$ to the OBDD $C_{p_i}(x_1, \dots, x_{k})$ (where $x_1, \dots, x_k$ are fresh variables), redirecting edges to the $1$-sink of $C_{p}(x_1, \dots, x_{k})$ to the original destination of the $1$-edge from $n$. Do the same for $0$-edge from $r$, but with the OBDD $C_{q_i - p_i}(x_1, \dots, x_{k})$. Observe that $D$ remains a complete nOBDD. Moreover, it is not difficult to see that there are now exactly $p_i$ paths from the root to the 1-sink of $D$ that pass through the $1$-edge emanating from $r$, and $q_i - p_i$ paths passing through the $0$-edge.
	
	After repeating this process for every variable in $D$, we may apply Theorem~\ref{thm:nobdd-unweighted-count}, before normalizing the result by the product of the weight denominators $\prod q_i$.
\end{proof}
\section{Proof of Proposition~\ref{prop:phomul-owp-dagi}}
  \label{apx:phomul-owp-dagi}
\phomulowpdagi*
\begin{proof}
	As mentioned in the main text, the positive result directly follows from the existence of an FPRAS in the
	labelled setting, which was shown in
        Proposition~\ref{prop:phoml-owp-dagi}. It remains to show
        \SPhardness here. We will define a reduction from the
        \SPhard problem \countppdnf from
        \cite[Theorem~3.1]{DBLP:series/synthesis/2011Suciu}, which asks to
        count the satisfying valuations of a Boolean formula that is in
        \mbox{2-DNF} (i.e., in disjunctive normal form with two variables per
        clause), that is
        positive (i.e., has no negative literals), and that is partitioned (i.e.,
        variables are partitioned between two sets and each clause contains one
        variable from each set). Formally, \countppdnf asks us to count the
        satisfying valuations of an input formula of the form $\phi = \bigvee_{(i,j) \in E} \left( X_i \land Y_j \right)$. We reduce to $\phomul(G, \dagi)$, where $G$ is fixed to be the \owp of length three, \ie, $\rightarrow \rightarrow \rightarrow$.
	
	Construct a probabilistic graph $H = (\{s, t\} \sqcup X \sqcup Y, E_H)$ with probability labelling~$\pi$, where $s$ and $t$ are fresh vertices, $X = \{X_1, \dots, X_m \}$ and $Y = \{ Y_1, \dots, Y_n \}$ are vertices corresponding to the variables of $\phi$, and the edge set $E_H$ comprises:
	\begin{itemize}
		\item a directed edge $s \rightarrow X_i$ for every $X_i \in X$, with probability $0.5$;
		\item a directed edge $X_i \rightarrow Y_j$ for every $(i, j) \in E$, with probability
		$1$ (\ie, one directed edge per clause of $\phi$);
		\item a directed edge $Y_j \rightarrow t$ for every $Y_j \in Y$, with probability $0.5$.
	\end{itemize}
        It is clear that $H \in \dagi$. Moreover, we claim that $\Pr_\pi(G \homo H)$ is precisely the number of satisfying valuations of $\phi$, divided by $2^{|X \sqcup Y|}$. Indeed, just consider the natural bijection between the subgraphs of $H$ and the
	valuations of $\phi$, where we keep the edge $s \rightarrow X_i$ iff $X_i$ is assigned to true in a given valuation, and the edge $Y_j \rightarrow t$ iff $Y_j$ is assigned to true. Then it is easy to check that a subgraph of $H$ admits a homomorphism from $G$ iff the corresponding valuation satisfies $\phi$.
\end{proof}
\section{Proof of Proposition~\ref{prop:phomul-dwt-dagi}}
  \label{apx:phomul-dwt-dagi}
\phomuldwtdagi*
\begin{proof}
	Hardness follows directly from Proposition~\ref{prop:phomul-owp-dagi}, so we show the positive result here. Let $G$ be a \dwt query graph, and $m$ its height, \ie, the length of the longest directed path it contains. Let $G'$ be this \owp of length $m$, computable in polynomial time from $G$. We claim the following.
	\begin{clm}
		For any $H \in \dagi$, $\phomul(G, H) = \phomul(G', H)$.
	\end{clm}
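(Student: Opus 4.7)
My plan is to prove the claim by showing the stronger pointwise statement: for every subgraph $H' \subseteq H$ (which is itself in \dagi since it is a subgraph of a DAG), we have $G \homo H'$ iff $G' \homo H'$. Since the probability distribution over subgraphs of~$H$ is the same on both sides, this immediately yields equality of the two probabilities. The FPRAS for $\phomul(\dwt, \dagi)$ would then follow by running the FPRAS from Proposition~\ref{prop:phomul-owp-dagi} on the input $(G', H)$ obtainable in polynomial time, and \SPhardness is inherited from $\phomul(\owp, \dagi)$ since $\owp \subseteq \dwt$.

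For the forward direction, suppose $h \colon G \to H'$ is a homomorphism. Pick any root-to-leaf path $x_0 \to x_1 \to \cdots \to x_m$ of $G$ of maximum length~$m$; its image $h(x_0) \to h(x_1) \to \cdots \to h(x_m)$ is a directed walk of length $m$ in~$H'$. The key observation I would use is that in a DAG, any directed walk has pairwise distinct vertices (otherwise a repeated vertex would close a directed cycle), so this walk is in fact a directed path, providing a homomorphism from $G'$ to $H'$.

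For the backward direction, suppose $G' \homo H'$ and let $v_0 \to v_1 \to \cdots \to v_m$ be the image. I would define $h \colon G \to H'$ by setting $h(u) \colonequals v_{d(u)}$ where $d(u)$ is the depth of~$u$ in the rooted tree $G$ (with the root at depth~$0$). Since $G$ has height~$m$, every node has depth between $0$ and~$m$, so $h$ is well defined. For any edge $(u,w)$ of~$G$, we have $d(w) = d(u)+1$ so $h$ maps this edge to $(v_{d(u)}, v_{d(u)+1})$, which is an edge of~$H'$; crucially, we do not need to match labels because we are in the unlabelled setting.

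The only potentially subtle step is the forward direction, where one must notice that walks and paths coincide in DAGs so that the length-$m$ root-to-leaf path of~$G$ yields a genuine homomorphism from the \owp of length~$m$; but this is immediate. Having established the claim, the conclusion of Proposition~\ref{prop:phomul-dwt-dagi} follows by combining it with Proposition~\ref{prop:phomul-owp-dagi}.
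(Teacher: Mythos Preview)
Your proof is correct and follows essentially the same approach as the paper: both directions match the paper's argument (the backward direction is verbatim the same ``map depth-$i$ vertices to $v_i$'' construction). One minor remark: the observation you flag as ``the only potentially subtle step'' is in fact unnecessary---since $G'$ is isomorphic to a subgraph of~$G$ (namely any longest root-to-leaf path), restricting any homomorphism $h\colon G\to H'$ to that path already yields a homomorphism $G'\to H'$, with no need for the image walk to be simple; the DAG hypothesis on~$H$ is not used in the equivalence itself but only afterwards to invoke the FPRAS for $\phomul(\owp,\dagi)$.
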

	\begin{proof}
		Certainly, if $H' \subseteq H$ admits a homomorphism from $G$, then it admits one from $G'$ too since $G' \subseteq G$. On the other hand, if $H'$ admits a homomorphism from $G'$, then it also admits one from $G$: just map all vertices of distance $i$ from the root of $G$ to the image of the $i$-th vertex of $G'$.
	\end{proof}
	\noindent Now the result follows from the FPRAS for $\phomul(G', \dagi)$
	given by Proposition~\ref{prop:phoml-owp-dagi}.
\end{proof}

\end{document}